\DeclareMathAlphabet{\mathpzc}{OT1}{pzc}{m}{it}
\DeclareMathAlphabet{\mathscr}{LS1}{stixscr}{m}{n}
\newcommand{\dom}{\fun{dom}}
\newcommand{\img}{\fun{img}}
\newcommand{\proc}{\fun{proc}}
\newcommand{\bound}{b}
\newcommand{\TV}{\mathcal{TV}}
\newcommand{\update}{\fun{update}}
\newcommand{\dest}{\fun{target}}
\newcommand{\size}[1]{\left|#1\right|}
\newcommand{\dht}{\fun{dh}}
\newcommand{\rc}{\fun{rc}}
\newcommand{\code}[1]{\lstinline{#1}}
\newcommand{\tv}{\mathit{tv}}
\newcommand{\itr}{\mathrm{it}}
\newcommand{\tool}[1]{\textsf{#1}\xspace}
\newcommand{\loat}{\tool{LoAT}}
\newcommand{\koat}{\tool{KoAT}}
\newcommand{\pl}[1]{\textsf{#1}}
\renewcommand{\phi}{\varphi}
\renewcommand{\infty}{\omega}
\newcommand{\constr}[1]{\ \left[#1\right]}
\newcommand{\VV}{\mathcal{V}}
\newcommand{\PP}{\mathcal{P}}
\newcommand{\TT}{\mathcal{T}}
\newcommand{\OO}{\mathcal{O}}
\newcommand{\CC}{\mathcal{C}}
\newcommand{\TTT}{T}
\newcommand{\SSS}{S}
\newcommand{\QQQ}{Q}
\newcommand{\ZZ}{\mathbb{Z}}
\newcommand{\NN}{\mathbb{N}}
\newcommand{\RR}{\mathbb{R}}
\newcommand{\vect}[1]{\overline{#1}}
\newcommand{\cc}{c}
\newcommand{\concretecosts}{k}
\newcommand{\head}{\fun{root}}
\newcommand{\fun}[1]{\mathrm{#1}}
\newcommand{\lhs}{\fun{lhs}}
\newcommand{\rhs}{\fun{rhs}}
\newcommand{\guard}{\fun{guard}}
\newcommand{\cost}{\fun{cost}}
\newcommand{\smt}{\fun{smt}}
\newcommand{\fs}[1]{\mathsf{#1}}
\newcommand{\Ff}{\fs{f}}
\newcommand{\Fg}{\fs{g}}
\newcommand{\charfun}[1]{\llbracket#1\rrbracket}
\newcommand{\true}{\mathsf{true}}
\newcommand{\false}{\mathsf{false}}
\newcommand{\tox}[4][-2pt]{
  \xxrightarrow{\raisebox{#1}[0pt][0pt]{$\scriptstyle #2$}}^{\hspace{-1pt}#3}_{\hspace{-1pt}#4}
}
\newcommand{\toxx}[4][-2pt]{
  \xxrightarrow{\raisebox{#1}[0pt][3pt]{$\scriptstyle #2$}}^{\hspace{-1pt}#3}_{\hspace{-1pt}#4}
}
\newcommand\xleadsto[1]{%
  \mathrel{%
    \begin{tikzpicture}[%
      baseline={([yshift=-2pt]current bounding box.south)}
      ]
      \node[%
      ,inner sep=.44ex
      ,align=center
      ] (tmp) {$\scriptstyle #1$};
      \path[%
      ,draw,<-
      ,semithick
      ,decorate,decoration={%
        ,snake
        ,amplitude=1pt
        ,segment length=2.5mm
        ,pre length=0.7mm
      }
      ]
      (tmp.south east) -- (tmp.south west);
    \end{tikzpicture}
  }
}
\newcommand\leadstox[1]{\mathrel{\smash{%
      \setbox2=\hbox{$\scriptstyle #1$}%
      \xleadsto{\makebox[\dimexpr\wd2\relax]{$\scriptstyle #1$}}%
    }}}
\renewcommand{\leadsto}{\leadstox{\phantom{\rightarrow}}}
\newcommand\xxrightarrow[2][]{\mathrel{\smash{%
    \setbox2=\hbox{\stackon{\scriptstyle#1}{#2}}%
    \stackunder[0pt]{%
      \xrightarrow{\makebox[\dimexpr\wd2\relax]{$#2$}}%
    }{%
      \scriptstyle#1\,%
    }%
  }}}
\newenvironment{claims}{\begin{enumerate}[label*=\textit{Claim \arabic*.}, wide, leftmargin=1em, ref=\textit{Claim \arabic*}]}{\end{enumerate}}
\newcommand{\claim}[1]{\item #1\\}
\newcounter{sectionctr}
\newcounter{lemmactr}
\newcounter{auxctr}
\newcounter{eq:finalFirstLeadingExCtr}
\newcounter{eq:unboundedAcceleratedCtr}
\newcounter{eq:FibonacciAcceleratedCtr}
\renewenvironment{cases}[1][l]{\matrix@check\cases\env@cases{#1}}{\endarray\right.}
\def\env@cases#1{%
  \let\@ifnextchar\new@ifnextchar
  \left\lbrace\def\arraystretch{1.2}%
  \array{@{}#1@{\quad}l@{}}}
\begin{document}
\title{Inferring Lower Runtime Bounds for Integer Programs}

\author{Florian Frohn}
\affiliation{%
  \institution{Max Planck Institute for Informatics}
  \streetaddress{Saarland Informatics Campus,
    Campus E1 4}
  \city{66123 Saarbrücken}
  \country{Germany}}
\email{florian.frohn@mpi-inf.mpg.de}
\author{Matthias Naaf}
\affiliation{%
  \institution{RWTH Aachen University}
  \city{Aachen}
  \country{Germany}
}
\email{naaf@logic.rwth-aachen.de}
\author{Marc Brockschmidt}
\affiliation{%
  \institution{Microsoft Research}
  \city{Cambridge}
  \country{UK}}
\email{mabrocks@microsoft.com}
\author{Jürgen Giesl}
\affiliation{%
  \institution{LuFG Informatik 2, RWTH Aachen University}
  \streetaddress{Ahornstr.\ 55}
  \city{52074 Aachen}
  \country{Germany}}
\email{giesl@informatik.rwth-aachen.de}

\begin{abstract}
  We present a technique to
  infer \emph{lower} bounds on the worst-case runtime complexity of integer
  programs, where in contrast to earlier work, our approach is not restricted to
  tail-recursion.
   Our technique constructs symbolic representations of program
  executions using a framework for iterative, under-approximating program
  simplification.
  The core of this simplification is a method for
  (under-approximating) program acceleration based on recurrence solving and a
  variation of ranking functions.
    Afterwards,
  we deduce \emph{asymptotic} lower
  bounds from the resulting simplified programs using a special-purpose calculus
  and an SMT encoding. We implemented our
  technique in our tool \loat and show
  that it infers non-trivial lower bounds for a large class of examples.
\end{abstract}

\begin{CCSXML}
<ccs2012>
<concept>
<concept_id>10003752.10003777.10003778</concept_id>
<concept_desc>Theory of computation~Complexity classes</concept_desc>
<concept_significance>500</concept_significance>
</concept>
<concept>
<concept_id>10003752.10010124.10010138.10010143</concept_id>
<concept_desc>Theory of computation~Program analysis</concept_desc>
<concept_significance>500</concept_significance>
</concept>
<concept>
<concept_id>10003752.10003790.10003794</concept_id>
<concept_desc>Theory of computation~Automated reasoning</concept_desc>
<concept_significance>500</concept_significance>
</concept>
<concept>
<concept_id>10011007.10010940.10011003.10011002</concept_id>
<concept_desc>Software and its engineering~Software performance</concept_desc>
<concept_significance>500</concept_significance>
</concept>
</ccs2012>
\end{CCSXML}

\ccsdesc[500]{Theory of computation~Complexity classes}
\ccsdesc[500]{Theory of computation~Program analysis}
\ccsdesc[500]{Theory of computation~Automated reasoning}
\ccsdesc[500]{Software and its engineering~Software performance}

\keywords{Integer Programs, Runtime Complexity, Lower Bounds, Automated Complexity Analysis}

\maketitle

\section{Introduction}
\label{sec:intro}

Recent advances in program analysis yield efficient methods to find \emph{upper}
bounds on the complexity of sequential integer programs.  Here, one usually
considers ``worst-case complexity'', i.e., for any variable valuation, one
analyzes the length of the longest execution starting from that valuation.  But
in many cases, in addition to upper bounds, it is also important to find
\emph{lower} bounds for this notion of complexity.  Together with an analysis
for upper bounds, this can be used to infer \emph{tight} complexity bounds.
Lower bounds also have important applications in security analysis.
If one can infer that there exists a family of inputs which lead to unacceptably large
runtime of the program (i.e., a family for which there is  a non-linear or probably even exponential
lower bound on the runtime), then this family of inputs
represents a possible denial-of-service attack.
Thus, techniques for the computation of lower bounds on the worst-case complexity can be used to detect
such attacks.\footnote{In a joint project \emph{CAGE}
  \cite{STAC,CAGE} with Draper Inc.\ (\url{https://www.draper.com}) and the University of Innsbruck, we used our tool \loat (that implements
  the techniques described in this paper) together with our tool \koat \cite{koat} (that
infers upper runtime bounds)
to analyze the complexity of large
\pl{Java} programs in order to detect vulnerabilities.}

While worst-case lower bounds are useful to \emph{detect} attacks or performance
bugs, \emph{worst-case upper bounds} can \emph{prove the absence} of such problems. \emph{Best-case} lower bounds, which
have also been investigated in the literature (see \Cref{sec:related}), are bounds on
 all program runs, whereas
worst-case lower bounds hold for (usually infinite) families of (expensive) program runs. Thus, best-case lower
bounds can, e.g., be used to decide whether a certain task is expensive enough to compensate the overhead of executing
it remotely. So in general, the use cases of worst-case and best-case bounds are orthogonal.

We introduce the first technique to infer worst-case lower bounds for
integer programs automatically. Besides \emph{concrete} bounds, our technique
can also deduce \emph{asymptotic} bounds. In general, concrete lower bounds that
hold for arbitrary variable valuations are difficult to express concisely.  In
contrast, asymptotic bounds are easily understood by humans and witness possible
attacks in a convenient way.

We first introduce our program model in \Cref{sec:preliminaries}. Afterwards,
\Cref{sec:simplification} shows how to transform arbitrary
tail-recursive integer programs into so-called \emph{simplified programs}
without loops.  To this end, in \Cref{sec:metering} we introduce
a variation of classical ranking
functions which we call \emph{metering functions}. These metering functions are used
to under-estimate the number
of iterations of simple loops, i.e.,  loops consisting of a single transition
without nested loops or branching.  Based on this concept, we present a framework to simplify programs
iteratively in Sections \ref{subsec:its-contraction} and \ref{subsec:its-simplification}.
It transforms tail-recursive programs
(with branching and sequences of possibly nested loops) into programs without
loops.  To this end, it \emph{accelerates} and then
eliminates simple loops by \mbox{(under-)}approximating
their effect using a combination of metering functions and recurrence solving.
In
\Cref{sec:non-linear}
we extend our technique to
an automatic approach which also transforms non-tail-recursive integer programs into simplified
programs.

\Cref{sec:asymptotic}
introduces techniques which allow us to infer \emph{asymptotic} lower
bounds from simplified programs.
In Sections \ref{subsec:its-asymptotic-bounds-and-limit-problems} and
\ref{subsec:its-transforming-limit-problems} we present a calculus to compute asymptotic
bounds by
repeatedly simplifying a \emph{limit problem}, which is an abstraction of the path
condition $\phi$ of a simplified program. This abstraction allows us to focus
on $\phi$'s behavior in the limit, i.e.,
we search for an infinite family of
  inputs that satisfy $\varphi$.
  In addition, \Cref{sec:asymptotic-smt} shows how a limit problem can
be encoded into a quantifier-free first-order formula with integer arithmetic.
Then off-the-shelf SMT solvers can be used to find a model for this formula.
This model in turn gives rise to
the desired family of inputs that satisfy $\phi$.
Thus, in many cases we can benefit from the power of SMT
solvers  instead of applying the rules of our calculus from \Cref{subsec:its-transforming-limit-problems}
heuristically.  Note that the calculus from \Cref{subsec:its-transforming-limit-problems} can simplify
limit problems such that our SMT encoding from
 \Cref{sec:asymptotic-smt}
becomes applicable and on the other hand, our SMT
encoding can be integrated into the calculus from
\Cref{subsec:its-transforming-limit-problems}, i.e., both techniques complement each
other.

Finally, we evaluate our implementation in the tool \loat in \Cref{sec:experiments}, discuss
related work in \Cref{sec:related}, and conclude in \Cref{sec:conclusion}.

A preliminary version of this
paper was published in \cite{ijcar16}. The current paper extends
\cite{ijcar16} significantly with the following novel contributions:
\begin{enumerate}
\item The new \Cref{lem:irrelevant-constraints}
integrates an optimization that we
proposed in \cite{ijcar16} into our
notion of metering functions. The novel insight is that in this way we can
 infer more
  expressive ``conditional'' metering functions of the form $\charfun{\psi} \cdot \bound$ where
$\bound$ is an ordinary arithmetic expression and
  $\charfun{\psi}$ is
  the characteristic function of the arithmetic condition $\psi$ (i.e., $\charfun{\psi}$ yields $1$ if
  $\psi$ is satisfied and $0$, otherwise).
Such metering functions are also useful to treat terminating and non-terminating rules in
a uniform way in our approach.
To ease the use of such metering functions,
in \Cref{thm:its-acceleration}
we extend the technique
for accelerating loops
from \cite[Theorem 7]{ijcar16} accordingly.
  Thus,
 conditional metering functions are now seamlessly integrated into our
 framework, resulting in a more streamlined formalization and presentation than
 in \cite{ijcar16}.
\item In \Cref{thm:its-instantiation}, we present a
  technique to eliminate variables from the
  program. In order to apply it automatically, the new \Cref{lem:zz-polys} clarifies how
  to check a crucial side condition which  requires that certain arithmetic expressions
  evaluate to integers whenever one
  instantiates their variables with integers. A similar side condition is also needed for the automation
  of our calculus for limit problems
  in \Cref{sec:asymptotic}. Moreover,
this
check could also be used to ensure that the initial program is ``well formed'' before
starting the analysis.
  In \cite{ijcar16}, the automation of
  this check has not been discussed.
 \item We lift our approach to non-tail-recursive programs in the new
  \Cref{sec:non-linear}.
  In contrast, \cite{ijcar16} was restricted to
  tail-recursive integer programs. While \cite{ijcar16} used a
  graph-based program model,
 in the current paper we propose a different
  (rule-based) representation of integer programs. This representation allows an easy
  formulation of non-tail-recursion, by using rules whose right-hand
  sides are multisets of terms.
\item The SMT encoding from \Cref{sec:asymptotic-smt} is completely new and improves the
  performance of our approach considerably.
  In particular,
  in our experiments of \Cref{sec:experiments}
  it outperformed
 the calculus from \Cref{subsec:its-transforming-limit-problems}
on examples with polynomial limit problems.
\item We provide formal proofs for all lemmas and theorems, which were missing
  in \cite{ijcar16}.
  \item Throughout the paper, we added many more examples, discussions, and explanations.
\end{enumerate}


\section{Program Model}
\label{sec:preliminaries}

We consider sequential imperative integer programs, allowing non-linear
arithmetic and non-determinism, whereas heap usage and concurrency are not supported.
In \cite{koat} we used an equivalent
program model, and showed how to deduce \emph{upper}
runtime bounds for integer programs.

Most existing abstractions that transform heap programs to integer programs are
``over-app\-rox\-i\-ma\-tions''.
However, in order to apply our approach to heap programs, we would need an under-approximating abstraction to ensure
that the inference of worst-case \emph{lower} bounds is sound.
As in most related work, we treat numbers as mathematical integers $\ZZ$.
However, one can use suitable transformations \cite{kittel-bitvectors,JLAMP18}  to handle
machine integers correctly, e.g.,  by inserting explicit normalization steps at possible
overflows.

In our program model, we use a
rule-based representation of integer
programs
where \emph{integer program rules} are of
the form $f(\vect{x}) \tox{\cc}{}{} \TTT \constr{\phi}$.  The \emph{left-hand
  side} $f(\vect{x})$ consists of a \emph{function symbol} $f$ and a vector of
pairwise different variables $\vect{x}$. The sets of all function symbols and
all variables are $\Sigma$ and $\VV$, respectively.
While $\Sigma$ is finite, we assume $\VV$ to be countably infinite, as we rely on fresh
variables to model non-determinism.  The \emph{arithmetic
  expression} $\cc$ represents the \emph{cost}
  of the rule, where arithmetic expressions
are composed of variables from $\VV$, numbers, and pre-defined operations like $+$, $-$,
$*$, etc.\
Annotating rules with costs enables a modular analysis, as it allows us to
summarize a sub-program $\PP$ into a single rule whose cost is a lower bound on
$\PP$'s complexity.
To ease readability, we sometimes omit the costs of rules.
The \emph{guard}
$\phi$ is a \emph{constraint} over $\VV$, i.e., a finite conjunction\footnote{Note that negations can be expressed by negating
  inequations directly, and disjunctions in programs can be expressed using
  several rules. We write ``$s = t$'' as syntactic sugar for ``$s \geq t \land s \leq t$''.}
of
inequations (built with ${<}$, ${\leq}$, ${>}$, or ${\geq}$)
over arithmetic expressions, which we omit if it is empty (i.e.,
we write $f(\vect{x}) \tox{\cc}{}{} \TTT$ instead of $f(\vect{x}) \tox{\cc}{}{}
\TTT \constr{\true}$).
So $c$ is an expression like $x \cdot y + 2^y$ and $\varphi$ is a formula like  $x \cdot y
\leq 2^y \land y > 0$, for example.
The \emph{right-hand side} $\TTT$ is a multiset of \emph{terms}
of the form $g(\vect{t})$ where $g \in \Sigma$ and $\vect{t}$ is a vector of
arithmetic expressions.
In the following, the notion of ``term'' always refers to
terms of this specific form. The set of all terms is $\TT$. We use $\VV(\cdot)$ to denote all variables occurring in the argument expression
  (e.g., $\VV(t)$ consists of all variables occurring in the term $t$).

Note that as in \cite{koat},
we do not allow nested calls of function symbols from $\Sigma$ in right-hand sides
of rules. For that reason,
  our program model also does not support
  return values. Instead of a rule $f(\vect{x}) \tox{\cc}{}{} f(g(\vect{t}))
  \constr{\varphi}$ with $f,g\in \Sigma$, one has
  to represent the result of the inner call $g(\vect{t})$ by a fresh \emph{temporary
    variable} $\tv$. So one uses a rule
  $f(\vect{x}) \tox{\cc}{}{} \{g(\vect{t}),f(\tv)\} \constr{\varphi \land \psi}$ instead,
where $\psi$ may restrict the possible values of $\tv$ by suitable inequations.

 We say that a function symbol $f$ has an \emph{incoming}
rule $\alpha$ if $f$ occurs in $\alpha$'s right-hand side and $f$ has an \emph{outgoing} rule
$\alpha$ if $f$ occurs on (the root position of)  $\alpha$'s left-hand side.
Given a rule $\alpha$,  $\head(\alpha)$ denotes the root symbol of $\alpha$'s left-hand
side $\lhs(\alpha)$. Furthermore, $\cost(\alpha)$, $\rhs(\alpha)$, and
$\guard(\alpha)$ denote the cost, the right-hand side, and the guard of
$\alpha$. The number of elements of the multiset in $\alpha$'s right-hand side is called the
\emph{degree} of $\alpha$.
A rule is \emph{tail-recursive} if its degree is at
most $1$.\footnote{\label{sink footnote}Note that rules with
    right-hand side $\emptyset$ can equivalently be transformed to rules with right-hand side
   ``$\fs{sink}$'' where $\fs{sink}$ is a fresh function symbol of arity $0$. Thus w.l.o.g., for tail-recursive
    programs we assume that all rules have degree $1$.}
For a rule $\alpha$ with degree 1, let $\dest(\alpha)$ be the root symbol of the term in
$\rhs(\alpha)$.

An \emph{integer program} is a finite set of integer program rules. It is
tail-recursive if all of its rules are tail-recursive.

\begin{example}[Fibonacci]
  \label{ex:its-fib}
  Consider the following imperative
  program, which computes the \lstinline{x}-th Fibonacci number and returns $1$ if
  $\lstinline{x}$ is negative.
\begin{lstlisting}
int fib(int x) {
  if (x <= 1) return 1;
  else return fib(x - 1) + fib(x - 2);
}
\end{lstlisting}
A suitable abstraction of this program
  would yield  the following integer program which represents its recursion pattern.
   For
simplicity, we sometimes write $t$ instead of $\{ t \}$ for singleton
(multi)sets of terms, i.e., in the first rule  we write
$\Ff_0(x)  \tox{0}{}{}  \fs{fib}(x)$ instead of $\Ff_0(x)  \tox{0}{}{}  \{\fs{fib}(x)\}$.
Here, $\Ff_0 \in \Sigma$ is the \emph{canonical start
symbol}, i.e., the entry point of the program.
\begin{alignat}{2}
       \Ff_0(x) & \tox{0}{}{}  \fs{fib}(x) && \label{eq:fib1}\\
      \fs{fib}(x) &\tox{1}{}{} \{ \fs{fib}(x-1), \fs{fib}(x-2) \} && \constr{x > 1} \label{eq:fib2}\\
      \fs{fib}(x) &\tox{1}{}{} \emptyset && \constr{x \leq 1}\label{eq:fib3}
     \end{alignat}
As the right-hand side of
Rule \eqref{eq:fib2} consists of two terms, it is not
  tail-recursive.  Note that the result of the fib program is not represented in the
  abstraction.

Using multisets as right-hand sides allows us to express that a function $f$
calls several other functions $f_1,\ldots,f_n$. Note that a rule of the
form $f(\ldots) \to \{f_1(\ldots),\ldots,f_n(\ldots)\} \constr{\phi}$ is
not equivalent to the $n$ rules $f(\ldots) \to f_i(\ldots) \constr{\phi}$ with $1 \leq i
\leq n$: While
the former rule expresses that $f$ invokes \emph{all} functions
$f_1,\ldots,f_n$, the latter rules mean that $f$ non-deterministically invokes
\emph{some} function $f_i$.
Thus, the recursive $\fs{fib}$-rule \eqref{eq:fib2}
cannot be replaced by the rules
  \begin{alignat}{2}
      \fs{fib}(x) &\tox{1}{}{} \fs{fib}(x-1) && \constr{x > 1} \quad \text{and} \label{eq:fib1-wrong}\\
      \fs{fib}(x) &\tox{1}{}{} \fs{fib}(x-2) && \constr{x > 1}. \label{eq:fib2-wrong}
   \end{alignat}
  These rules would mean that $\fs{fib}(x)$ \emph{either} evaluates to
  $\fs{fib}(x-1)$ \emph{or} to $\fs{fib}(x-2)$. In contrast, the recursive rule
  \eqref{eq:fib2} expresses that $\fs{fib}(x)$ evaluates to \emph{both}
  $\fs{fib}(x-1)$ and $\fs{fib}(x-2)$. Thus, the integer program $\{\eqref{eq:fib1}, \eqref{eq:fib2},
  \eqref{eq:fib3} \}$ has
  exponential complexity, but replacing its recursive rule \eqref{eq:fib2} with
  \eqref{eq:fib1-wrong} and \eqref{eq:fib2-wrong}  would result in a program with only linear
  complexity.
  So the recursion pattern of
  a non-tail-recursive program like fib cannot be modeled with rules
  whose right-hand sides are singleton sets.
\end{example}

Note that our notion of tail-recursion is a special case of the standard notion, where a procedure is considered to be
tail-recursive if recursive calls are only performed as its last action. The reason is that in our program model the
right-hand side of a rule is considered as a multiset and thus, there is no order imposed on the evaluation of its
elements. Thus, the non-tail-recursive rule
\[
  \Ff(\ldots) \tox{}{}{} \{\Ff(\ldots),\Fg(\ldots)\} \; \constr{\phi}
\]
could correspond to either of the following procedures, where the first one is tail-recursive, but the second one is not.

\begin{center}
  \begin{minipage}{0.15\textwidth}
\begin{lstlisting}
f(...) {
  g(...);
  f(...);
}
\end{lstlisting}
  \end{minipage}
  \hspace{10em}
  \begin{minipage}{0.15\textwidth}
\begin{lstlisting}
f(...) {
  f(...);
  g(...);
}
\end{lstlisting} 
  \end{minipage}
\end{center}
On the other hand, programs which are tail-recursive w.r.t.\ our notion of
tail-recursion are clearly also
tail-recursive in the usual sense.

The guards of rules restrict the control flow of the program, i.e., a rule
$f(\vect{x}) \tox{\cc}{}{} \TTT \constr{\phi}$ can only be applied if the
current valuation of the variables is a model of $\phi$. As we are concerned
with integer programs, all variables range over $\ZZ$, such that we only
consider models of $\phi$ which map variables to integers.

\begin{definition}[Substitutions]
  A substitution $\sigma$ instantiates variables by arithmetic expressions.
 We sometimes denote
  substitutions $\sigma$ by finite sets of key-value pairs
  $\{y_1/t_1, \ldots, y_k/t_k \}$ or $\{\vect{y} / \vect{t}\}$
for short, where $\vect{y}$ is the vector  $(y_1,\ldots,y_k)$ and $\vect{t}$ is
the vector $(t_1,\ldots,t_k)$. This substitution instantiates every variable $y_i \in \VV$
by the arithmetic expression $t_i$. Furthermore, it maps each\footnote{Slightly abusing notation,
    we sometimes use vectors as sets. So for a vector $\vect{y} = (y_1,\ldots,y_k)$,
    $\VV\setminus \vect{y}$ denotes $\VV \setminus \{ y_1,\ldots,y_k \}$.} $x \in \VV \setminus \vect{y}$  to $x$, i.e.,
the \emph{domain} of $\sigma$ is $\dom(\sigma) = \{ y_i \mid 1 \leq i \leq k, y_i \neq t_i\}$ and its
\emph{range} is $\{ x \sigma \mid x \in \dom(\sigma) \}$. Substitutions are
homomorphically extended to terms (i.e., $\sigma(t)$ instantiates all variables $x$ in the
term $t$ by $\sigma(x)$) and we usually write $t\sigma$ instead of $\sigma(t)$. For two
substitutions $\theta$ and $\sigma$, their \emph{composition} is denoted by $\theta \circ
\sigma$ where $t \, (\theta \circ \sigma) = t\theta\sigma$ (i.e., $\theta$ is applied first).

An \emph{integer substitution} is a substitution
that maps every
variable $x \in \dom(\sigma)$ to an integer number.
 We write $\sigma \models \phi$ for an integer substitution $\sigma$ if
  $\VV(\phi) \subseteq \dom(\sigma)$ and $\sigma$
  is a model of $\phi$.
\end{definition}

We always assume that $\Sigma$ contains the canonical start
symbol $\Ff_0$ and
we are only
interested in program runs that start with terms of the form $\Ff_0(\vect{n})$
where $\vect{n} \subset \ZZ$.  Note that this is not a restriction, as we can
simulate several start symbols $\Ff_1,\ldots,\Ff_k$ by adding corresponding
rules from $\Ff_0$ to $\Ff_1,\ldots,\Ff_k$.  W.l.o.g., we assume that $\Ff_0$
does not occur on right-hand sides of rules.  Otherwise, one could rename $\Ff_0$ to
$\Ff'_0$ and add a rule $\Ff_0(\vect{x}) \tox{0}{}{} \Ff'_0(\vect{x})$.

\begin{figure}[t]
 \hspace*{-.2cm}\begin{subfigure}[b]{0.37\textwidth}
    \begin{lstlisting}[basicstyle=\small]
&$\Ff_0$:\hspace*{-.1cm}& y = 0;
&$\Ff_1$:\hspace*{-.1cm}& while (x > 0) {
&\phantom{$\Ff_0$:}\hspace*{-.1cm}&   y = y + x;
&\phantom{$\Ff_0$:}\hspace*{-.1cm}&   x = x - 1;
&\phantom{$\Ff_0$:}\hspace*{-.1cm}& }
&\phantom{$\Ff_0$:}\hspace*{-.1cm}& z = y;
&$\Ff_2$:\hspace*{-.1cm}& while (z > 0) {
&\phantom{$\Ff_0$:}\hspace*{-.1cm}&   u = z - 1;
&$\Ff_3$:\hspace*{-.1cm}&   while (u > 0) {
&\phantom{$\Ff_0$:}\hspace*{-.1cm}&     u = u - random$(0,\infty)$;
&\phantom{$\Ff_0$:}\hspace*{-.1cm}&   }
&\phantom{$\Ff_0$:}\hspace*{-.1cm}&   z = z - 1;
&\phantom{$\Ff_0$:}\hspace*{-.1cm}& }
    \end{lstlisting}
    \caption{Example Integer Program}
    \label{fig:its-leading-ex-ip}
  \end{subfigure}
  \begin{subfigure}[b]{0.57\textwidth}
     \[
      \begin{array}{l@{\;}l@{\;\;}c@{\;}l@{\,}l@{\hspace*{-.5cm}}}
        \alpha_0\!:&\Ff_0(x,y,z,u) & \tox{1}{}{}{} & \Ff_1(x,0,z,u)\\
        \alpha_1\!:&\Ff_1(x,y,z,u) & \tox{1}{}{}{} & \Ff_1(x-1,y+x,z,u) & \constr{x > 0}\\
        \alpha_2\!:&\Ff_1(x,y,z,u) & \tox{1}{}{}{} & \Ff_2(x,y,y,u) & \constr{x \leq 0}\\
        \alpha_3\!:&\Ff_2(x,y,z,u) & \tox{1}{}{}{} & \Ff_3(x,y,z,z-1) & \constr{z > 0}\\
        \alpha_4\!:&\Ff_3(x,y,z,u) & \tox{1}{}{}{} & \Ff_3(x,y,z,u - \tv) & \constr{u > 0 \land \tv > 0}\\
        \alpha_5\!:&\Ff_3(x,y,z,u) & \tox{1}{}{}{} & \Ff_2(x,y,z-1,u) & \constr{u \leq 0}\\
      \end{array}
      \]
      \vspace*{2.5cm}
    \caption{Example Integer Program -- Rule Representation}
    \label{fig:its-leading-ex}
  \end{subfigure}
  \caption{Different Representations of Integer Programs}
  \label{fig:leading_ex_initial}
\end{figure}

\Cref{fig:its-leading-ex}
  shows an example of a tail-recursive integer program,
  i.e., here every right-hand side just consists of a single term.
\Cref{fig:its-leading-ex} corresponds to
the pseudo-code in \Cref{fig:its-leading-ex-ip}, where \(\code{random}(x,y)\)
returns a random integer \(\tv\) with $x < \tv < y$ and  $\omega$ is the smallest
infinite ordinal, i.e., we have $-\omega < n < \omega$ for all
numbers $n \in \ZZ$.  The following definition
clarifies how to evaluate integer programs.

\begin{definition}[Integer Transition Relation]
  \label{def:its-relation}
  A \emph{configuration}
is a multiset of terms of the form $f(\vect{n})$ where $f \in \Sigma$ and $\vect{n} \subset \ZZ$. The set of all
  configurations is denoted by $\CC$.

  Let $\PP$ be an integer program.  For configurations $\SSS,\TTT \in \CC$ and
  $\concretecosts \in \RR$, \emph{$\SSS$ evaluates to $\TTT$ with cost
    $\concretecosts$} ($\SSS \tox{\concretecosts}{}{\PP} \TTT$) if there is an
  $s \in \SSS$, a rule $\alpha$ of the form $f(\vect{x}) \tox{\cc}{}{} \QQQ \constr{\phi} \in \PP$,
  and an integer substitution $\sigma$ with $\VV(\alpha) \subseteq \dom(\sigma)$ such that
  $f(\vect{x})\sigma = s$,\footnote{Throughout this paper, ``${=}$'' means equality modulo arithmetic and
we assume that ground arithmetic expressions and comparisons are
evaluated exhaustively, so we have, e.g.,
    $f(2) = f(3-1)$ and $3 - 1 \in \ZZ$.}
  $\TTT
  = (\SSS \setminus \{s\}) \cup \QQQ\sigma$,\footnote{As usual, $\SSS \setminus \{s \}$
    means that the number of occurrences of $s$ in the multiset $\SSS$ (if any) is reduced
    by 1.
    We lift substitutions to
    (multi)sets of terms in the obvious way, i.e., $\QQQ\sigma = \{ q\sigma \mid
    q \in \QQQ \}$.} $\sigma \models \phi$, and $\cc\sigma = \concretecosts$.

  For any integer program rule
  $\alpha$ and any integer substitution $\sigma$, let $\sigma \models \alpha$ denote that
  $\VV(\alpha) \subseteq \dom(\sigma)$ and $\sigma \models \guard(\alpha)$.

  We write $\SSS \tox{\concretecosts}{}{\alpha} \TTT$ instead of $\SSS
  \tox{\concretecosts}{}{\PP} \TTT$ if $\PP$ is the singleton set containing
  $\alpha$.
  Moreover,
\[ \SSS_0 \tox{\concretecosts}{m}{\PP} \SSS_m \;\; \text{ denotes that } \;\;
  \SSS_0 \tox{\concretecosts_1}{}{\PP} \ldots \tox{\concretecosts_m}{}{\PP}
  \SSS_m \;\; \text{ for } \;\; \concretecosts = \sum_{i=1}^m \concretecosts_i.\]
  If $m$ is
  irrelevant, we write $\SSS_0 \tox{\concretecosts}{*}{\PP} \SSS_m$ if $m \geq 0$ and
  $\SSS_0 \tox{\concretecosts}{+}{\PP} \SSS_m$ if $m > 0$. Finally, we sometimes
  omit the costs (of both rules and evaluations) if they are not important.

  We say that a program is \emph{simplified} if $\head(\alpha) = \Ff_0$ for all rules
  $\alpha$, i.e., if all left-hand sides of rules are constructed with the canonical start
  symbol $\Ff_0$ which does
  not occur on right-hand sides. So any run of a simplified program  that
  starts with a term of the form $\Ff_0(\vect{n})$ has at most length 1.
\end{definition}

\begin{example}[Evaluation of Integer Programs]
Using the rules from \Cref{fig:its-leading-ex}, we have, e.g.,
 \[
    \Ff_0(3,2,1,0) \tox{1}{}{\alpha_0} \Ff_1(3,0,1,0) \tox{1}{}{\alpha_1} \Ff_1(2,3,1,0) \tox{1}{}{\alpha_1} \Ff_1(1,5,1,0) \tox{1}{}{\alpha_1} \ldots
    \]
    
    For the rules $\PP_{\fs{fib}} = \{\eqref{eq:fib1}, \eqref{eq:fib2},
  \eqref{eq:fib3} \}$ from \Cref{ex:its-fib}, we have
 \[
    \Ff_0(3) \tox{0}{}{{\PP_{\fs{fib}}}} \fs{fib}(3)
    \tox{1}{}{{\PP_{\fs{fib}}}}
    \{ \fs{fib}(2), \fs{fib}(1) \}
   \tox{1}{}{{\PP_{\fs{fib}}}} \{ \fs{fib}(1), \fs{fib}(0), \fs{fib}(1) \}
 \tox{3}{3}{{\PP_{\fs{fib}}}} \emptyset,
     \]
where $\{ \fs{fib}(1), \fs{fib}(0), \fs{fib}(1) \}
\tox{3}{3}{{\PP_{\fs{fib}}}} \emptyset$ abbreviates the three steps
\[ \{ \fs{fib}(1), \fs{fib}(0), \fs{fib}(1) \} \tox{1}{}{{\PP_{\fs{fib}}}}
 \{ \fs{fib}(0), \fs{fib}(1) \} \tox{1}{}{{\PP_{\fs{fib}}}}
 \{  \fs{fib}(1) \} \tox{1}{}{{\PP_{\fs{fib}}}}
 \emptyset,\]
 whose combined cost is $1 + 1 + 1 = 3$.
 \end{example}
 
    According to our definition, integer programs may also contain rules
like $\Ff(x) \tox{}{}{} \Ff(\frac{x}{2})$.  While
evaluations cannot yield
non-integer values (e.g., we cannot evaluate
$\Ff(1)$ to $\Ff(\frac{1}{2})$, as $\Ff(\frac{1}{2})$ is not a
configuration),
our technique assumes that arithmetic expressions on right-hand sides of rules always map integers to
integers.
Hence, throughout this paper we
restrict ourselves to \emph{well-formed} integer programs.

\begin{definition}[Well-Formed Integer Program]
  An integer program rule $\alpha$ is \emph{well formed} if $t_i\sigma \in \ZZ$
  for each $f(t_1,\ldots,t_k) \in \rhs(\alpha)$, for each $1
  \leq i \leq k$, and for each integer substitution $\sigma$ with  $\sigma \models
 \alpha$.
  An integer program is well
  formed if each of its rules is well formed.
\end{definition}
Note that for a well-formed rule $\alpha$ we do not require $\cost(\alpha)\sigma \in \ZZ$
 for integer substitutions $\sigma$ with  $\sigma \models
 \alpha$.

To ensure that the analyzed program $\PP_0$ is initially well formed, we just
allow integer numbers, addition, subtraction, and multiplication in
$\PP_0$.\footnote{\label{footnote:non-integer numbers}One
  could also allow expressions with non-integer numbers like $\frac{1}{2} x^2 +
  \frac{1}{2}x$ in the initial program,
  as long as every arithmetic
  expression in the program evaluates to an integer when instantiating its variables by
  integers.
We will present a criterion to detect such expressions in \Cref{lem:zz-polys}.}
  Our approach
relies on several program transformations, i.e., the initial program $\PP_0$ is
transformed into other programs $\PP_1, \PP_2, \ldots$ which may contain further
operations like division and exponentiation. However, all our transformations preserve
well-formedness.

To simplify the presentation, throughout
this paper, we assume that all function symbols in $\Sigma$ have the
same arity.  Otherwise, one can construct a variant of $\PP$ where additional
unused arguments are added to each function symbol whose arity is not maximal.
Moreover, we assume that the left-hand sides of $\PP$ only differ in their root
symbols, i.e., the argument lists are equal (e.g., in \Cref{fig:its-leading-ex},
the variables on the left-hand sides are consistently named $x,y,z,u$).
Otherwise, one can rename variables accordingly without affecting the relation
${\tox{}{}{\PP}}$.  The variables $\vect{x}$ on the left-hand sides are called
\emph{program variables} and for any rule $\alpha$,
all other variables $\TV(\alpha) = \VV(\alpha) \setminus \vect{x}$ are called
\emph{temporary}. These temporary variables are used to model
non-deterministic program data. 
So in \Cref{fig:its-leading-ex}, we have $\vect{x}=
(x,y,z,u)$ and $\tv \in \TV(\alpha_4)$.

In  \Cref{fig:leading_ex_initial}, the loop at $\Ff_1$ computes a value for $y$
that is quadratic in the original value of $x$.
Thus, the loop at $\Ff_2$ is executed quadratically often
where in each iteration, the inner loop at $\Ff_3$ may also be repeated
quadratically often.  Thus, the program's (\emph{worst-case}) runtime is a polynomial of
degree 4 in $x$.
In contrast, the \emph{best-case} runtime of the program is only quadratic in the original
value of $x$, because then the inner loop at $\Ff_3$ would always set $u$ to a
non-positive value immediately.
The goal of our paper is to infer lower bounds for worst-case runtimes automatically.

To formalize the (worst-case) runtime complexity of an integer program, we define the
\emph{derivation height} of a configuration $\SSS$ to be the cost of the most
expensive evaluation starting with $\SSS$.
Here, for any non-empty set $M
\subseteq \RR \cup \{\omega \}$, $\sup M$ is the least upper bound of $M$.
In the following, let  $\RR_{\geq 0} = \{k \in \RR \mid k \geq 0\}$.

\begin{definition}[Derivation Height \cite{dh}]
  \label{def:dh}
  Let $\PP$ be an integer program. Its \emph{derivation height} function
  $\dht_\PP: \CC \to \RR_{\geq 0} \cup \{\omega\}$ is defined as $\dht_\PP(\SSS) =
  \sup\{\concretecosts \in \RR \mid  \SSS
  \tox{\concretecosts}{*}{\PP} \TTT \text{ for some } \TTT \in \CC \}$.
\end{definition}

Clearly, we always have $\dht_\PP(\SSS) \geq 0$, since
    $\tox{\concretecosts}{*}{\PP}$  also permits evaluations with 0 steps.
For the integer program $\PP$ in \Cref{fig:its-leading-ex},
we obtain \(\dht_\PP(\Ff_0(0,y,z,u)) = 2\) for all $y,z,u \in \ZZ$,
since then we can only apply the transitions \(\alpha_0\) and  \(\alpha_2\) once.
For all terms
$\Ff_0(x,y,z,u)$ with \(x > 1\),
$\alpha_0$ is executed once, then $\alpha_1$ is executed $x$ times. Afterwards, $y$ has
the value $\tfrac{(x+1) \cdot x}{2}$. Now $\alpha_2$ is executed once and sets $z$ to the
value $\tfrac{(x+1) \cdot x}{2}$. The outer loop at $\Ff_2$ is executed $z$ times, where
in each iteration, the inner loop at $\Ff_3$ is executed $z-1$ time (in the worst case)
and $z$ is decreased by 1 in $\alpha_5$. So overall, $\alpha_3$ and $\alpha_5$ are both
executed $z$ times and $\alpha_4$ is executed $(z-1) + (z-2) + \ldots + 1 = \tfrac{z \cdot
  (z-1)}{2}$ times. Hence, the worst-case runtime is $1 + x + 1 + z + \tfrac{z \cdot
  (z-1)}{2} + z$, where $z = \tfrac{(x+1) \cdot x}{2}$, i.e.,
$\dht_\PP(\Ff_0(x,y,z,u)) = \frac{1}{8}  x^4 + \frac{1}{4}  x^3 + \frac{7}{8}  x^2 + \frac{7}{4}
x + 2$.
Our method will detect that the derivation
height of $\Ff_0(x,y,z,u)$  is at least
$\frac{1}{8}  x^4 + \frac{1}{4}  x^3 + \frac{7}{8}  x^2 + \frac{7}{4}
x$.  From this
concrete lower bound, our approach will infer that the asymptotic runtime
complexity of the program is in $\Omega(n^4)$ where $n$ is the size of the input, i.e., $n
= |x| + |y| + |z| + |u|$.
So the \emph{size} of the input is measured by the sum of the absolute values of all program variables.

Note that the derivation height can be unbounded even if the integer program
terminates.

\begin{example}[Unbounded Costs without Non-Termination]\label{ex:Unbounded Costs without Non-Termination}
  Consider the integer program $\PP$ with the rule
  $\Ff_0 \tox{\tv}{}{} \Ff$
  where $\tv$ is a temporary variable of the rule.  By \Cref{def:its-relation}, \emph{every} integer substitution
  $\sigma$ with
$\tv \in \dom(\sigma)$ can be used to evaluate $\Ff_0$ to
  $\Ff$.  Thus, for any $n \in \NN$ we have $\Ff_0 \tox{n}{}{}{}
  \Ff$ using an integer substitution $\sigma_n$ with $\tv\,\sigma_n =
  n$.
  Therefore we obtain $\dht_\PP(\Ff_0) = \omega$ although the rule has no recursive call.

  Similarly, for the program $\PP'$ with the rules $\Ff_0(x) \tox{0}{}{}{} \Ff(\tv)$ and
  $\Ff(x) \tox{1}{}{}{} \Ff(x-1) \, [x > 0]$, we also have  $\dht_{\PP'}(\Ff_0(1)) = \omega$,
  although every evaluation of the program is finite.
\end{example}

While
$\dht_\PP$ is defined on configurations,
the complexity of a program is
often defined as a function on $\NN$, in particular when considering asymptotic complexity
bounds. To bridge this gap, we use the common definition of
complexity as a function of the size of the input.
So the
\emph{runtime
  complexity} function $\rc_\PP(n)$
maps a natural
number $n$ to the cost of the most expensive program run where the size of the
input is bounded by $n$.

\begin{definition}[Runtime Complexity]\label{def:runtime complexity}
  Let $\PP$ be an integer program and let $k$ be the arity of $\Ff_0$.  The
  \emph{runtime complexity} function $\rc_\PP: \NN \to \RR_{\geq 0} \cup \{\omega\}$ of
  $\PP$ is defined as
  \[
    \rc_\PP(n) = \sup\{\dht_\PP(\Ff_0(\vect{n})) \mid \vect{n} \in \ZZ^k, \size{\vect{n}} \leq n\},
  \]
  where for the vector $\vect{n} =(n_1,\ldots,n_k)$, we have $\size{\vect{n}} = \sum_{i=1}^k|n_i|$.
\end{definition}

For the program $\PP$ from \Cref{fig:its-leading-ex}, recall that the derivation height
$\dht_\PP(\Ff_0(x,y,z,u)) = \tfrac{1}{8}x^4 + \tfrac{1}{4}x^3 + \tfrac{7}{8}x^2 + \tfrac{7}{4}x +
2$ solely depends on the value of the first argument $x$ of $\Ff_0$. 
As $n$ in \Cref{def:runtime complexity}  is a bound on the sum of the absolute values of
all arguments (i.e., $|x| + |y| + |z| + |u| \leq n$), setting $x=n$ and $y=z=u=0$ maximizes $\dht_\PP(\Ff_0(x,y,z,u))$.
Hence, we have $\rc_\PP(n) = \dht_\PP(\Ff_0(n,0,0,0)) = \tfrac{1}{8}n^4 + \tfrac{1}{4}n^3 + \tfrac{7}{8}n^2 + \tfrac{7}{4}n + 2$.

Obviously, $\dht_\PP$ and $\rc_\PP$ are not computable in general.
Thus, our goal is to find a lower bound on the runtime complexity of a program $\PP$ automatically
which is as precise as possible (i.e., a lower bound which is, e.g., unbounded,
exponential, or a polynomial of a degree as high as possible). So for the program in
\Cref{fig:its-leading-ex}, we would like to
derive
$\rc_\PP(n) \in
\Omega(n^4)$,
i.e., that the runtime complexity is asymptotically bounded from below by $n^4$.
 As usual,  \(f(n) \in \Omega(g(n))\)
means that there is an $m  > 0$ and an $n_0 \in \mathbb{N}$ such that
\(f(n) \geq m \cdot g(n)\) holds for all \(n \geq n_0\).
In our example, we also have $\rc_\PP(n) \in
\OO(n^4)$, i.e., $n^4$ is both an asymptotic \emph{lower} and \emph{upper} bound on the
(worst-case) runtime complexity.

Note that according to \Cref{def:runtime complexity}, $\rc_\PP(n)$ takes all runs into account that start with 
$\Ff_0(\vect{n})$ where the size $\size{\vect{n}}$ is $n$ \emph{or smaller} (i.e., in the
definition of $\rc_\PP(n)$ we use ``$\size{\vect{n}} \leq n$'' instead of
``$\size{\vect{n}} = n$''). This corresponds to the notion of ``runtime complexity'' used
e.g., for complexity analysis of term rewriting \cite{tct-dp-1} or for complexity analysis of
integer transition systems in the \emph{International
  Termination and Complexity Competition} \cite{termcompTACAS}.
To see the difference between $\rc_\PP$ and the alternative definition
 \[
    \rc_\PP'(n) = \sup\{\dht_\PP(\Ff_0(\vect{n})) \mid \vect{n} \in \ZZ^k, \size{\vect{n}} = n\},
  \]
  consider a program $\PP$ with the rule $\Ff_0(x) \tox{x}{}{}{} \Ff(x) \constr{x \geq 0 \land x = 2 \cdot
    \tv}$. For  non-negative numbers $n$ we have $\rc_\PP'(n) = \dht_\PP(\Ff_0(n)) = n$
  and $\rc_\PP(n) = n$ if $n$ is even,
  but $\rc_\PP'(n) = \dht_\PP(\Ff_0(n)) = 0$
and $\rc_\PP(n) =  \dht_\PP(\Ff_0(n-1)) = n-1$
  if $n$ is odd. As long as one is only interested in (asymptotic) upper bounds, the 
 difference between $\rc_\PP$ and $\rc_\PP'$ is negligible, since  we have both
 $\rc_\PP(n) \in \OO(n)$ and $\rc_\PP'(n) \in \OO(n)$. (More precisely, for any program
 $\PP$ we have $\rc_\PP(n) \in \OO(g(n))$ iff $\rc_\PP'(n) \in \OO(g(n))$ if $g$ is weakly
 monotonically increasing for large enough $n$.) But for (asymptotic) lower bounds,
 $\rc_\PP$ and $\rc_\PP'$ differ. For our example
 program we have $\rc_\PP(n) \in \Omega(n)$, but $\rc_\PP'(n) \notin \Omega(n)$ (we only have
 $\rc_\PP'(n) \in \Omega(1)$).
 Recall that two of our main motivations for the inference of worst-case lower bounds are
 \begin{enumerate}[label=(\Alph*), ref=(\Alph*)]
 \item \label{it:tight} to deduce \emph{tight} bounds in combination with existing techniques for the inference of worst-case upper bounds and
 \item \label{it:attack} to find denial-of-service vulnerabilities (or, more generally, performance bugs),
 \end{enumerate}
 see \Cref{sec:intro}.
The example above shows that in order to achieve \ref{it:tight}, one should use our definition of $\rc_\PP$
instead of $\rc_\PP'$.\footnote{Nevertheless, almost all of our techniques would also work
  in order to infer a lower bound on $\rc_\PP'$ instead of $\rc_\PP$. The only problem is
  in \Cref{sec:asymptotic} where we search for an infinite family of inputs that satisfy the guard
 of the program. Here, it is not required that this family can represent
  inputs of size $n$ for \emph{all} large enough $n$. So in our example, the technique of
\Cref{sec:asymptotic}
 would infer 
 $\rc_\PP(2 \cdot n) \in \Omega(n)$ (and indeed, we also have
$\rc_\PP'(2 \cdot n) \in \Omega(n)$), but the family of inputs ``$2 \cdot n$'' for all $n
  \geq 0$ does not represent \emph{all} possible large enough numbers. For weakly
  monotonically increasing functions like $\rc_\PP$, we present a technique in \Cref{sec:asymptotic}
  (viz.\ \Cref{lem:its-to-irc}) to transform a lower bound on  $\rc_\PP(2 \cdot n)$ into a lower bound
  on $\rc_\PP(n)$, i.e.,  we show that $\rc_\PP(2 \cdot n) \in \Omega(n)$ implies
  $\rc_\PP(n) \in \Omega(n)$. But the technique of \Cref{lem:its-to-irc} is not applicable to $\rc_\PP'$,
  because $\rc_\PP'$ is not weakly monotonically increasing.}
Regarding \ref{it:attack}, note that $\rc_\PP(n) \in \Omega(g(n))$ means that for large
enough $n$ one can always find inputs whose size is not greater than $n$, which lead to a
runtime of at least $m \cdot g(n)$. Hence,
our notion $\rc_\PP(n)$ can indeed be used to find families of
program inputs that lead to runtimes of at least length $m \cdot g(n)$, \emph{for all
  large enough $n$}.
So if $g$ is unacceptably large (e.g., exponential or a high-degree polynomial),
then such a family of program inputs witnesses a performance bug (which might, e.g., be exploited
for denial-of-service attacks).


\section{Simplifying Tail-Recursive Integer Programs}
\label{sec:simplification}

We now show how to transform any tail-recursive integer program $\PP$ into a simplified program $\PP'$
such that the runtime complexity of $\PP'$ is smaller or equal to the runtime complexity of $\PP$. Thus, any
lower bound for $\rc_{\PP'}$ is also a lower bound for  $\rc_{\PP}$.
In \Cref{sec:non-linear} we will extend our transformation to non-tail-recursive integer programs,
before inferring asymptotic lower bounds for the runtime
complexity of simplified programs in \Cref{sec:asymptotic}.

We first show in \Cref{sec:metering} how to under-estimate the number of possible loop iterations for
\emph{simple loops} $\alpha$ of the
form $f(\vect{x}) \tox{}{}{} f(\vect{x})\mu \constr{\phi}$, where we define
$\update(\alpha)= \mu$ and require $\dom(\mu) \subseteq \vect{x}$.
So for instance, the rule
\[ \alpha_1\!: \; \Ff_1(x,y,z,u) \; \tox{}{}{}{} \; \Ff_1(x-1,y+x,z,u) \;\; \constr{x >
  0}\]
from \Cref{fig:its-leading-ex} is a simple loop where
$\update(\alpha_1)$ is the substitution
$\mu = \{x/x-1, \, y/y+x \}$.
Based on the under-estimation of possible iterations, \Cref{subsec:its-contraction}
presents our technique to accelerate simple loops.
We introduce a technique to transform more complex loops into
simple loops in
\Cref{subsec:its-simplification}.

\subsection{Under-Estimating the Number of Iterations}
\label{sec:metering}

For a simple loop
$\alpha$ of the form $f(\vect{x}) \tox{}{}{} f(\vect{x})\mu \constr{\phi}$,
our goal is to infer an
arithmetic expression \(\bound\) such that for all integer substitutions
\(\sigma\) with $\sigma \models \alpha$,
the rule $\alpha$ can be executed at least $b\sigma$ times, i.e.,
there is an integer substitution
\(\sigma'\) with $f(\vect{x})\sigma \tox{}{\lceil \bound\sigma \rceil}{\alpha}
f(\vect{x})\sigma'$.  Here, as usual, $\lceil x \rceil$ is the smallest integer
$n$ with $n \geq x$.

To find such estimates, we use an adaptation of ranking
functions~\cite{podryb,bradley05,rank,costa-rf}
which we call \emph{metering functions}. In the following, we say that a
quantifier-free formula $\varphi$ is \emph{valid} if
we have $\sigma \models \varphi$
for every integer substitution $\sigma$ with $\VV(\varphi) \subseteq \dom(\sigma)$.

\begin{definition}[Ranking Function]
  \label{def:ranking}
  An arithmetic expression $\bound$ is a \emph{ranking function} for
  a simple loop $\alpha$ with $\update(\alpha) = \mu$ and $\TV(\alpha) = \emptyset$ if the following
  conditions are valid:
  \begin{eqnarray}
    \label{eq:its-rank1}
    \guard(\alpha) &\implies& \bound > 0\\
    \label{eq:its-rank2}
    \guard(\alpha) &\implies& \bound\mu \leq \bound-1
  \end{eqnarray}
\end{definition}

So for example, \(x\) is a ranking function for the rule \(\alpha_1\) in
\Cref{fig:its-leading-ex}, since both
$x > 0 \implies x > 0$ and $x > 0 \implies x-1 \leq x-1$
are clearly valid.
If $b$ is a ranking function for a rule $\alpha$, then for any integer substitution
$\sigma$ with $\VV(\alpha) \subseteq \dom(\sigma)$,
\(\bound\sigma\) \emph{over-estimates} the number of possible iterations of the loop $\alpha$:
\eqref{eq:its-rank2} ensures that \(\bound\sigma\) decreases at least
by $1$ in each loop iteration (i.e., $b \mu \sigma \leq b\sigma - 1$ holds whenever
$\guard(\alpha)\sigma$ is $\true$), and \eqref{eq:its-rank1} requires that
\(\bound\sigma\) is positive whenever the loop can be executed.

Note that \Cref{def:ranking} would be incorrect for the case $\TV(\alpha) \neq \emptyset$.
For example, consider the rule $\alpha: \; \Ff(x) \to \Ff(x+1) \constr{x < \tv}$.
If we omitted the requirement
 $\TV(\alpha) = \emptyset$,
then $\tv - x$ would be a ranking function for $\alpha$ since $\guard(\alpha)$ implies both
$\tv - x > 0$ and $\tv - (x+1) \leq \tv - x - 1$.
However, there are non-terminating evaluations like
$\Ff(0) \tox{}{}{\alpha} \Ff(1) \tox{}{}{\alpha} \Ff(2) \tox{}{}{\alpha} \ldots$, since
$\tv$ can be instantiated differently in each evaluation step. Thus,
$\tv - x$ is not a correct over-estimation for the number of loop iterations.

To cover the case $\TV(\alpha) \neq \emptyset$, \Cref{def:ranking} would need to reflect that the values of
temporary variables may change non-deterministically in every iteration.
We chose the simple definition above as it nicely exposes the analogy to our following novel concept of \emph{metering functions}.
In contrast to ranking functions, metering functions are \emph{under-estimates} for the maximal
number of iterations of a simple loop.

\begin{definition}[Metering Function]
  \label{def:its-meter}
  We call an arithmetic expression $\bound$ a \emph{metering function} for a
  simple loop \(\alpha\) with $\update(\alpha) = \mu$ if the following
  conditions are valid:
  \begin{eqnarray}
    \label{eq:its-meter1}
    \neg\guard(\alpha) &\implies& \bound \leq 0\\
    \label{eq:its-meter2}
    \guard(\alpha) &\implies& \bound\mu \geq \bound - 1
  \end{eqnarray}
\end{definition}

Here, \eqref{eq:its-meter2} ensures that \(\bound\sigma\) decreases at most by
$1$ in each loop iteration, and \eqref{eq:its-meter1} requires that
\(\bound\sigma\) is non-positive if the loop cannot be executed.  Thus, the loop
can be executed \emph{at least} $\bound\sigma$ times (i.e., $\bound\sigma$ is an
under-estimate).

In contrast to our definition of ranking functions, \Cref{def:its-meter} also covers the case
$\TV(\alpha) \neq \emptyset$. As we will show in \Cref{thm:MeteringFunctions},
the reason is that a metering function $\bound$ for a simple loop $\alpha$ is a witness that
$\alpha$ can be applied at least $\bound$ times for \emph{fixed} values of
$\alpha$'s temporary variables. In particular, metering functions can also 
contain temporary variables to express that the number of loop iterations is
unbounded, see \Cref{ex:its-unbounded}.

As an example, for the loop \(\alpha_1\) in \Cref{fig:its-leading-ex}, \(x\) is also a
metering function.  Condition \eqref{eq:its-meter1} requires the validity of \(\neg (x > 0)
\implies x \leq 0\) and \eqref{eq:its-meter2} requires \(x > 0 \implies x-1 \geq
x-1\).  While \(x\) is a metering \emph{and} a ranking function, \(\frac{1}{2}
x\) is a metering, but not a ranking function for \(\alpha_1\).
Similarly, \(x^2\) is a ranking, but not a metering function for \(\alpha_1\).
\Cref{thm:MeteringFunctions} states
that if $\bound$ is a metering function for a simple loop $\alpha$, then
$\alpha$ can be executed
at least $\lceil \bound\sigma \rceil$ times when starting the evaluation with
$\lhs(\alpha)\sigma$.
Thus, if every rule has the constant cost 1, then
$\dht_{\{\alpha\}}(\lhs(\alpha)\sigma) \geq \bound\sigma$ holds for all integer substitutions
$\sigma$ with $\VV(\alpha) \cup \VV(\bound) \subseteq \dom(\sigma)$.
Recall that we have $\rhs(\alpha) = \lhs(\alpha) \mu$ for
$\mu = \update(\alpha)$, since $\alpha$ is a simple loop. Hence,  the
evaluation has the form
\[ \lhs(\alpha)\,\sigma \; \tox{}{}{\alpha} \;
\rhs(\alpha)\,\sigma \; = \;
\lhs(\alpha)\,\mu\,\sigma \; \tox{}{}{\alpha}\;
\rhs(\alpha)\,\mu\,\sigma \; = \;
\lhs(\alpha)\,\mu^2\,\sigma \; \tox{}{}{\alpha} \; \ldots
\]
Here, for any $k \in \NN$,
$\mu^k$ stands for $k$ applications of $\mu$. So for example,
$\mu^{3}$ stands for $\mu \circ \mu \circ \mu$ and $\mu^0$ is the identity substitution.

\begin{theorem}[Metering Functions Under-Estimate Simple Loops]
  \label{thm:MeteringFunctions}
  Let $\bound$ be a metering function for a well-formed simple loop $\alpha$ with
  $\mu = \update(\alpha)$.  Then
  for all integer substitutions
  $\sigma$ with  $\VV(\alpha) \subseteq \dom(\sigma)$ and $\sigma \models b \geq 0$,  there is the following evaluation of length
    $\lceil \bound\sigma
  \rceil$:
  \[
    \lhs(\alpha)\,\sigma \tox{}{}{\alpha} \lhs(\alpha)\,\mu\,\sigma \tox{}{}{\alpha}
    \lhs(\alpha)\,\mu^2\,\sigma \tox{}{}{\alpha} \ldots  \tox{}{}{\alpha}
    \lhs(\alpha)\,\mu^{\lceil \bound\sigma \rceil}\,\sigma
    \]
   where $\mu^k \circ \sigma \models \guard(\alpha)$ for all $0 \leq k < \bound\sigma$.
\end{theorem}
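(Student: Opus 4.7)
The plan is to prove the theorem by induction on $n := \lceil b\sigma \rceil$. For the base case $n = 0$, the empty zero-step evaluation $\lhs(\alpha)\sigma = \lhs(\alpha)\mu^0\sigma$ trivially witnesses the claim, and the side condition $\mu^k \circ \sigma \models \guard(\alpha)$ for $0 \leq k < b\sigma$ holds vacuously since $b\sigma \leq 0$ in this case.

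For the inductive step $n \geq 1$, I first establish that a single step of $\alpha$ is available. Since $b\sigma > 0$, the contrapositive of (\ref{eq:its-meter1}) in \Cref{def:its-meter} yields $\sigma \models \guard(\alpha)$; together with $\VV(\alpha) \subseteq \dom(\sigma)$ this gives $\sigma \models \alpha$. Well-formedness of $\alpha$ guarantees that $\rhs(\alpha)\sigma = \lhs(\alpha)\mu\sigma$ has integer-valued arguments, so by \Cref{def:its-relation} the step $\lhs(\alpha)\sigma \to_\alpha \lhs(\alpha)\mu\sigma$ is legal. Set $\sigma' := \mu \circ \sigma$; since $\dom(\mu) \subseteq \vect{x}$, the substitution $\sigma'$ agrees with $\sigma$ on $\TV(\alpha)$ and maps each program variable $x_i$ to the integer $(\mu(x_i))\sigma$, so $\sigma'$ is again an integer substitution with $\VV(\alpha) \subseteq \dom(\sigma')$. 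From $\sigma \models \guard(\alpha)$ together with (\ref{eq:its-meter2}) I obtain $b\sigma' = b\mu\sigma \geq b\sigma - 1$.

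If $n = 1$, the single step above already constitutes the required length-$1$ evaluation. Otherwise $n \geq 2$ forces $b\sigma > 1$, hence $b\sigma' \geq 0$, and the induction hypothesis applied to $\sigma'$ yields an evaluation $\lhs(\alpha)\sigma' \to_\alpha \cdots \to_\alpha \lhs(\alpha)\mu^{\lceil b\sigma' \rceil}\sigma'$ of length $\lceil b\sigma' \rceil \geq \lceil b\sigma - 1 \rceil = n - 1$, satisfying $\mu^j \circ \sigma' \models \guard(\alpha)$ for all $0 \leq j < b\sigma'$. Using the identity $\mu^j \circ \sigma' = \mu^{j+1} \circ \sigma$, I prepend the initial step and truncate after exactly $n - 1$ further iterations to obtain the claimed evaluation of length $n$. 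The guard condition at $k = 0$ was established above, and for $1 \leq k < b\sigma$ it follows from the inductive hypothesis applied with $j := k - 1 < b\sigma - 1 \leq b\sigma'$.

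The main obstacle is bookkeeping rather than any deep step. One has to handle the non-integer case $0 < b\sigma < 1$ carefully: here $\lceil b\sigma \rceil = 1$ while $b\sigma - 1 < 0$, so the induction hypothesis is not applicable and must be short-circuited by the $n = 1$ branch above. One also has to verify that $\sigma' = \mu \circ \sigma$ is again a valid integer substitution on $\VV(\alpha)$, which rests precisely on well-formedness of $\alpha$, and to observe that temporary variables in $\TV(\alpha)$ are untouched by $\mu$ and hence fixed by $\sigma$ throughout the evaluation. This last point is exactly the intended reading of \Cref{def:its-meter}: metering functions under-approximate iteration counts for any fixed choice of temporary values, in contrast to ranking functions whose unrestricted definition would be unsound once temporary variables are involved.
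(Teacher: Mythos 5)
There is a genuine gap in your induction. You induct on $n = \lceil \bound\sigma \rceil$, but in the step $n \geq 2$ you apply the induction hypothesis to $\sigma' = \mu \circ \sigma$, whose associated measure is $\lceil \bound\sigma' \rceil$. The metering conditions only give the \emph{lower} bound $\bound\sigma' \geq \bound\sigma - 1$; they impose no upper bound, so $\lceil \bound\sigma' \rceil$ can be strictly larger than $n$, and then the induction hypothesis is simply not available for $\sigma'$. Concretely, for the loop $\Ff(x) \tox{}{}{} \Ff(x+5) \constr{x>0}$ the expression $x$ is a metering function (both \eqref{eq:its-meter1} and \eqref{eq:its-meter2} hold), yet $\bound\sigma' = \bound\sigma + 5$, so your measure \emph{increases} at every unfolding. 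Your own phrasing betrays the problem: you obtain an evaluation of length $\lceil\bound\sigma'\rceil \geq n-1$ from the hypothesis and then ``truncate'' it, which is only needed because the instance you invoke may be at least as large as the one you are proving.

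The argument is salvageable, but the induction must be re-parameterized. Either (i) induct on an auxiliary counter $n \in \NN$, proving for all $\sigma$ with $n - 1 < \bound\sigma$ (together with the case $n=0$) that an evaluation of length $n$ with the stated guard conditions exists; then $\bound\sigma' \geq \bound\sigma - 1 > n-2$ puts $(n-1,\sigma')$ within the scope of the hypothesis, and the recursion is on the strictly decreasing counter, not on $\lceil\bound\sigma'\rceil$. Or (ii) do what the paper does: let $m_\sigma$ be the length of the longest guard-respecting evaluation starting from $\lhs(\alpha)\sigma$, dispose of $m_\sigma = \omega$ trivially, and induct on $m_\sigma$ --- which \emph{does} decrease by exactly one when passing to $\mu\circ\sigma$ --- to conclude $m_\sigma \geq \bound\sigma$ and hence $m_\sigma \ge \lceil\bound\sigma\rceil$. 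Your local steps are all fine and mirror the paper's: deriving $\sigma\models\guard(\alpha)$ from $\bound\sigma>0$ via the contrapositive of \eqref{eq:its-meter1}, using well-formedness to see that $\mu\circ\sigma$ is again an integer substitution, and the careful handling of the non-integer case $0<\bound\sigma\le 1$. Only the choice of induction measure is broken.
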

\begin{proof}
  For any integer substitution $\sigma$ with $\VV(\alpha) \subseteq \dom(\sigma)$, let $m_{\sigma} \in \NN \cup \{ \omega
  \}$ be the length of the longest evaluation of the form $\lhs(\alpha)\,\sigma
  \tox{}{m_{\sigma}}{\alpha} \lhs(\alpha)\,\mu^{m_\sigma}\,\sigma$ where
 $\mu^k \circ \sigma \models \guard(\alpha)$ for all $0 \leq k < m_\sigma$.  So the loop $\alpha$ can be executed
  $m_{\sigma}$ times when starting with
  $\lhs(\alpha)\,\sigma$. We prove that
 $m_\sigma \geq b\sigma$.

  If $m_{\sigma} = \omega$, then the claim is trivial.
  For  $m_{\sigma} \neq \omega$, we use induction on $m_{\sigma}$.  In the
  base case $m_{\sigma} = 0$,
  we have  \(\sigma \not\models \guard(\alpha)\). Thus,
$\eqref{eq:its-meter1}$ implies $\bound\sigma \leq 0 = m_{\sigma}$.

   For the induction step $m_\sigma \geq 1$, we must have
  \(\sigma \models \guard(\alpha)\) which implies:
  \begin{eqnarray}
    \label{eq:MeteringFunctions-2a}
    \bound\mu\sigma  &\geq& \bound\sigma - 1 \qquad \text{by \eqref{eq:its-meter2}}\\
    \label{eq:MeteringFunctions-2}
    \lhs(\alpha)\,\sigma &\tox{}{}{\alpha}& \lhs(\alpha)\,\mu\,\sigma
  \end{eqnarray}

Due to
  \eqref{eq:MeteringFunctions-2},
the longest evaluation  $\lhs(\alpha)\, \sigma
\tox{}{m_{\sigma}}{\alpha} \lhs(\alpha)\,\mu^{m_\sigma}\,\sigma$ has the form
\[
\lhs(\alpha)\,\sigma \tox{}{}{\alpha} \lhs(\alpha)\,\mu\,\sigma
\tox{}{m_{\mu \circ \sigma}}{\alpha} \lhs(\alpha)\,\mu^{m_\sigma}\,\sigma,
\]
i.e., $m_\sigma =
m_{\mu \circ \sigma} + 1$. Since $\VV(\alpha) \subseteq \dom(\mu \circ \sigma)$
and
$\mu \circ \sigma$ is an integer substitution
(as $\alpha$ is well formed), the induction hypothesis implies $m_{\mu \circ \sigma} \geq
\bound  \mu \sigma$. Hence, we have $m_\sigma =
m_{\mu \circ \sigma} + 1 \geq \bound  \mu  \sigma + 1 \geq \bound \sigma$ by
\eqref{eq:MeteringFunctions-2a}.
\end{proof}

Note that if one regards a single simple loop $f(\vect{x}) \tox{}{}{} f(\vect{x})\mu
\constr{\phi}$
without any other $f$-rules that may lead to non-determinism, then the only remaining
possible non-determinism is due to the temporary variables. So then the number of
iterations of the loop in the worst and in the best case only depends on the instantiation
of the temporary variables. Since the requirements \eqref{eq:its-meter1} and
\eqref{eq:its-meter2} for the metering function must hold for all instantiations of the
variables (i.e., also for all instantiations of the temporary variables), then a metering
function is also a lower bound on the number of iterations of the loop in the best
case. To exploit that we only need lower bounds on the worst-case runtime of the loop, in
\Cref{subsec:its-contraction}
we will present a technique which can instantiate temporary variables by
suitable values which (hopefully) lead to long runtimes.

Our implementation builds upon a well-known transformation based on Farkas'
Lemma~\cite{bradley05,podryb} to find \emph{linear} metering functions.  The
basic idea is to search for coefficients of a linear template polynomial
\(\bound\) such that \eqref{eq:its-meter1} and \eqref{eq:its-meter2} hold for
all possible instantiations of the variables \(\VV(\alpha)\).  In addition to
\eqref{eq:its-meter1} and \eqref{eq:its-meter2}, we also require
\eqref{eq:its-rank1} to avoid trivial solutions like \(\bound = 0\).  Here, the
coefficients of $\bound$ are existentially quantified, while the variables from
\(\VV(\alpha)\) are universally quantified.  As in \cite{bradley05,podryb},
eliminating the universal quantifiers using Farkas' Lemma allows us to use
standard SMT solvers to search for $\bound$'s coefficients.\footnote{Since Farkas' Lemma
  is only applicable for linear constraints, for loops with non-linear arithmetic, our implementation uses 
simplifications in order to linearize the constraints that have to be satisfied for
metering functions:
We may substitute a non-linear term by a fresh variable,
provided that the variables of the non-linear term do not appear elsewhere
(the reverse substitution is applied to the metering function afterwards),
and we may omit irrelevant non-linear constraints or updates.
 For example, if the update of a variable $x$ which does not occur in
the guard is non-linear, then we use a linear template polynomial $b$ without the
variable $x$ for the metering function. But if
such simplifications are not possible, then we fail when trying to infer metering
functions for loops with non-linear arithmetic.}

If \(\guard(\alpha)\)
contains constraints that are irrelevant for
\(\alpha\)'s termination (provided that $\guard(\alpha)$ is satisfiable),
then one can improve our approach. More precisely, if $\guard(\alpha) =  \phi \land \psi$,
then $\psi$ is irrelevant for termination of the loop $\alpha$ if it always holds after
executing the loop (given that it holds before the loop), i.e., if
  $\guard(\alpha)$ implies $\psi\mu$.  In this case, one can
infer
``conditional'' metering functions of the form $\charfun{\psi} \cdot b$. Here, $\charfun{\psi}$
is the \emph{characteristic function} of $\psi$, i.e., for any integer substitution
$\sigma$ with $\VV(\psi) \subseteq \dom(\sigma)$ we have
$\charfun{\psi}\,\sigma = 1$ if $\sigma \models \psi$ and $\charfun{\psi}\,\sigma =
0$ otherwise.
So for example, $\charfun{y+z = 1} \, \sigma = 1$ holds for an integer substitution
$\sigma$ iff $y\sigma + z\sigma = 1$.

\begin{theorem}[Inferring Conditional Metering Functions]
  \label{lem:irrelevant-constraints}
  Let
  $\alpha$ be a simple loop such that $\update(\alpha) = \mu$ and
  $\guard(\alpha) = \phi \land \psi$ where $\guard(\alpha) \implies \psi\mu$ is
  valid.  If the conditions
  \begin{eqnarray}
    \label{eq:irrelevant1}
    \neg\phi \land \psi &\implies& \bound \leq 0\\
    \label{eq:irrelevant2}
    \phi \land \psi &\implies& \bound\mu \geq \bound - 1
  \end{eqnarray}
are valid,  then $\charfun{\psi} \cdot \bound$ is a metering function for $\alpha$.
\end{theorem}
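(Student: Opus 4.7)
The plan is to verify directly that the expression $\charfun{\psi} \cdot \bound$ satisfies the two conditions \eqref{eq:its-meter1} and \eqref{eq:its-meter2} of \Cref{def:its-meter}. Both reductions are essentially a case split on whether $\psi$ (respectively $\psi\mu$) holds, together with the observation that the characteristic function interacts cleanly with substitutions, i.e., $(\charfun{\psi} \cdot \bound)\mu = \charfun{\psi\mu} \cdot (\bound\mu)$ because $\mu$ acts on variables and the bracket notation simply records the truth value of the underlying constraint.

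For condition \eqref{eq:its-meter1}, I would need to show that $\neg\guard(\alpha) = \neg(\phi \land \psi)$ implies $\charfun{\psi} \cdot \bound \leq 0$. I split on $\psi$: if $\psi$ does not hold under the given integer substitution, then $\charfun{\psi} = 0$ and the claim is immediate. If $\psi$ holds, then $\neg\phi$ must hold (since $\guard(\alpha)$ is violated), so we are in the premise $\neg\phi \land \psi$ of \eqref{eq:irrelevant1}, which gives $\bound \leq 0$ and hence $\charfun{\psi} \cdot \bound = \bound \leq 0$.

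For condition \eqref{eq:its-meter2}, I would assume $\guard(\alpha) = \phi \land \psi$. Then $\charfun{\psi} = 1$, so the right-hand side of \eqref{eq:its-meter2} simplifies to $\bound - 1$. For the left-hand side, I use the hypothesis that $\guard(\alpha) \implies \psi\mu$ is valid: since $\phi \land \psi$ holds, so does $\psi\mu$, and hence $\charfun{\psi\mu} = 1$, which gives $(\charfun{\psi} \cdot \bound)\mu = \charfun{\psi\mu} \cdot \bound\mu = \bound\mu$. Therefore the required inequality reduces to $\bound\mu \geq \bound - 1$, which is exactly \eqref{eq:irrelevant2}.

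There is no real obstacle here; the argument is a routine two-case analysis in each of the two verification steps. The only subtlety worth spelling out carefully is the claim $(\charfun{\psi} \cdot \bound)\mu = \charfun{\psi\mu} \cdot (\bound\mu)$, since $\charfun{\cdot}$ is not one of the standard arithmetic operators introduced earlier; I would state once that applying a substitution to a characteristic function is defined by applying it to the underlying formula, so that this equation holds at the level of values for every integer substitution extending $\mu$.
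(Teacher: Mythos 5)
Your proof is correct and follows essentially the same route as the paper's: verify \eqref{eq:its-meter2} under $\guard(\alpha)$ using the hypothesis $\guard(\alpha) \implies \psi\mu$ to get $\charfun{\psi}\mu = 1$, and verify \eqref{eq:its-meter1} under $\neg\guard(\alpha)$ by splitting on whether $\psi$ holds. Your explicit remark that $(\charfun{\psi}\cdot\bound)\mu = \charfun{\psi\mu}\cdot(\bound\mu)$ is a welcome clarification of a step the paper uses implicitly.
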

\begin{proof}
  If $\sigma \models \guard(\alpha)$, i.e., $\sigma \models \phi \land \psi$, then we
extend $\sigma$ arbitrarily to the variables in $\psi\mu$ that do not occur in $\phi$ or
$\psi$. Then we have $\sigma \models
  \psi\mu$ as $\guard(\alpha)$ implies $\psi\mu$.
  Thus, $\charfun{\psi}\, \mu \sigma = 1$ and $\charfun{\psi}\,  \sigma = 1$. So
 by \eqref{eq:irrelevant2} we obtain
  $(\charfun{\psi} \cdot \bound)\,\mu\sigma =
  \bound\mu\sigma \geq \bound\sigma -1 = (\charfun{\psi} \cdot \bound)\,\sigma - 1$, i.e.,
  then $\charfun{\psi} \cdot \bound$ satisfies \eqref{eq:its-meter2}.

Now we regard the case where $\sigma \models \neg\guard(\alpha)$. If $\sigma \models \neg\psi$,
 then $(\charfun{\psi} \cdot \bound)\,\sigma = 0$,
  i.e., then $\charfun{\psi} \cdot \bound$ satisfies \eqref{eq:its-meter1}.
  Otherwise, we have $\sigma \models \neg\phi$
  and $\sigma \models \psi$. Then
  \eqref{eq:irrelevant1} implies $(\charfun{\psi} \cdot \bound)\, \sigma = \bound\sigma   \leq
  0$, i.e., then $\charfun{\psi} \cdot \bound$ also satisfies \eqref{eq:its-meter1}.
\end{proof}

While our implementation of \Cref{def:its-meter} was restricted to the search for linear
metering functions $\bound$, with  \Cref{lem:irrelevant-constraints} our implementation
can now also be used to obtain conditional metering functions of the form  $\charfun{\psi}
\cdot \bound$ for  linear arithmetic expressions $\bound$.

Compared to \Cref{def:its-meter}, \Cref{lem:irrelevant-constraints} weakens the conditions for metering functions:
If
$\bound$ is a metering function according to \eqref{eq:its-meter1}
and \eqref{eq:its-meter2}, then we can also prove that $\charfun{\psi} \cdot
\bound$ is a metering function via \Cref{lem:irrelevant-constraints} (but the
converse does not hold). The reason is that
 in  \eqref{eq:its-meter1}
we require $b \leq 0$  whenever $\neg(\phi \land \psi)$ holds, whereas
 in \eqref{eq:irrelevant1} we only require  $b \leq 0$  whenever $\neg\phi
\land \psi$ holds.

\begin{example}[Conditional Metering Function]
  \label{ex:Conditional Metering Function}
To illustrate the use of conditional metering functions, consider the following rule $\alpha$:
\[
   \Ff(x,y,z) \tox{}{}{} \Ff(x-y-z,y,z) \constr{x > 0 \land y + z = 1}.
\]
Here, we can choose $\phi$ to be $x > 0$ and $\psi$ to be $y + z = 1$, since $x > 0 \land y + z = 1$ implies $(y + z = 1
)\, \mu$ for $\alpha$'s update $\mu = \{ x/x-y-z\}$, i.e., it implies
$y + z = 1$.
 Hence, to infer the metering
function $\charfun{y + z = 1} \cdot x$, according to \Cref{lem:irrelevant-constraints}
it suffices to show that $\neg (x > 0) \land y + z = 1 \implies x \leq 0$ and $x > 0 \land y + z = 1
\implies x-y-z \geq x - 1$ are valid. Using this metering function, our approach can show
that the rule can be applied at least linearly often. In contrast,
without \Cref{lem:irrelevant-constraints} our implementation would not be able to generate a
useful metering function for this example, since it would only search for
linear
arithmetic expressions
$\bound$ that satisfy \eqref{eq:its-meter1} and \eqref{eq:its-meter2}. However, $x$ is not a metering function, since Condition
\eqref{eq:its-meter1} would not be satisfied (i.e., $\neg (x > 0 \land y + z = 1) \implies
x \leq 0$ is not valid).
\end{example}

In \cite{ijcar16}, we already sketched a related optimization, which is however
weaker than
\Cref{lem:irrelevant-constraints}. There the idea was to omit $\psi$ completely
when searching for metering functions. So with this optimization, one would check
the implications $\neg (x > 0) \implies x \leq 0$ and $x > 0 \implies x-y-z \geq x - 1$
to prove that $\charfun{y + z = 1} \cdot x$
is a metering function for the loop of \Cref{ex:Conditional Metering Function}.
As the second implication is not valid, this approach is not sufficient to handle
\Cref{ex:Conditional Metering Function}. In contrast,
\Cref{lem:irrelevant-constraints} adds $\psi$ to the premise in
\eqref{eq:irrelevant1} and \eqref{eq:irrelevant2}, i.e., the approach of \Cref{lem:irrelevant-constraints} for
inferring conditional metering
functions is strictly more powerful than the optimization from
\cite{ijcar16}.

Conditional metering
    functions are also particularly useful to integrate the handling of non-terminating
    rules in our approach.

\begin{example}[Unbounded Loops]
  \label{ex:its-unbounded}
  Let \(\alpha\) be a simple loop whose update is $\mu$.  If \(\guard(\alpha)
  \implies \guard(\alpha)\mu\) is valid and hence the \emph{whole} guard is irrelevant for
  $\alpha$'s termination,
  then $\alpha$ does not terminate (provided that $\guard(\alpha)$ is satisfiable).  In such cases, we can choose $\psi = \varphi$ in
  \Cref{lem:irrelevant-constraints} and thus,
  \eqref{eq:irrelevant1} and \eqref{eq:irrelevant2} from \Cref{lem:irrelevant-constraints} become
  \[
    \false \implies \bound \leq 0 \quad \text{ and } \quad
    \guard(\alpha) \implies \bound\mu \geq \bound - 1.
  \]
  This is valid for a fresh temporary variable $\bound = \tv$. Thus, for
  \[
    \PP = \{\Ff_0(x,y) \tox{0}{}{} \Ff(x,y), \, \alpha\} \quad \text{where $\alpha$ is the
      rule} \quad   \Ff(x,y) \tox[-1pt]{y}{}{} \Ff(x+1,y) \constr{0 < x},
  \]
  we obtain the metering function $\charfun{0 < x} \cdot \tv$.
  As $\guard(\alpha) = 0 < x$ is satisfiable, this indicates that the runtime of the program is unbounded, i.e.,
 $\dht_\PP(\Ff(x,y)\sigma) \geq \tv\,\sigma$ and thus  $\dht_\PP(\Ff(x,y)\sigma) = \omega$
  for all integer substitutions $\sigma$ with
$\{ x, y, \tv \} \subseteq \dom(\sigma)$ and $0 < x\sigma$.

  Note that in this example, \Cref{lem:irrelevant-constraints} succeeds when choosing
  $\bound$ to be $\tv$ (i.e.,   $\charfun{0 < x} \cdot \tv$ is a metering function). In contrast,
    $\tv$ is not a metering function, since \eqref{eq:its-meter1} does not hold (i.e.,
    $\neg (0 < x)$ does not imply $\tv \leq 0$). Thus, conditional metering
    functions allow us to handle terminating and non-terminating rules in a uniform way.
\end{example}


\subsection{Accelerating Simple Loops}
\label{subsec:its-contraction}

We now define \emph{sound processors} that simplify integer programs. A sound
processor is essentially a program transformation which preserves lower runtime
bounds.

\begin{definition}[Processor]
  \label{def:Processor}
  A \emph{processor} $\proc$ is a partial function which maps integer programs
  to integer programs. It is \emph{sound} if $\rc_\PP(n) \geq
  \rc_{\proc(\PP)}(n)$
  holds for all $n \in \NN$ and all
  $\PP$ where $\proc$ is defined.
\end{definition}

In our framework, processors are applied repeatedly until the extraction of a
concrete lower bound is straightforward.  We first
show how to \emph{accelerate} a simple loop \(\alpha\) to a rule which is
equivalent to applying \(\alpha\) multiple times (according to a metering
function for \(\alpha\)).  In
\Cref{subsec:its-simplification} we will show that
the resulting integer program can be simplified by
\emph{chaining} subsequent rules which may result in new simple loops.  Moreover, we describe a simplification strategy which
alternates these steps repeatedly.  In this way, we eventually obtain a
\emph{simplified} program without loops which directly gives rise to a concrete
lower bound.
\Cref{subsec:its-contraction} only deals with simple loops
and in \Cref{subsec:its-simplification}, we consider
arbitrary tail-re\-cur\-sive rules
  $\alpha$ of the form $f(\vect{x}) \tox{}{}{}{} g(\vect{t}) \constr{\phi}$
with
  $\update(\alpha) = \{\vect{x} / \vect{t}\}$ and
  $\dest(\alpha) = g$. We extend our approach to arbitrary rules in \Cref{sec:non-linear}.

First, consider a simple loop \(\alpha\) with $\update(\alpha) = \mu$ and
$\cost(\alpha) = \cc$.  To accelerate $\alpha$, we compute its \emph{iterated}
update and cost, i.e., a substitution $\mu_\itr$ that is a
closed form
of $\mu^\tv$ and an
arithmetic expression \(\cc_\itr\) that is an
under-approximation of $\sum_{i=0}^{\tv-1}\cc\mu^i$ for a fresh
temporary variable \(\tv\), where
 $\mu^i$ again denotes the $i$-fold composition $\mu \circ \ldots \circ \mu$ of the substitution
  $\mu$. So
$\mu_\itr = \mu^\tv$ and $\cc_\itr \leq \sum_{i=0}^{\tv-1}\cc\mu^i$ must hold for all
$\tv > 0$.
If $\charfun{\psi} \cdot \bound$ is a metering function for $\alpha$, then we add
the \emph{accelerated} rule
\[
  \lhs(\alpha) \tox{\cc_\itr}{}{} \lhs(\alpha)\,\mu_\itr \constr{\guard(\alpha) \land \psi \land 0 < \tv < \bound + 1}
\]
to the program.  It summarizes \(\tv\) iterations of $\alpha$, where \(\tv\) is
positive\footnote{\label{coverage}The accelerated rule does not cover the case that
$\alpha$ is not applied at all, i.e., it does not cover the case where $\tv = 0$.
 We excluded this case in order to ease the inference of the closed
  forms $\mu_\itr$ and $\cc_\itr$. To see this, consider a loop $\Ff(x) \to \Ff(0)$ with $x\mu = 0$. Here,  we
would get
$x\mu_\itr = 0$ if $\tv > 0$ and $x\mu_\itr  = x$ for $\tv = 0$. Hence, even in such simple
examples it would be difficult to express the iterated update in closed form when considering
the case $\tv = 0$ as well.}
and bounded by $\lceil \bound \rceil$.
Note that $\mu_\itr$ and $\cc_\itr$ may also contain operations which are not allowed
in the input program like division and exponentiation (i.e., in this way we can also infer
non-polynomial bounds).

For the program variables $\vect{x} =(x_1,\ldots,x_k)$, the iterated update $\mu_\itr$ is computed by solving the
recurrence equations \(x^{(1)} = x\mu\) and \(x^{(\tv+1)} =
x\mu\,\{x_1/x_1^{(\tv)}, \ldots, x_k/x_k^{(\tv)}\}\) for all \(x \in \vect{x}\).  So
for the rule \(\alpha_1\) from \Cref{fig:its-leading-ex} we get the
recurrence equations \(x^{(1)}=x-1\), \(x^{(\tv_1+1)}=x^{(\tv_1)}-1\),
\(y^{(1)}=y + x\), and \(y^{(\tv_1+1)}=y^{(\tv_1)} + x^{(\tv_1)}\).  Usually,
the resulting equations can easily be solved using state-of-the-art recurrence solvers, e.g.,
\cite{purrs,mathematica,maple}.
In our example, we obtain the closed forms
\[\begin{array}{l}
x\mu_\itr = x^{(\tv_1)} = x - \tv_1 \quad \text{and} \quad y\mu_\itr = y^{(\tv_1)} = y + \tv_1 \cdot x - \frac{1}{2}
\tv_1^2 + \frac{1}{2} \tv_1.
\end{array}
\]  While $y\mu_\itr$ contains rational
coefficients, our approach ensures that $\mu_\itr$ always maps integers to
integers.  Thus, our technique to accelerate loops preserves well-formedness.
We proceed similarly for the iterated cost of a rule, where we may
under-approximate the solution of the recurrence equations \(\cc^{(1)} = \cc\)
and \(\cc^{(\tv+1)} = \cc^{(\tv)} + \cc\,\{x_1/x_1^{(\tv)}, \ldots,
x_k/x_k^{(\tv)}\}\).  For the rule $\alpha_1$ in \Cref{fig:leading_ex_initial}, we get
\(\cc^{(1)} = 1\) and \(\cc^{(\tv_1+1)} = \cc^{(\tv_1)} + 1\) which leads to the
closed form $\cc_\itr = \cc^{(\tv_1)} = \tv_1$.
Hence, when using $\alpha_1$'s metering function $x$, it is accelerated to the following
rule:
\begin{equation}
\label{alpha1-accelerated}
  \Ff_1(x,y,z,u) \; \tox{\tv_1}{}{}{} \; \Ff_1(x-\tv_1,
   y + \tv_1 \cdot x - \tfrac{1}{2}
   \tv_1^2 + \tfrac{1}{2} \tv_1,z,u) \;\; \constr{x > 0 \land
   0 < \tv_1 < x+1}
\end{equation}
Here, the guard can be simplified to $0 < \tv_1 < x+1$. (We will perform such simplifications in all examples to ease readability.)

\begin{theorem}[Loop Acceleration]
  \label{thm:its-acceleration}
  Let $\PP$ be a well-formed integer program with the program variables $\vect{x}$, let \(\alpha \in \PP\) be a simple
  loop with $\update(\alpha) = \mu$ and $\cost(\alpha) = \cc$, let \(\tv\) be a
  fresh temporary variable, and let \(\charfun{\psi} \cdot \bound\) be a metering function for
  \(\alpha\), where $\bound$ is an arithmetic expression.  Moreover, for all $\tv > 0$,
  let \(x\mu_\itr = x\mu^{\tv}\) be valid
    for all \(x
  \in \vect{x}\),
    let $\cc_\itr \leq \sum_{i=0}^{\tv-1}\cc\mu^i$ be valid, and let
  \[
    \PP' = \PP \cup\{ \alpha_\itr \} \quad \text{where $\alpha_\itr$ is the rule} \quad  \lhs(\alpha)
    \tox{\cc_\itr}{}{} \lhs(\alpha)\,\mu_\itr \constr{\guard(\alpha) \land \psi \land 0 <
      \tv < \bound + 1}.
  \]
  Then
    $\PP'$ is well formed and the processor that maps \(\PP\) to \(\PP'\) is
  sound.
\end{theorem}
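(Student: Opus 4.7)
The plan is to establish the two claims separately: well-formedness of $\PP'$, and soundness of the processor that maps $\PP$ to $\PP'$. Since $\PP \subseteq \PP'$, the only new rule to worry about for well-formedness is $\alpha_\itr$, and the only new behavior to match for soundness comes from applications of $\alpha_\itr$.

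For well-formedness, I would fix an integer substitution $\sigma$ with $\sigma \models \alpha_\itr$ and show that $\lhs(\alpha)\mu_\itr\sigma$ is ground over $\ZZ$. The guard of $\alpha_\itr$ contains $0 < \tv$, so $\tv\sigma$ is a positive integer $n$. The hypothesis that $x\mu_\itr = x\mu^\tv$ is valid for every $x \in \vect{x}$ and every $\tv > 0$ then lets me rewrite $\lhs(\alpha)\mu_\itr\sigma = \lhs(\alpha)\mu^n\sigma$. Now I can argue inductively on $n$ that each intermediate substitution $\mu^i\sigma$ is an integer substitution with $\VV(\alpha) \subseteq \dom(\mu^i \circ \sigma)$: the guard of $\alpha_\itr$ implies $\guard(\alpha)$, so well-formedness of $\alpha$ (available because $\PP$ is well formed) gives that $\mu\sigma$ produces integers, and iterating this argument yields the claim.

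For soundness, I would show that every evaluation $\Ff_0(\vect{n}) \tox{\concretecosts}{*}{\PP'} \TTT$ can be mimicked by an evaluation $\Ff_0(\vect{n}) \tox{\concretecosts'}{*}{\PP} \TTT'$ with $\concretecosts' \geq \concretecosts$ (the final configuration $\TTT'$ may differ, but that is irrelevant for $\rc$). I would proceed by induction on the number of $\PP'$-steps, noting that steps using rules of $\PP$ can be reused verbatim, so the only interesting case is a step $\SSS \cup \{\lhs(\alpha)\sigma\} \tox{\cc_\itr\sigma}{}{\alpha_\itr} \SSS \cup \{\lhs(\alpha)\mu_\itr\sigma\}$. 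Here the guard of $\alpha_\itr$ gives $\sigma \models \guard(\alpha) \land \psi$ with $0 < \tv\sigma < \bound\sigma + 1$, so in particular $(\charfun{\psi}\cdot\bound)\sigma = \bound\sigma \geq \tv\sigma > 0$. Applying \Cref{thm:MeteringFunctions} to the metering function $\charfun{\psi}\cdot\bound$ yields an evaluation $\lhs(\alpha)\sigma \tox{}{\lceil\bound\sigma\rceil}{\alpha} \lhs(\alpha)\mu^{\lceil\bound\sigma\rceil}\sigma$; truncating after $\tv\sigma$ steps (which is legal, since $\tv\sigma \leq \lceil\bound\sigma\rceil$ as $\tv\sigma \in \ZZ$ with $\tv\sigma \leq \bound\sigma$) produces an evaluation of length $\tv\sigma$ reaching $\lhs(\alpha)\mu^{\tv\sigma}\sigma = \lhs(\alpha)\mu_\itr\sigma$. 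The accumulated cost of this $\PP$-evaluation is $\sum_{i=0}^{\tv\sigma - 1} \cc\mu^i\sigma$, which by the hypothesis $\cc_\itr \leq \sum_{i=0}^{\tv-1} \cc\mu^i$ is at least $\cc_\itr\sigma$, the cost of the single $\alpha_\itr$-step. Concatenating with the induction hypothesis then yields the desired $\PP$-evaluation of at least the original cost, giving $\rc_\PP(n) \geq \rc_{\PP'}(n)$ for every $n$.

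The main technical obstacle I expect is the careful bookkeeping around $\lceil \bound\sigma \rceil$ versus $\tv\sigma$: \Cref{thm:MeteringFunctions} only delivers a chain of length exactly $\lceil\bound\sigma\rceil$ with the property that each prefix still satisfies $\guard(\alpha)$, so I must verify that any prefix of length $\tv\sigma \leq \lceil\bound\sigma\rceil$ is a legal sub-evaluation and arrive at precisely the term $\lhs(\alpha)\mu_\itr\sigma$. A secondary subtlety is ensuring that the closed form $\mu_\itr$ (potentially involving non-integer coefficients such as $\frac{1}{2}\tv_1^2 + \frac{1}{2}\tv_1$) is safely reinterpreted via $x\mu_\itr = x\mu^\tv$ under the $\tv\sigma \in \NN_{>0}$ hypothesis, which is what makes the well-formedness argument go through despite the enlarged expression language permitted on the right-hand side of $\alpha_\itr$.
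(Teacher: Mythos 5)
Your proposal is correct and follows essentially the same route as the paper: both reduce soundness to simulating a single $\alpha_\itr$-step by the length-$\tv\sigma$ prefix of the evaluation chain supplied by \Cref{thm:MeteringFunctions} (using $\sigma \models \psi$ to identify $(\charfun{\psi}\cdot\bound)\sigma$ with $\bound\sigma$ and the hypotheses on $\mu_\itr$ and $\cc_\itr$ to match the target term and bound the cost), and both obtain well-formedness of $\alpha_\itr$ from well-formedness of $\alpha$ along that same chain. The only caveat is that your standalone well-formedness induction needs $\mu^i \circ \sigma \models \guard(\alpha)$ at each intermediate step, which comes from the metering-function chain rather than from $\sigma \models \guard(\alpha)$ alone — the paper sidesteps this by deriving well-formedness directly from the simulation.
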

\begin{proof}
  Let $\sigma$ be an integer substitution  such that $\sigma \models
 \alpha_\itr$, i.e., we have
\begin{equation}
  \label{eq:accelerated-rule-application}
  \lhs(\alpha)\,\sigma
  \tox{\cc_\itr\sigma}{}{\alpha_\itr} \lhs(\alpha)\,\mu_\itr\,\sigma.
\end{equation}
  Note that $\sigma
  \models 0 < \tv < \bound + 1$ implies $\sigma \models 0 < \tv \leq \lceil \bound \rceil$
  and
  we have $\bound \sigma
 = (\charfun{\psi} \cdot \bound) \, \sigma$, because  $\sigma \models \guard(\alpha_\itr)$ implies
 $\sigma \models \psi$
and thus, $\charfun{\psi}\,\sigma = 1$.
  Since $\sigma \models \guard(\alpha_\itr)$ also implies
  $\sigma \models \charfun{\psi} \cdot \bound \geq 0$ and $\charfun{\psi} \cdot
  \bound$ is a metering function for $\alpha$, by \Cref{thm:MeteringFunctions}
  there is the following evaluation of length $\tv \, \sigma$:
  \[
  \lhs(\alpha)\,\sigma \tox{}{}{\alpha} \lhs(\alpha)\,\mu\,\sigma
    \tox{}{}{\alpha} \ldots \tox{}{}{\alpha}
    \lhs(\alpha)\,\mu^{\tv\, \sigma}\,\sigma
    \]
    where $\mu^k \circ \sigma \models \guard(\alpha)$ for all $0 \leq k < \tv \,
    \sigma$. For that reason, the costs of the rule applications are $\cc\sigma$,
    $\cc\mu\sigma$, \ldots, $\cc\mu^{\tv\,\sigma\, - \, 1}\sigma$, i.e.,
    \begin{equation}
    \label{eq:its-acceleration-1}
    \lhs(\alpha)\,\sigma \tox{\cc\sigma}{}{\alpha} \lhs(\alpha)\,\mu\,\sigma
    \tox{\cc\mu\sigma}{}{\alpha} \ldots \tox{\cc\mu^{\tv\,\sigma\, - \, 1}\sigma}{}{\alpha}
    \lhs(\alpha)\,\mu^{\tv\, \sigma}\,\sigma.
    \end{equation}

    For soundness, it suffices to show that every evaluation
    step \eqref{eq:accelerated-rule-application} with $\alpha_\itr$ can be simulated using a sequence of evaluation steps
    with $\alpha$ where the costs are at least the same.
    As $\alpha$ is well formed, this also implies well-formedness of $\alpha_\itr$.
    By definition of $\mu_\itr$ and
    $\cc_\itr$,  $x\mu_\itr = x\mu^\tv$ for all $x \in \vect{x}$ and $\cc_\itr \leq
    \sum_{i=0}^{\tv-1} \cc\mu^i$ are valid.  Thus, the evaluation \eqref{eq:its-acceleration-1}
indeed simulates the evaluation step \eqref{eq:accelerated-rule-application} with $\alpha_\itr$.
\end{proof}

Note that
\Cref{thm:its-acceleration} shows
that when using
conditional metering
functions of the form $\charfun{\psi} \cdot \bound$ (which we infer via
\Cref{lem:irrelevant-constraints}) to accelerate loops, the characteristic
function $\charfun{\psi}$ is not needed in the accelerated rule. Instead, the condition $\psi$ is simply added to its
guard. Thus, whenever the accelerated rule is
applicable, then $\charfun{\psi} \cdot \bound$ is equal to $\bound$ and hence,
$\lceil \bound \rceil$ under-estimates the number of consecutive
iterations of the original loop.
Clearly, \Cref{thm:its-acceleration} is also applicable if the metering function is not
conditional, i.e., if it
is an ordinary arithmetic expression (by choosing $\psi = \true$).

The following example illustrates that
the iterated update and cost may also contain non-poly\-no\-mial arithmetic, which may lead
to exponential bounds.

\begin{example}[Non-Polynomial Arithmetic due to Loop Acceleration]
  Consider  the program with the rule
$\Ff_0(x,y)  \tox{0}{}{} \Ff(x,y)$ and the
  simple loop $\Ff(x,y) \tox[-1pt]{y}{}{} \Ff(x-1,2y)
  \constr{x>0}$. Here, the update is $\mu = \{ x/x-1, \; y/2y \}$ and hence, the resulting iterated update is
  $\mu_\itr = \{ x/x-\tv, \; y/2^{\tv} \cdot y\}$. Moreover, the cost is $\cc = y$ and the iterated
  cost is
  $\cc_\itr = \sum_{i=0}^{\tv-1} 2^i y = (2^\tv -1) \cdot y$. Thus, both
  the iterated update and the iterated cost are exponential. Accelerating the simple loop
 via \Cref{thm:its-acceleration} with the
  metering function $x$
  yields $\Ff(x,y) \tox[-1pt]{(2^\tv - 1) \cdot y}{}{}
  \Ff(x-\tv,2^{\tv} \cdot y) \constr{0 < \tv < x+1}$, where we again simplified
  the guard $x>0 \land 0 < \tv < x+1$
to $0 < \tv < x+1$.
  Using this accelerated rule, our approach can infer an exponential lower bound for the
  program's runtime complexity.
\end{example}

Recall that the fresh variable $\tv$ represents the
number of loop iterations which are summarized by an accelerated rule. While $\tv$  ranges
over the integers, its upper bound $b + 1$ can be rational, as the following
example shows.

\begin{example}[Non-Integer Metering Functions]
  \label{ex:its-rational-metering}
  \Cref{thm:its-acceleration} also allows bounds that do not map to
  the integers.
  Consider the program
  \[
    \PP = \{\Ff_0(x) \tox{0}{}{} \Ff(x), \, \alpha\} \quad \text{where $\alpha$ is the rule} \quad \Ff(x) \tox{1}{}{} \Ff(x-2) \constr{0 < x}.
    \]
    Clearly, $\frac{1}{2} x$ is a metering function for $\alpha$, as $\neg (0 < x)
    \implies \frac{1}{2}  x \leq 0$ and
    $0 < x
    \implies \frac{1}{2} \, (x-2) \leq \frac{1}{2} x - 1$\linebreak are valid.
    For $\mu = \{ x / x-2 \}$ we have $\mu_\itr = \mu^\tv = \{ x / x - 2 \,  \tv \}$ and
    we choose $c_\itr = \sum_{i=0}^{\tv-1} 1 = \tv$.
Hence, accelerating \(\alpha\) with the metering function \(\frac{1}{2}  x\)
  yields
  \begin{equation}
    \label{eq:its-rational-metering}
   \begin{array}{rcll} \Ff(x) &\tox{\tv}{}{}& \Ff(x - 2 \, \tv) & \constr{0 < \tv <
       \frac{1}{2}  x + 1}.
     \end{array}
  \end{equation}
  Note that \(0 < \tv < \frac{1}{2} x+1\) implies \(0 < x\) as $\tv$ ranges over \(\ZZ\).
  Hence, \(0 < x\) can be omitted in the resulting
  guard.
\end{example}

If a (non-terminating) simple loop has the metering function $\charfun{\phi} \cdot \tv$
where $\tv$ is a fresh temporary variable,
then the upper bound $b + 1 = \tv + 1$ on the number of summarized loop
iterations can take arbitrary values.

\begin{example}[Unbounded Loops Continued]
  \label{ex:its-unbounded2}
  In \Cref{ex:its-unbounded}, $\charfun{0 < x} \cdot \tv$ is a metering function for
  \(\alpha: \, \Ff(x,y) \tox[-1pt]{y}{}{} \Ff(x+1,y) \constr{0 < x}\).
  The resulting accelerated rule $\alpha_\itr$ is
  \[
    \Ff(x, y) \tox[-1pt]{\tv_1 \cdot y}{}{} \Ff(x + \tv_1, y) \constr{0 < x \land 0 < \tv_1 < \tv + 1}.
  \]
  Since $\tv$ does not have any upper bound, the value of $\tv_1$ is not bounded
  by the values of the program variables $x$ and $y$. Thus, the condition of the rule
  could be replaced by $0 < x \land 0 < \tv_1$, i.e., we obtain
\setcounter{eq:unboundedAcceleratedCtr}{\value{equation}}
  \begin{equation}
    \label{eq:unboundedAccelerated}
 \Ff(x, y) \tox[-1pt]{\tv_1 \cdot y}{}{} \Ff(x + \tv_1, y) \constr{0 < x \land 0 < \tv_1}.
      \end{equation}
\end{example}

\bigskip

After accelerating a simple loop $\alpha$ according to \Cref{thm:its-acceleration}
using the metering function $\charfun{\psi} \cdot \bound$,
we
eliminate
  the fresh variable $\tv$ by instantiating it with $\bound$,  provided
that $\bound$ maps to $\ZZ$
(i.e., for every integer substitution $\sigma$ with
$\sigma \models \alpha\{\tv /
  \bound\}$
we have $\bound\sigma \in \ZZ$).
The reason is that we want to keep the number of variables small for the sake
of efficiency. However, this is just a heuristic which can also lead to worse results (e.g., if
there is a non-terminating run where the original non-accelerated loop must not be applied more than
$\bound - 1$ times after each other).
If we cannot verify that  $\bound\sigma \in \ZZ$ holds for every 
integer substitution $\sigma$ with
$\sigma \models \alpha\{\tv /
\bound\}$, then we do not eliminate $\tv$, but
keep the inequation $0 < \tv < b+1$ in the accelerated rule.

We apply the following processor for the
instantiation of temporary variables.

\begin{theorem}[Instantiation]
  \label{thm:its-instantiation}
  Let $\PP$ be a well-formed
  integer program, let \(\alpha \in \PP\),
  let \(\tv \in \TV(\alpha)\), let $\bound$ be an arithmetic expression such
  that for every integer substitution $\sigma$ with
$\sigma \models \alpha\{\tv /
  \bound\}$
we have $\bound\sigma \in \ZZ$, and let \(\PP' = \PP \cup \{\alpha\{\tv / \bound\}\}\).
  Then $\PP'$ is well formed and the processor mapping \(\PP\) to
  \(\PP'\) is sound.
\end{theorem}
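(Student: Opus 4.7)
The plan is to prove the two conclusions separately: first well-formedness, and then soundness via the stronger fact that the evaluation relations of $\PP$ and $\PP'$ actually coincide. The key technical lemma throughout is that applying the rule $\alpha\{\tv/\bound\}$ with an integer substitution $\sigma$ is equivalent to applying $\alpha$ with the ``lifted'' integer substitution $\sigma' = \{\tv/\bound\} \circ \sigma$, which by definition of composition agrees with $\sigma$ on all variables except $\tv$, and maps $\tv$ to $\bound\sigma$. The hypothesis $\bound\sigma \in \ZZ$ (whenever $\sigma \models \alpha\{\tv/\bound\}$) is exactly what is needed to guarantee that $\sigma'$ is itself an integer substitution.

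For well-formedness, I would fix any right-hand side term $f(t_1,\ldots,t_k) \in \rhs(\alpha\{\tv/\bound\})$ and any integer substitution $\sigma$ with $\sigma \models \alpha\{\tv/\bound\}$. By definition of substitution application, $f(t_1,\ldots,t_k) = f(t'_1,\ldots,t'_k)\{\tv/\bound\}$ for some term $f(t'_1,\ldots,t'_k) \in \rhs(\alpha)$. Passing to $\sigma'$ as above, one checks that $\sigma' \models \alpha$: its values on $\VV(\alpha) \setminus \{\tv\}$ agree with $\sigma$, and $\guard(\alpha)\sigma' = \guard(\alpha)\{\tv/\bound\}\sigma$ holds by the composition identity, which is $\true$ by assumption. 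Since $\PP$ is well formed, $t'_i\sigma' \in \ZZ$. But $t_i\sigma = t'_i\{\tv/\bound\}\sigma = t'_i\sigma'$, so $t_i\sigma \in \ZZ$, as required. All other rules in $\PP'$ are inherited from $\PP$ and are already well formed.

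For soundness, I would prove the sharper statement ${\tox{}{}{\PP'}} = {\tox{}{}{\PP}}$, which immediately yields $\dht_{\PP'} = \dht_\PP$ and hence $\rc_\PP(n) = \rc_{\PP'}(n)$ for all $n$. The inclusion ${\tox{}{}{\PP}} \subseteq {\tox{}{}{\PP'}}$ is trivial since $\PP \subseteq \PP'$. For the converse, any step $\SSS \tox{k}{}{\PP'} \TTT$ using a rule $\beta \in \PP$ is already a $\PP$-step; the only new case is $\beta = \alpha\{\tv/\bound\}$. Given such a step with integer substitution $\sigma$, I pass to $\sigma' = \{\tv/\bound\} \circ \sigma$. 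By the hypothesis, $\sigma'$ is again an integer substitution and, as above, $\sigma' \models \alpha$. The composition identity then shows simultaneously that $\lhs(\alpha)\sigma' = \lhs(\beta)\sigma$, $\rhs(\alpha)\sigma' = \rhs(\beta)\sigma$, and $\cost(\alpha)\sigma' = \cost(\beta)\sigma = k$, so the same step is realized by $\alpha \in \PP$ using $\sigma'$.

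The main obstacle is simply the bookkeeping around domains and substitution composition: one must verify that $\VV(\alpha) \subseteq \dom(\sigma')$ (which uses $\VV(\bound) \subseteq \dom(\sigma)$, guaranteed by $\VV(\alpha\{\tv/\bound\}) \subseteq \dom(\sigma)$), and one must be careful that $\tv \in \TV(\alpha)$ indeed occurs neither in $\lhs(\alpha)$ (which only contains program variables) nor becomes confused with other temporary variables after the substitution. Once the composition identity $t\{\tv/\bound\}\sigma = t\sigma'$ is stated cleanly, both the well-formedness check and the simulation of single evaluation steps reduce to direct applications of it.
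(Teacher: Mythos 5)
Your proposal is correct and follows essentially the same route as the paper's proof: both construct the lifted integer substitution $\sigma'$ that extends $\sigma$ by sending $\tv$ to $\bound\sigma \in \ZZ$, and both use the identity $t\{\tv/\bound\}\sigma = t\sigma'$ to transfer well-formedness from $\alpha$ and to simulate each step of $\alpha\{\tv/\bound\}$ by a step of $\alpha$ with the same cost. Your phrasing of soundness as the equality ${\tox{}{}{\PP'}} = {\tox{}{}{\PP}}$ is a slightly stronger (and still correct) packaging of the same one-step simulation argument the paper gives.
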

\begin{proof}
  To
  show the soundness of the processor, let $\sigma$ be an integer substitution with $\sigma \models
\alpha\{\tv /
  \bound\}$ and $\bound\sigma = m \in \ZZ$. Then
 let $\sigma'$ be the integer substitution with
  $\dom(\sigma') = \dom(\sigma) \cup \{ \tv \}$, $\sigma'(\tv) = m$, and $\sigma'(x) = \sigma(x)$ for all
 $x \in \dom(\sigma) \setminus \{ \tv \}$. Clearly,
 $\sigma \models \guard(\alpha\{\tv / \bound\})$ iff
 $\sigma' \models \guard(\alpha)$
  and moreover,
 $\cost(\alpha\{\tv / \bound\})\sigma = \cost(\alpha)\sigma'$.
Thus,
\[ \lhs(\alpha\{\tv / \bound\})\sigma   \tox{k}{}{\alpha\{\tv / \bound\}} \rhs(\alpha\{\tv /
\bound\})\sigma
\quad \text{implies} \quad
\lhs(\alpha)\sigma'   \tox{k}{}{\alpha} \rhs(\alpha)\sigma'.\]
This shows that every step with $\alpha\{\tv / \bound\}$ can
also be done with $\alpha$, because we have
$\lhs(\alpha\{\tv / \bound\})\sigma = \lhs(\alpha)\{\tv / \bound\}\sigma = \lhs(\alpha)\{\tv / m\}\sigma = \lhs(\alpha)\sigma'$
(and $\rhs(\alpha\{\tv / \bound\})\sigma = \rhs(\alpha)\sigma'$ can be derived analogously).

  To show that $\PP'$ is well formed, recall that
$\sigma \models \guard(\alpha\{\tv / \bound\})$ iff
  $\sigma' \models \guard(\alpha)$. If $\rhs(\alpha)$ contains $f(t_1,\ldots,t_k)$, then
 $t_i\sigma' \in \ZZ$ holds for all $1 \leq i \leq k$
  by
well-formedness of $\PP$. As $t_i\sigma' = t_i
\{\tv/b\} \sigma$, this implies well-formedness of $\PP'$.
\end{proof}

\begin{example}[Instantiation of Fresh Temporary Variables]\label{ex:loop_elimination}
For
 our example from \Cref{fig:its-leading-ex}, accelerating
 $\alpha_1$ results in the rule \eqref{alpha1-accelerated}.
By instantiating its temporary variable $\tv_1$ with the metering function $x$,
the above processor yields
\[\alpha_{\overline{1}}\!: \;  \Ff_1(x,y,z,u) \; \tox{x}{}{}{} \; \Ff_1(0,
   y +  \tfrac{1}{2} x^2 + \tfrac{1}{2} x,z,u) \;\; \constr{x > 0}.\]
\end{example}

In  \cite[Theorem 10]{ijcar16}, we presented a processor
which can extend the guard of a rule by arbitrary conjuncts. This processor could
be used  as an
 alternative to \Cref{thm:its-instantiation}, because instead of instantiating
$\tv$, one could add the constraint ``$\tv = \bound$'' to the guard of
$\alpha$.  In practice, however, it is preferable to instantiate $\tv$ in order to
keep the number of variables as small as possible.

If we cannot apply \Cref{thm:its-acceleration}  for a simple loop $\alpha$, because our
implementation fails to solve the recurrence equations needed to compute
the
closed forms \(\mu_\itr\) or \(\cc_\itr\), or because it cannot
find
a useful metering function,  then we
can simplify \(\alpha\) by eliminating temporary variables.  To do so, we fix
their values via \Cref{thm:its-instantiation}.  As we are
interested in witnesses for maximal computations, we use a heuristic that sets
\(\tv\) to \(a\) for temporary variables \(\tv\) where the arithmetic
expression $a$ is a minimal upper or a maximal lower bound on \(\tv\)'s values,
i.e., \(\guard(\alpha)\) implies \(\tv \leq a\) but not $\tv \leq a - 1$, or
\(\guard(\alpha)\) implies \(\tv \geq a\) but not $\tv \geq a + 1$.  This
elimination of temporary variables is repeated until we find constraints which
allow us to apply loop acceleration.

\begin{example}[Instantiation of Other Temporary Variables]\label{ex:loop_elimination2}
  For the rule $\alpha_4$ from  \Cref{fig:its-leading-ex},\linebreak \(\guard(\alpha_4)\) contains the
  constraint $\tv > 0$.  So $\guard(\alpha_4)$ implies the bound \(\tv \geq 1\)
  since $\tv$ must be instantiated by an integer.  Hence, we instantiate the rule
  $\alpha_4$ by replacing $\tv$ with \(1\).  Thus, the update $\{u / u -
  \tv\}$ of the instantiated rule $\alpha'_4$ becomes $\{u / u - 1\}$.
  Hence, now $u$ is a metering function
  for $\alpha'_4$ (whereas it was not a metering function for $\alpha_4$, as $u$'s value could
  decrease by more than 1 in each application of $\alpha_4$). Thus, \(\alpha'_4\) can be accelerated
  similarly to \(\alpha_1\), resulting in the rule
  \[
  \Ff_3(x,y,z,u) \tox{\tv_4}{}{}{} \Ff_3(x,y,z,u - \tv_4) \constr{0 < \tv_4 < u + 1}.
  \]
  Now the temporary variable
 $\tv_4$ that results from loop acceleration can be eliminated by instantiating it with
  the metering function $u$. In this way, we obtain
\[
  \alpha_{\overline{4}}\!: \;\Ff_3(x,y,z,u) \; \tox{u}{}{}{} \; \Ff_3(x,y,z,0) \;\; \constr{u >
    0}.
  \]
\end{example}

If $\bound$ is a polynomial, then we can use the following
generalization of an observation from \cite{int-polys} to check the side condition of
\Cref{thm:its-instantiation} that $\bound$ needs to map to $\ZZ$.
Note that this check does not take
the guard of the rule into account. So for the rule
\[
  \Ff(x, y) \to \Ff(x - 2, y - 1) \constr{x > 0 \land x = 2 \cdot y}
\]
with the metering function $\frac{x}{2}$ it would fail to recognize that $\sigma\left(\frac{x}{2}\right)$ is an integer for every model $\sigma$ of $x > 0 \land x = 2 \cdot y$.

\begin{lemma}[Polynomials Mapping to $\ZZ$]
  \label{lem:zz-polys}
  Let
  $f: \ZZ^k \to \RR$, where $f(x_1,\ldots,x_k)$ is a polynomial over the variables $x_1,\ldots,x_k$ with degrees
  $d_1,\ldots,d_k$ w.r.t.\ $x_1,\ldots,x_k$, respectively (i.e.,
for each $1 \leq i \leq k$,
  $f(x_1,\ldots,x_k)$ can be rearranged to the form $\sum_{j=0}^{d_i} p_j  \cdot x_i^j$
where
each $p_j$ is a polynomial over the variables $x_1,\ldots,x_{i-1},x_{i+1},\ldots,x_k$).
  If there are numbers $n_1,\ldots,n_k \in \ZZ$ such that $f(m_1,\ldots, m_k) \in \ZZ$ for
  all $m_1,\ldots,m_k \in \ZZ$ with $n_i \leq m_i \leq n_i + d_i + 1$, then $\img(f)
  \subseteq \ZZ$, i.e., then we have $f(m_1,\ldots,m_k) \in \ZZ$ for all $m_1,\ldots,m_k \in \ZZ$.
\end{lemma}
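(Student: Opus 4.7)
The plan is to reduce the statement to the classical one-variable fact — that a polynomial $p$ of degree $\leq d$ which is integer-valued at $d+1$ consecutive integers is integer-valued on all of $\ZZ$ — and then induct on the number of variables~$k$.

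\emph{One-variable case.} I would first prove the single-variable statement by induction on the degree $d$. The case $d = 0$ is trivial. For $d \geq 1$, consider the forward difference $\Delta p(x) = p(x+1) - p(x)$, which is a polynomial of degree at most $d-1$. If $p(n), p(n+1), \ldots, p(n+d) \in \ZZ$, then $\Delta p(n), \Delta p(n+1), \ldots, \Delta p(n+d-1) \in \ZZ$ as differences of integers, so by the inductive hypothesis $\Delta p$ is integer-valued on all of $\ZZ$. The relations $p(m+1) = p(m) + \Delta p(m)$ and $p(m-1) = p(m) - \Delta p(m-1)$ then propagate integrality from $p(n)$ to every integer argument.

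\emph{Induction on $k$.} For the multivariable statement, I would induct on $k$, with $k = 1$ handled by the previous paragraph (using only $d_1 + 1$ of the $d_1 + 2$ supplied values). For the induction step, fix an integer $m_1 \in \{n_1, \ldots, n_1 + d_1 + 1\}$ and consider the partial evaluation
\[
  g_{m_1}(x_2, \ldots, x_k) \;=\; f(m_1, x_2, \ldots, x_k),
\]
which is a polynomial of degree at most $d_i$ in each remaining $x_i$. By the hypothesis on $f$, the function $g_{m_1}$ is integer-valued on $\prod_{i=2}^{k}\{n_i, n_i+1, \ldots, n_i + d_i + 1\}$, so the inductive hypothesis for $k-1$ variables yields $g_{m_1}(\ZZ^{k-1}) \subseteq \ZZ$. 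Now pick any $(m_2, \ldots, m_k) \in \ZZ^{k-1}$ and look at the single-variable polynomial $h(x_1) = f(x_1, m_2, \ldots, m_k)$; it has degree at most $d_1$ and is integer-valued at the $d_1 + 2$ consecutive integers $n_1, \ldots, n_1 + d_1 + 1$ by what we just showed. The one-variable case then gives $h(\ZZ) \subseteq \ZZ$, which is precisely the claim $f(m_1, \ldots, m_k) \in \ZZ$ for all $m_1 \in \ZZ$.

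\emph{Main obstacle.} No step is genuinely hard. The only delicate ingredient is the one-variable base case, which is classical — equivalent to the statement that the binomial polynomials $\binom{x}{0}, \binom{x}{1}, \ldots$ form a $\ZZ$-basis of the integer-valued polynomials — and the finite-difference induction above is routine. The multivariable reduction is then a clean ``freeze coordinates and swap roles'' argument, made especially smooth by the fact that the hypothesis supplies $d_i + 2$ consecutive integer arguments in each coordinate, one more than the one-variable lemma strictly needs.
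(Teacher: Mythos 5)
Your proof is correct, but it is organized differently from the paper's. The paper runs a single induction on the \emph{total} degree $d=\sum_{i=1}^k d_i$, working in the multivariate setting throughout: it forms the forward difference $g = f(\ldots,x_j+1,\ldots)-f(\ldots,x_j,\ldots)$ in one coordinate with $d_j>0$, uses the binomial theorem to verify that this drops the degree in $x_j$ by one without raising the others, applies the induction hypothesis to conclude $\img(g)\subseteq\ZZ$, and finally propagates integrality from the single base point $f(n_1,\ldots,n_k)$ along unit steps in every coordinate. You instead do an outer induction on the number of variables $k$, freezing $x_1$ at each of the $d_1+2$ supplied values to get $(k-1)$-variable polynomials that the induction hypothesis shows are integer-valued everywhere, and then finishing each fiber with the classical one-variable lemma (itself proved by the same difference-operator induction on degree). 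Both arguments have the same engine --- forward differences reduce degree and preserve integrality --- but your decomposition confines the difference-operator bookkeeping to the univariate case, which avoids the paper's explicit binomial-theorem computation tracking the multivariate degree vector; the paper's version, in exchange, needs only one induction and never has to worry about degrees possibly dropping under partial evaluation (a point your argument handles implicitly, since having integer values on a larger box than the sub-polynomial strictly requires is harmless). Both are complete and elementary.
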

\begin{proof}
  We use induction on $d = \sum_{i=1}^k d_i$.
  If $d = 0$, then $f$ is a constant and thus the claim is trivial.
  If $d> 0$, then there exists a $1 \leq j \leq k$ with $d_j > 0$. Then $g(x_1,\ldots,x_k)
  = f(x_1,\ldots,x_{j-1},x_{j}+1,x_{j+1},\ldots,x_k) - f(x_1,\ldots,x_k)$ is a polynomial
  whose degree w.r.t.\ $x_j$ is $d_j-1$ and whose degree w.r.t.\ all $x_i$ with $i \neq j$
  is at most $d_i$.
  To see this, note that all monomials that do not contain $x_j$ vanish in $g(x_1,\ldots,x_k)$.
  Thus, $g(x_1,\ldots,x_k)$ is a finite sum of expressions of the form $m \cdot (x_j +
  1)^{e} \cdot p - m \cdot x_j^{e} \cdot p$ where $m \in \RR$, $e \leq d_j$, and $p$ is a
  product of $x_1,\ldots,x_{j-1},x_{j+1},\ldots,x_k$ where each $x_i$, $i \neq j$, occurs
  at most $d_i$ times.
  We get:
  \[
    \begin{array}{lll}
      &m \cdot (x_j + 1)^{e} \cdot p - m \cdot x_j^{e} \cdot p\\
      =& m \cdot \left( \sum_{i=0}^{e}\binom{e}{i} \cdot x_j^{i}\right) \cdot p - m \cdot
      x_j^{e} \cdot p & \text{by the Binomial theorem} \\
      =& m \cdot \left( \binom{e}{e} \cdot x_j^e + q \right) \cdot p - m \cdot x_j^{e}
      \cdot p & \text{where } q \text{ is a univariate polynomial over } x_j\\
      &&\text{whose degree is smaller than } e\\
      =& m \cdot (x_j^e + q ) \cdot p - m \cdot x_j^{e} \cdot p & \text{as } \binom{e}{e} = 1\\
      =& m \cdot q \cdot p
    \end{array}
  \]

  Moreover,
  since $f(m_1,\ldots,m_j+1, \ldots, m_k) \in \ZZ$ and
  $f(m_1,\ldots,m_j, \ldots, m_k)\in \ZZ$ for all $m_1,\ldots,\linebreak m_k \in \ZZ$ with $n_j \leq
  m_j \leq n_j + d_j$ and $n_i \leq m_i \leq n_i + d_i + 1$ if $i \neq j$, we also have
$g(m_1,\ldots,m_k) \in \ZZ$ for these $m_1,\ldots,m_k$.
  Thus, by the induction hypothesis, we obtain $\img(g) \subseteq \ZZ$.
  This means that we have $f(x_1,\ldots,x_{j-1},x_{j}+1,x_{j+1},\ldots,x_k) -
  f(x_1,\ldots,x_k) \in \ZZ$ for all $x_1,\ldots,x_k \in \ZZ$.
Since this construction can be done for every $1 \leq j \leq k$ with $d_j > 0$, we have
   $f(x_1,\ldots,x_{j-1},x_{j}+1,x_{j+1},\ldots,x_k) -
f(x_1,\ldots,x_k) \in \ZZ$
\emph{for all $1 \leq j \leq k$} and  all $x_1,\ldots,x_k \in \ZZ$.
Thus, we also have   $f(x_1,\ldots,x_{j-1},x_{j}-1,x_{j+1},\ldots,x_k) -
f(x_1,\ldots,x_k) \in \ZZ$
for all $1 \leq j \leq k$ and  all $x_1,\ldots,x_k \in \ZZ$.
  Since we also have $f(n_1,\ldots,n_k) \in \ZZ$ for some numbers $n_1,\ldots,n_k \in \ZZ$, this proves $\img(f) \subseteq
  \ZZ$.
\end{proof}

So if $\bound$ is a polynomial, then it suffices to check if instantiating the variables in $\bound$
by finitely many integers always results in an integer. More precisely, if the
polynomial $\bound$ contains the variables $x_1,\ldots,x_k$ of degrees $d_1,\ldots,d_k$,
respectively, then
we only check if
the polynomial maps all arguments from $\{0,\ldots,d_1+1\} \times \ldots \times \{0,\ldots,d_k+1\}$
to integers.
So we choose $n_i = 0$ for each $n_i$ from
\Cref{lem:zz-polys}. This is not a restriction, because if there is some other $n_i'$
such that $f(m_1,\ldots, m_k) \in \ZZ$ for
  all $m_1,\ldots,m_k \in \ZZ$ with $n_i' \leq m_i \leq n_i' + d_i + 1$, then by 
  \Cref{lem:zz-polys} we have
$\img(f)
  \subseteq \ZZ$, which implies
   $f(m_1,\ldots, m_k) \in \ZZ$ for
  all $m_1,\ldots,m_k \in \ZZ$ with $n_i = 0  \leq m_i \leq  d_i + 1 = n_i + d_i + 1$.

For instance, to check that  the polynomial $\frac{1}{2} x^2 +
  \frac{1}{2}x$ maps \emph{all} $x \in \ZZ$ to integers, it suffices to check this for just $x
  \in \{0,1,2,3\}$. In contrast, for the polynomial $\frac{x}{2}$, we would check its value for $x \in
  \{0,1,2\}$ and determine that it does not always yield an integer.

Thus, one can implement \Cref{thm:its-instantiation} using
\Cref{lem:zz-polys} (but, of course, one may also incorporate further sufficient
criteria). Similarly, as mentioned before,
the criterion of \Cref{lem:zz-polys} can also be used to check well-formedness of the
integer program if we permit non-integer constants in the
initial program.

To simplify the program, we delete the original
rules after instantiating or accelerating them.
If acceleration of a rule $\alpha$ still fails after eliminating all temporary variables
by instantiating $\alpha$ repeatedly, then $\alpha$ is removed completely. So in the end,
we just keep simple loops that have been accelerated. The following theorem shows that
deleting rules is always sound.

\begin{theorem}[Deletion]
  \label{thm:its-deletion}
  Let $\PP$ be a well-formed integer program, let \(\alpha \in \PP\), and let \(\PP' =
  \PP \setminus \{\alpha\}\).  Then $\PP'$ is well formed and the processor
  mapping \(\PP\) to \(\PP'\) is sound.
\end{theorem}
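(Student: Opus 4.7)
The plan is to establish the two required properties separately, each by a short monotonicity argument based on the observation that $\PP' \subseteq \PP$.

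For well-formedness, I would note that well-formedness is defined per rule: an integer program is well formed iff each of its rules is well formed. Since every rule of $\PP' = \PP \setminus \{\alpha\}$ already occurs in the well-formed program $\PP$, each such rule is well formed, and hence so is $\PP'$.

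For soundness, the key observation is that the evaluation relation is monotone in the rule set: if $\SSS \tox{\concretecosts}{}{\PP'} \TTT$ holds via some rule $\beta \in \PP'$, then $\beta \in \PP$ as well, so the same step is valid in $\PP$, giving $\SSS \tox{\concretecosts}{}{\PP} \TTT$. By a straightforward induction on the length $m$, this extends to $\SSS \tox{\concretecosts}{m}{\PP'} \TTT \implies \SSS \tox{\concretecosts}{m}{\PP} \TTT$, and hence to $\tox{\concretecosts}{*}{\PP'} \,\subseteq\, \tox{\concretecosts}{*}{\PP}$. Applying \Cref{def:dh}, the set of cost values over which the supremum defining $\dht_{\PP'}(\SSS)$ is taken is a subset of the one defining $\dht_{\PP}(\SSS)$, so $\dht_\PP(\SSS) \geq \dht_{\PP'}(\SSS)$ for every configuration $\SSS$. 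Taking the supremum over $\SSS = \Ff_0(\vect{n})$ with $\size{\vect{n}} \leq n$ in \Cref{def:runtime complexity} then yields $\rc_\PP(n) \geq \rc_{\PP'}(n)$ for all $n \in \NN$, which is exactly the soundness condition of \Cref{def:Processor}.

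There is no real obstacle here: the argument is essentially bookkeeping about the definitions, since removing rules can only shrink the set of possible evaluations and thereby the supremum over their costs. The only minor care needed is to handle the empty case (zero-step evaluations), which is already covered because $\SSS \tox{0}{*}{\PP'} \SSS$ and $\SSS \tox{0}{*}{\PP} \SSS$ both hold, so $\dht_{\PP'}$ and $\dht_{\PP}$ are both at least $0$, consistent with the already-observed inequality.
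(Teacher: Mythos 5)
Your argument is correct and is essentially the paper's own proof: the paper likewise observes that well-formedness is inherited rule-by-rule and that every evaluation with $\PP \setminus \{\alpha\}$ is also an evaluation with $\PP$, so the supremum defining $\rc_{\PP'}$ can only decrease. You merely spell out the monotonicity and induction steps that the paper leaves implicit.
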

\begin{proof}
  Since $\PP$ is well formed, $\PP'$ is trivially well formed, too.  The
  processor is sound since every evaluation with $\PP \setminus \{ \alpha \}$ is
  also an evaluation with $\PP$.
\end{proof}

\begin{example}[Ex.\ \ref{ex:loop_elimination} and \ref{ex:loop_elimination2} Continued]\label{ex:deletion}
  After accelerating and instantiating all simple loops in the program
  from  \Cref{fig:its-leading-ex},
  we delete the original loops $\alpha_1$ and $\alpha_4$, resulting in the following integer program:
 \[
  \begin{array}{l@{\;\;}l@{\;\;\;}c@{\;\;}ll}
  \alpha_0\!:&\Ff_0(x,y,z,u) & \tox{1}{}{}{} & \Ff_1(x,0,z,u)\\
  \alpha_{\overline{1}}\!:&\Ff_1(x,y,z,u) & \tox{x}{}{}{} & \Ff_1(0,
  y + \frac{1}{2}x^2 + \frac{1}{2}  x,z,u) & \constr{x > 0}\\
  \alpha_2\!:&\Ff_1(x,y,z,u) & \tox{1}{}{}{} & \Ff_2(x,y,y,u) & \constr{x \leq 0}\\
  \alpha_3\!:&\Ff_2(x,y,z,u) & \tox{1}{}{}{} & \Ff_3(x,y,z,z-1) & \constr{z > 0}\\
  \alpha_{\overline{4}}\!:&\Ff_3(x,y,z,u) & \tox{u}{}{}{} & \Ff_3(x,y,z,0) & \constr{u > 0}\\
  \alpha_5\!:&\Ff_3(x,y,z,u) & \tox{1}{}{}{} & \Ff_2(x,y,z-1,u) & \constr{u \leq 0}
  \end{array}
  \]
\end{example}

\subsection{Chaining Rules}
\label{subsec:its-simplification}

After trying to accelerate all simple loops of a program, we can \emph{chain}
subsequent rules $\alpha_1, \alpha_2$ by adding a new rule $\alpha_{1.2}$ that
represents their combination.
Our notion of \emph{chaining} corresponds to the standard notion of \emph{unfolding}
\cite{BD77}, adapted to our program model.
Afterwards, the rules \(\alpha_1\) and
\(\alpha_2\) can (but need not) be deleted with \Cref{thm:its-deletion}.

\begin{theorem}[Chaining  for Tail-Recursive Integer Programs]
  \label{thm:its-chaining}
  Let $\PP$ be a well-formed tail-recursive integer program and let $\alpha_1,\alpha_2 \in \PP$
  where
  \[
    \begin{array}{rccll}
      \alpha_1: & f_1(\vect{x}) &\tox{\cc_1}{}{}& f_2(\vect{x})\,\mu & \constr{\phi_1} \text{ and}\\
      \alpha_2: & f_2(\vect{x}) &\tox{\cc_2}{}{}& t & \constr{\phi_2}.
    \end{array}
  \]
  W.l.o.g., let $\TV(\alpha_1) \cap \TV(\alpha_2) = \emptyset$ (otherwise, the
  temporary variables in $\alpha_2$ can be renamed accordingly).  Moreover, let
  $\alpha_{1.2}$ be the rule
  \[
    f_1(\vect{x}) \tox{\cc_1 + \cc_2\mu}{}{} t\mu \constr{\phi_1 \land \phi_2\mu}
  \]
  and let $\PP' = \PP \cup \{\alpha_{1.2}\}$.  Then $\PP'$ is well formed and
  the processor that maps \(\PP\) to \(\PP'\) is sound.
\end{theorem}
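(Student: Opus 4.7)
The proof has two obligations: well-formedness of $\PP'$ and soundness of the processor. Both reduce to a single simulation lemma, so I would first establish that and then draw both consequences from it. The plan is to show that any single step $S \tox{k}{}{\alpha_{1.2}} T$ can be replayed by $S \tox{k_1}{}{\alpha_1} S' \tox{k_2}{}{\alpha_2} T$ with $k_1 + k_2 = k$, which yields soundness by induction on evaluation length (any $\PP'$-derivation starting with $\Ff_0(\vect{n})$ can be simulated, step by step, by a $\PP$-derivation of the same total cost, so $\rc_\PP \geq \rc_{\PP'}$). Then well-formedness of $\alpha_{1.2}$ follows by applying the same substitution construction to the integrality requirement.

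First I would unpack the hypothesis: given an integer substitution $\sigma$ with $\sigma \models \alpha_{1.2}$, i.e.\ $\VV(\alpha_{1.2}) \subseteq \dom(\sigma)$ and $\sigma \models \phi_1 \land \phi_2\mu$. The first simulation step is straightforward: apply $\alpha_1$ with $\sigma$ itself; this is legal since $\sigma \models \phi_1$, and it produces $S' = (S \setminus \{s\}) \cup \{f_2(\vect{x})\,\mu\sigma\}$ with cost $\cc_1\sigma$. The second step is the technical core. I would define an integer substitution $\sigma'$ on $\VV(\alpha_2) = \vect{x} \cup \TV(\alpha_2)$ by $\sigma'(x_i) = x_i\mu\sigma$ for each program variable $x_i$ and $\sigma'(\tv) = \sigma(\tv)$ for each $\tv \in \TV(\alpha_2)$. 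That $\sigma'$ actually maps into $\ZZ$ relies on well-formedness of $\alpha_1$: its right-hand side $f_2(\vect{x})\mu$ must evaluate to a configuration term under any $\sigma \models \alpha_1$, so each $x_i\mu\sigma \in \ZZ$. The disjointness $\TV(\alpha_1) \cap \TV(\alpha_2) = \emptyset$ ensures $\sigma'$ is well defined and does not clash with the temporary variables of $\alpha_1$.

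The main obstacle, and the only genuinely non-obvious algebraic claim, is the identity $e(\mu\sigma) = e\sigma'$ for every arithmetic expression or constraint $e$ over $\VV(\alpha_2)$. I would verify this by cases on the variables of $e$: for $x_i \in \vect{x}$ it is immediate from the definition of $\sigma'$, and for $\tv \in \TV(\alpha_2)$ it uses the crucial side condition $\dom(\mu) \subseteq \vect{x}$ from the definition of a simple loop update, which gives $\tv\mu = \tv$ and hence $\tv\mu\sigma = \sigma(\tv) = \sigma'(\tv)$. Applied to $\phi_2$, $\cc_2$, and the term $t$, this yields $\sigma' \models \phi_2$, $\cc_2\sigma' = \cc_2\mu\sigma$, and $t\sigma' = t\mu\sigma$. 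Consequently the $\alpha_2$-step from $S'$ is enabled, rewrites $f_2(\vect{x})\mu\sigma$ to $t\mu\sigma$, and costs exactly $\cc_2\mu\sigma$; combined with the $\alpha_1$-step the total cost is $(\cc_1 + \cc_2\mu)\sigma = k$ and the resulting configuration is $T$, completing the simulation.

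Finally, I would deduce well-formedness of $\alpha_{1.2}$ by reusing the same $\sigma'$. For each $g(t_1,\ldots,t_k) \in \rhs(\alpha_{1.2})$ (which has the form $g(t_1\mu,\ldots,t_k\mu)$ coming from $\rhs(\alpha_2)$) and any integer substitution $\sigma \models \alpha_{1.2}$, the identity above gives $t_i\mu\sigma = t_i\sigma'$; well-formedness of $\alpha_2$ together with $\sigma' \models \alpha_2$ yields $t_i\sigma' \in \ZZ$, hence $t_i\mu\sigma \in \ZZ$. All remaining rules of $\PP'$ lie already in $\PP$, so $\PP'$ is well formed.
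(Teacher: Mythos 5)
Your proposal is correct and follows essentially the same route as the paper: the paper proves this statement as a special case of the more general chaining theorem for arbitrary integer programs (\Cref{thm:its-non-linear-chaining}), whose proof likewise simulates one $\alpha_{1.2}$-step by an $\alpha_1$-step followed by an $\alpha_2$-step under the substitution $\mu \circ \sigma$ — which is exactly your $\sigma'$, since $\dom(\mu) \subseteq \vect{x}$ forces $\tv\mu\sigma = \sigma(\tv)$ on the temporary variables of $\alpha_2$. Your write-up is in fact somewhat more explicit than the paper's on the integrality of $\mu\circ\sigma$ and on deriving well-formedness of $\alpha_{1.2}$, which the paper dispatches in one sentence.
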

\begin{proof}
  We prove the more general \Cref{thm:its-non-linear-chaining} in \Cref{sec:non-linear}.
\end{proof}

One goal of chaining is to eliminate all accelerated simple loops.
Therefore, after accelerating all simple loops,
we chain all subsequent rules $\alpha', \alpha$ where $\alpha$ is a simple loop
and $\alpha'$ is not a simple loop.  Afterwards, we delete $\alpha$.  Moreover,
once $\alpha'$ has been chained with all subsequent simple loops, then we also
remove $\alpha'$, since its effect is now (mostly\footnote{Since accelerated rules $\alpha_\itr$ do not cover the case
  where $\alpha$ is not executed at all (see Footnote \ref{coverage}), chaining $\alpha'$ with $\alpha_\itr$ and
 deleting these rules afterwards does not cover those original evaluations where $\alpha'$ was
 not followed by any subsequent application of $\alpha$. However, since we are only
 interested in witnesses for maximal evaluations, this does not affect the soundness of
 our approach.}) covered by the newly introduced
chained rules.

\begin{example}[Ex.\ \ref{ex:deletion} Continued]\label{ex:chaining}
  We continue the transformation of the program from \Cref{fig:its-leading-ex}. Now we
  chain  \(\alpha_0\) with the accelerated simple loop
\(\alpha_{\overline{1}}\), and we chain \(\alpha_3\) with  the accelerated simple loop
\(\alpha_{\overline{4}}\).  This yields the following
integer program:
  \[
      \begin{array}{l@{\;\;}l@{\;\;\;}c@{\;\;}ll}
        \alpha_{0.\overline{1}}\!:&\Ff_0(x,y,z,u) & \tox{1+x}{}{}{} & \Ff_1(0,
\frac{1}{2}x^2 + \frac{1}{2}  x,z,u) & \constr{x > 0}\\
        \alpha_2\!:&\Ff_1(x,y,z,u) & \tox{1}{}{}{} & \Ff_2(x,y,y,u) & \constr{x \leq 0}\\
        \alpha_{3.\overline{4}}\!:&\Ff_2(x,y,z,u) & \tox{z}{}{}{} & \Ff_3(x,y,z,0) & \constr{z > 1}\\
        \alpha_5\!:&\Ff_3(x,y,z,u) & \tox{1}{}{}{} & \Ff_2(x,y,z-1,u) & \constr{u \leq 0}
      \end{array}
      \]
      In Rule  $\alpha_{3.\overline{4}}$, we simplified the guard $z > 0 \land z-1 > 0$ to
      $z > 1$.
\end{example}

Chaining also allows us to eliminate function symbols from the program by chaining all pairs of
rules $\alpha'$ and $\alpha$ where $\dest(\alpha') = \head(\alpha)$ and removing them
afterwards.
In this way we can transform loops consisting of several transitions into simple loops.
It is advantageous to eliminate symbols which are the target of
just one single rule first.  This heuristic
avoids eliminating the entry points of loops, if possible.

\begin{example}[Ex.\ \ref{ex:chaining} Continued]\label{ex:chaining2}
  So for the program in \Cref{ex:chaining},
 it would avoid chaining \(\alpha_5\)
 and \(\alpha_{3.\overline{4}}\)
where $\dest(\alpha_5) = \head(\alpha_{3.\overline{4}}) =
\Ff_2$,
because $\Ff_2$ is also the target of the rule $\alpha_2$.
 In this way,
we avoid constructing chained rules that correspond to a run from the ``middle''
of a loop to the ``middle'' of the next loop iteration.

Instead, we chain \(\alpha_{0.\overline{1}}\) and
\(\alpha_{2}\) as well as \(\alpha_{3.\overline{4}}\) and \(\alpha_{5}\) to
eliminate the function symbols \(\Ff_1\) and \(\Ff_3\). This leads to the following program.
 \[
      \begin{array}{l@{\;\;}l@{\;\;\;}c@{\;\;}ll}
        \alpha_{0.\overline{1}.2}\!:&\Ff_0(x,y,z,u) & \tox{2+x}{}{}{} & \Ff_2(0,
 \frac{1}{2}  x^2\!+\!\frac{1}{2} x,\frac{1}{2}  x^2\!+\!\frac{1}{2} x,u) & \constr{x > 0}\\
        \alpha_{3.\overline{4}.5}\!:&\Ff_2(x,y,z,u) & \tox{1 + z}{}{}{} & \Ff_2(x,y,z-1,0) & \constr{z > 1}
      \end{array}
      \]
\end{example}

\begin{algorithm}[t]
  \begingroup
  \setlist{itemsep=0pt,topsep=0pt}
  \flushleft
  While there is a rule $\alpha$ with $\head(\alpha) \neq \Ff_0$:
  \begin{enumerate}[label*=\arabic*., ref=\arabic*]
  \item Apply \emph{Deletion} to rules whose guard is proved unsatisfiable or whose root symbol is unreachable from $\Ff_0$.
    \label{alg-step:its-delete-unsat}
  \item While there is a non-accelerated simple loop \(\alpha\):
    \label{alg-step:its-acceleration-loop}
    \begin{enumerate}[label*=\arabic*., ref=\arabic{enumi}.\arabic*, leftmargin=2em]
    \item Try to \emph{accelerate} \(\alpha\).
      \label{alg-step:its-meter}
    \item If \ref{alg-step:its-meter} succeeded, resulting in $\overline{\alpha}$:
      \begin{enumerate}[label*=\arabic*., ref=\arabic{enumi}.\arabic{enumii}.\arabic*, leftmargin=2em]
      \item Try to \emph{instantiate} $\overline{\alpha}$ to eliminate the temporary variable introduced in Step \ref{alg-step:its-meter}.
        \label{alg-step:non-linear-eliminate-tv}
      \item If \ref{alg-step:non-linear-eliminate-tv} succeeded, apply \emph{Deletion} to
        $\overline{\alpha}$.
        \label{alg-step:delete-instantiate}
      \end{enumerate}
    \item If \ref{alg-step:its-meter} failed and $\alpha$ uses temporary variables:\\
      Try to \emph{instantiate} \(\alpha\) to eliminate a temporary variable.
      \label{alg-step:its-strength}
    \item Apply \emph{Deletion} to \(\alpha\).
      \label{alg-step:its-delete}
    \end{enumerate}
  \item Let \(S = \emptyset\).
    \label{alg-step:its-initialize-S}
  \item While there is an accelerated rule \(\alpha\):
    \label{alg-step:its-loop-chaining-loop}
    \begin{enumerate}[label*=\arabic*., ref=\arabic{enumi}.\arabic*, leftmargin=2em]
    \item For each $\alpha'$ with \(\head(\alpha') \neq \dest(\alpha') = \head(\alpha)\):\\
      Apply \emph{Chaining} to \(\alpha'\) and \(\alpha\) and add
 \(\alpha'\) to \(S\).
    \item Apply \emph{Deletion} to \(\alpha\).
    \end{enumerate}
  \item Apply \emph{Deletion} to each rule in \(S\).
    \label{alg-step:its-delete-S}
  \item While there is a function symbol \(f\) without simple loops but with incoming and outgoing rules (starting with symbols $f$ with just one incoming rule):\label{alg-step:its-straight-chaining-loop}
    \begin{enumerate}[label*=\arabic*., ref=\arabic{enumi}.\arabic*, leftmargin=2em]
    \item Apply \emph{Chaining} to each pair \(\alpha',\alpha\) where \(\dest(\alpha') = \head(\alpha) = f\).
    \item Apply \emph{Deletion} to each \(\alpha\) where \(\head(\alpha) = f\)
       or \(\dest(\alpha) = f\).
    \end{enumerate}
  \end{enumerate}
  \caption{Program Simplification for Tail-Recursive Integer Programs}
  \label{alg:its-program-simplification}
  \endgroup
\end{algorithm}

Our overall approach for program simplification is shown in
\Cref{alg:its-program-simplification}.
It transforms any
tail-recursive integer program into a \emph{simplified} program.
  \Cref{sec:asymptotic} will
  show how to analyze the (asymptotic) complexity of
  simplified programs.
  Of course, other strategies for the
  application of the processors would be possible, too.
Recall that the application of \emph{Acceleration}, \emph{Instantiation}, and
\emph{Chaining} always generates new rules (so, e.g., in Step
\ref{alg-step:non-linear-eliminate-tv}, instantiating 
$\overline{\alpha}$ generates a new rule $\widetilde{\alpha}$ and the original
non-instantiated rule $\overline{\alpha}$ is \emph{deleted} in Step \ref{alg-step:delete-instantiate}).
  The set \(S\) in the
Steps \ref{alg-step:its-initialize-S} -- \ref{alg-step:its-delete-S} is needed
to handle function symbols \(f\) with multiple simple loops.  The reason is that
each rule \(\alpha'\) with \(\dest(\alpha') = f\) should be chained with
\emph{each} of \(f\)'s simple loops before removing $\alpha'$.

\Cref{alg:its-program-simplification} terminates: The loop in Step
\ref{alg-step:its-acceleration-loop} terminates since each iteration either
decreases the number of temporary variables in \(\alpha\) or reduces the number
of non-accelerated simple loops.  In Step \ref{alg-step:its-loop-chaining-loop},
the number of accelerated rules is decreasing and for the loop in Step
\ref{alg-step:its-straight-chaining-loop}, the number of function symbols
decreases.  The overall loop of \Cref{alg:its-program-simplification} terminates as it reduces the number of function
symbols.
  The reason is that the program does not have simple loops anymore when
  the algorithm reaches Step \ref{alg-step:its-straight-chaining-loop} (as simple loops where
  acceleration fails are deleted in Step \ref{alg-step:its-delete} and accelerated rules are
  eliminated in Step \ref{alg-step:its-loop-chaining-loop}).  Thus, at
this point there is either a function symbol \(f\) which can be eliminated or
the program does not have a path of length \(2\), i.e., all rules have the root
$\Ff_0$.

\begin{example}[Ex.\ \ref{ex:chaining2} Continued]\label{ex:chaining3}
According to \Cref{alg:its-program-simplification}, in our example we go back to
Step \ref{alg-step:its-delete-unsat} and \ref{alg-step:its-acceleration-loop}
and apply \emph{Loop Acceleration} to the rule \(\alpha_{3.\overline{4}.5}\).
This rule has the metering function \(z - 1\) and its iterated update sets \(u\)
to \(0\) and \(z\) to \(z - \tv\) for a fresh temporary variable \(\tv\).  To
compute \(\alpha_{3.\overline{4}.5}\)'s iterated cost, we have to find an
under-approximation for the solution of the recurrence equations \(\cc^{(1)} = 1 + z
\) and \(\cc^{(\tv+1)} = \cc^{(\tv)} + 1 + z^{(\tv)}\).  After computing the
closed form $z - \tv$ of \(z^{(\tv)}\), the second equation simplifies to
\(\cc^{(\tv+1)} = \cc^{(\tv)} + 1 + z - \tv\), which results in the closed form
$\cc_\itr = \cc^{(\tv)} = \tv \cdot z-\frac{1}{2} \tv^2 + \frac{3}{2}
\tv$. Thus, we obtain the accelerated rule
\[
  \Ff_2(x,y,z,u) \toxx{\tv\cdot z-\frac{1}{2}  \tv^2+\frac{3}{2}  \tv}{}{}{} \Ff_2(x,y,z-\tv,0) \constr{0 < \tv <  z}.
\]
By instantiating $\tv$ with $z-1$ in Step \ref{alg-step:non-linear-eliminate-tv} and removing
$\alpha_{3.\overline{4}.5}$ in Step \ref{alg-step:its-delete}, we obtain the
following program.
\[
      \begin{array}{l@{\;\;}l@{\;\;\;}c@{\;\;}ll}
        \alpha_{0.\overline{1}.2}\!:&\Ff_0(x,y,z,u) & \tox{2+x}{}{}{} & \Ff_2(0,
 \frac{1}{2}  x^2\!+\!\frac{1}{2}  x,\frac{1}{2} x^2\!+\!\frac{1}{2}  x,u) & \constr{x > 0}\vspace*{.2cm}\\
        \alpha_{\overline{3.\overline{4}.5}}\!:&\Ff_2(x,y,z,u) & \toxx{\frac{1}{2} z^2 + \frac{3}{2}z - 2}{}{}{} & \Ff_2(x,y,1,0) &
        \constr{z > 1}
      \end{array}
      \]
A final chaining
step and deletion of $\alpha_{0.\overline{1}.2}$ and
$\alpha_{\overline{3.\overline{4}.5}}$ yields the simplified program with the following
single rule.
\setcounter{eq:finalFirstLeadingExCtr}{\value{equation}}
\begin{equation}
  \label{eq:finalFirstLeadingEx}
  \hspace*{-.7cm} \begin{array}{l@{\;\;}l@{\;}c@{\;}l@{\;\;}l}
  \alpha_{0.\overline{1}.2.\overline{3.\overline{4}.5}}\!: &
\Ff_0(x,y,z,u) & \toxx{\frac{1}{8}x^4 + \frac{1}{4}x^3 + \frac{7}{8}x^2 + \frac{7}{4}x}{}{}{} &
\Ff_2(0, \frac{1}{2} x^2\!+\!\frac{1}{2} x, 1, 0) &   \constr{\frac{1}{2} x^2 + \frac{1}{2}  x > 1} \hspace*{-.7cm}
\end{array}
\end{equation}
\end{example}


\section{Simplifying Arbitrary Recursive Integer Programs}
\label{sec:non-linear}

So far, we only considered tail-recursive programs, i.e., programs where all
rules have the form $f(\vect{x}) \tox{}{}{} g(\vect{t}) \constr{\phi}$.  We now
extend our technique to non-tail-recursive programs, i.e., we now also consider rules where
the right-hand side is a multiset of several terms.

\Cref{thm:its-instantiation,thm:its-deletion} are trivially applicable to
non-tail-recursive programs as well, i.e., we can still
instantiate temporary variables and we can still delete rules.  However,
\Cref{thm:its-acceleration,thm:its-chaining} have to be adapted.  In 
\Cref{subsec:rec-meter}
we  extend our notion of metering functions to (non-tail\=/)recursive
rules in order to adapt \Cref{thm:its-acceleration}.  Similar to the case of
tail-recursive programs where we started with considering simple loops, we first focus on \emph{simple recursions}, i.e., rules
$f(\vect{x}) \tox{\cc}{}{} \TTT \constr{\phi}$ whose degree
  $|\TTT|$ is greater than $1$ and where all terms
in $\TTT$ have the root symbol $f$.\footnote{The reason for excluding the case
 $|\TTT| =1$ is that in this way, simple recursions give rise to \emph{exponential} lower
 bounds, i.e., we can formulate the corresponding \Cref{thm:NonLinearMeteringFunctions} which does not hold if  $|\TTT| =1$.}
Our extended notion of metering functions then allows us to accelerate simple recursions in \Cref{subsec:rec-acceleration}.
Afterwards, in \Cref{subsec:rec-simplification} we show how to transform
more complex recursions into simple recursions or simple loops
 via Chaining
and \emph{Partial Deletion}, a new technique which is specific to non-tail-recursive
integer programs. Based on these techniques, we extend \Cref{alg:its-program-simplification} to a
procedure which transforms any
 integer program
into a simplified program.

\subsection{Under-Estimating the Depth of Recursions}
\label{subsec:rec-meter}

To understand the idea of metering functions for simple recursions, note that
repeatedly applying a simple recursive rule $f(\vect{x}) \tox{\cc}{}{} \TTT \constr{\phi}$
essentially yields an \emph{evaluation tree} of terms where each
inner node has $|\TTT|$ successors.  While metering functions for simple loops
under-estimate the length of evaluations, metering functions for simple
recursions under-estimate the height up to which such evaluation trees are
complete.  Hence, if $\bound$ is a metering function for a simple recursion
$\alpha$ of degree $d$, then the maximal number of consecutive applications of
$\alpha$ is in $\Omega(d^{\bound})$.

\begin{definition}[Metering Function for Simple Recursions]
  \label{def:its-non-linear-meter}
  Let $\alpha$ be a rule of the form
  \[
     f(\vect{x}) \tox{\cc}{}{} \TTT \constr{\phi}
  \]
  such that $\TTT$ contains no function symbol from $\Sigma$ except $f$.  We call an
  arithmetic expression $\bound$ a \emph{metering function} for \(\alpha\) if the
  following conditions are valid:
  \begin{eqnarray}
    \label{eq:its-non-linear-meter1}\neg\guard(\alpha) &\implies& \bound \leq 0\\
    \label{eq:its-non-linear-meter2}\guard(\alpha) &\implies& \bound\{\vect{x} / \vect{t}\} \geq \bound - 1 \text{ for all } f(\vect{t}) \in \TTT
  \end{eqnarray}
\end{definition}

 \Cref{def:its-non-linear-meter} is a generalization of
 \Cref{def:its-meter}, i.e., if $\alpha$ has degree $1$, then
\Cref{def:its-non-linear-meter} and \Cref{def:its-meter} coincide.
Moreover, note that conditional metering functions of the form $\charfun{\psi}
\cdot \bound$ for simple recursions can be inferred analogously to conditional
metering functions for simple loops (see \Cref{lem:irrelevant-constraints}): If
$\guard(\alpha)$ is $\phi \land \psi$ and $\guard(\alpha)$ implies $\psi\{\vect{x} / \vect{t}\}$ for all
$f(\vect{t}) \in \TTT$, then it suffices to check $\neg\phi \land \psi \implies
\bound \leq 0$ and $\phi \land \psi \implies \bound\{\vect{x} / \vect{t}\} \geq
\bound - 1$ for all $f(\vect{t}) \in \TTT$ in order to prove that $\charfun{\psi} \cdot
\bound$ is a metering function for $\alpha$.

\begin{example}[Metering Function for Fibonacci]
  According to \Cref{def:its-non-linear-meter}, $\frac{1}{2}  x - 1$ is a
  metering function for the recursive Fibonacci rule \eqref{eq:fib2} from \Cref{ex:its-fib}.  It
  satisfies \eqref{eq:its-non-linear-meter1}, as we have $\neg (x > 1) \implies
  \frac{1}{2}  x - 1 \leq 0$.  The recursive call $\fs{fib}(x-1)$ satisfies
  \eqref{eq:its-non-linear-meter2}, since we have
  \[
   \begin{array}{lllllll} x > 1 &\implies &\frac{1}{2} \, (x-1) - 1 &= &\frac{1}{2}  x -
     \frac{3}{2} &\geq &\frac{1}{2}  x - 2.
     \end{array}
  \]
  Finally, the recursive call $\fs{fib}(x-2)$ also satisfies
  \eqref{eq:its-non-linear-meter2}, as we have
  \[
     \begin{array}{lllllll} x > 1 &\implies &\frac{1}{2} \, (x-2) - 1 &=& \frac{1}{2}
       x - 2 &\geq &\frac{1}{2} x - 2.
       \end{array}
    \]
  \end{example}

  For a simple loop $\alpha$, we directly use its metering function as a lower
  bound for the number of consecutive applications of $\alpha$. But for simple
recursions we can infer a lower bound that is higher than its metering function.
The reason is that when computing a metering function $\bound$ for a simple recursion
$f(\vect{x}) \to \{ f(\vect{t_1}), \ldots, f(\vect{t_d})\} \constr{\varphi}$, we proceed as if we had $d$
separate tail-recursive rules $f(\vect{x}) \to  f(\vect{t_1}) \constr{\varphi}$, \ldots, $f(\vect{x}) \to
f(\vect{t_d}) \constr{\varphi}$. However, as the original rule initiates $d$ new
evaluations in each step, we obtain a lower bound on
the length of evaluations that is exponential in $b$.
In other words, since the metering function $b$ under-estimates the height of complete
evaluation trees (where every non-leaf node has $d$ children), the number of edges of the
tree is in $\Omega(d^b)$.

\begin{theorem}[Metering Functions Under-Estimate Simple Recursions]
  \label{thm:NonLinearMeteringFunctions}
  Let $\bound$ be a metering function for a well-formed simple recursion $\alpha$ with
  degree $d$.  Then for all integer substitutions $\sigma$ with $\VV(\alpha) \subseteq
  \dom(\sigma)$, there is an evaluation $\lhs(\alpha)\,\sigma \tox{}{m}{\alpha}
  \SSS$ for some configuration $\SSS$ with $m \geq \frac{d^{\bound\sigma} - 1}{d-1}$.
\end{theorem}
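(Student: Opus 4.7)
The plan is to mirror the proof of \Cref{thm:MeteringFunctions} but to account for the fact that each application of a simple recursion spawns $d$ independent subterms whose evaluations do not interfere. Write $\rhs(\alpha) = \{f(\vect{t_1}), \ldots, f(\vect{t_d})\}$ and, for each integer substitution $\sigma$ with $\VV(\alpha) \subseteq \dom(\sigma)$, let $m_\sigma \in \NN \cup \{\omega\}$ denote the length of the longest $\alpha$-evaluation starting from $\lhs(\alpha)\sigma$. The task reduces to showing $m_\sigma \geq \tfrac{d^{b\sigma} - 1}{d - 1}$.

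First I would dispose of the easy cases. If $b\sigma \leq 0$, then, since simple recursions have degree $d \geq 2$, we have $\tfrac{d^{b\sigma} - 1}{d - 1} \leq 0$, so an empty evaluation suffices; if $m_\sigma = \omega$, any finite bound is trivially attainable. Otherwise I proceed by induction on the finite value $m_\sigma$. The base case $m_\sigma = 0$ forces $\sigma \not\models \guard(\alpha)$, which by \eqref{eq:its-non-linear-meter1} gives $b\sigma \leq 0$, reducing back to the trivial case. For $m_\sigma \geq 1$, the contrapositive of \eqref{eq:its-non-linear-meter1} yields $\sigma \models \guard(\alpha)$, so I apply $\alpha$ once to reach the multiset $\{f(\vect{t_1})\sigma,\ldots,f(\vect{t_d})\sigma\}$. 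For each $i$, I let $\sigma_i = \{\vect{x}/\vect{t_i}\} \circ \sigma$, which is an integer substitution because $\alpha$ is well formed; hence $\lhs(\alpha)\sigma_i = f(\vect{t_i})\sigma$ and, by \eqref{eq:its-non-linear-meter2}, $b\sigma_i \geq b\sigma - 1$.

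The key observation is that rule applications on distinct elements of a multiset do not interact, so the longest evaluation from $\{f(\vect{t_1})\sigma,\ldots,f(\vect{t_d})\sigma\}$ has length at least $\sum_{i=1}^d m_{\sigma_i}$; in particular $m_\sigma \geq 1 + \sum_i m_{\sigma_i}$, giving $m_{\sigma_i} < m_\sigma$ for every $i$, so the induction hypothesis applies to each $\sigma_i$. Concatenating the initial step with the $d$ sub-evaluations of length $\geq \tfrac{d^{b\sigma_i} - 1}{d - 1} \geq \tfrac{d^{b\sigma - 1} - 1}{d - 1}$ supplied by the IH yields a total length of at least
\[
  1 + d \cdot \frac{d^{b\sigma - 1} - 1}{d - 1} \;=\; \frac{d^{b\sigma} - 1}{d - 1},
\]
as required. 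The main obstacle is identifying a sound induction measure: $b\sigma$ itself is unsuitable since it may be real-valued and, moreover, the conditions on $b$ bound $b\sigma_i$ only from below, so there is no uniform decrease; as in \Cref{thm:MeteringFunctions} I therefore induct on $m_\sigma$, whose strict decrease from $\sigma$ to each $\sigma_i$ is precisely what the ``disjoint multiset'' observation provides.
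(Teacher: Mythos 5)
Your proposal is correct and follows essentially the same route as the paper's proof: induction on the length $m_\sigma$ of the longest $\alpha$-evaluation from $\lhs(\alpha)\sigma$, using \eqref{eq:its-non-linear-meter1} in the base case, \eqref{eq:its-non-linear-meter2} together with well-formedness to apply the induction hypothesis to each $\{\vect{x}/\vect{t_i}\}\circ\sigma$, and the decomposition $m_\sigma \geq 1 + \sum_i m_{\sigma_i}$ to obtain $1 + d\cdot\frac{d^{b\sigma-1}-1}{d-1} = \frac{d^{b\sigma}-1}{d-1}$. The only cosmetic differences are that you derive $\sigma\models\guard(\alpha)$ from the contrapositive of \eqref{eq:its-non-linear-meter1} (having excluded $b\sigma\leq 0$ up front) where the paper derives it from $m_\sigma\geq 1$ directly, and that you state the multiset decomposition as an inequality rather than the paper's equality; both are sound.
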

\begin{proof}
  First note that we have $d > 1$ since $\alpha$ is a simple recursion and hence
  $\frac{d^{\bound\sigma} - 1}{d-1}$
  is well defined.  As in the proof of \Cref{thm:MeteringFunctions}, let
  $m_\sigma \in \NN \cup \{ \omega
  \}$ be the length of the longest evaluation which starts with
  $\lhs(\alpha)\,\sigma$ and only applies the rule $\alpha$ repeatedly.
We prove that $m_\sigma \geq \frac{d^{\bound\sigma} - 1}{d-1}$.

  The
  case $m_\sigma = \omega$ is trivial.  For $m_\sigma \neq \omega$, we
  use induction on $m_\sigma$. In
  the base case $m_\sigma = 0$,
  we have $\sigma \not\models \guard(\alpha)$ and thus,
\eqref{eq:its-non-linear-meter1}
implies $b \sigma \leq 0$. Hence, $\frac{d^{\bound\sigma} - 1}{d-1} \leq 0 = m_\sigma$.

For the induction step $m_\sigma \geq 1$,
we must have $\sigma \models \guard(\alpha)$. Let
$\lhs(\alpha) = f(\vect{x})$. Then we obtain
 \begin{eqnarray}
    \label{eq:NonLinearMeteringFunctions-2a}
    \bound \,\{ \vect{x}/\vect{t}\}\, \sigma  &\geq& \bound\sigma - 1 \qquad \text{for all
      $f(\vect{t}) \in \rhs(\alpha)$, $\;\;$ by \eqref{eq:its-non-linear-meter2}}\\
    \label{eq:NonLinearMeteringFunctions-2b}
    \lhs(\alpha)\,\sigma = f(\vect{x}) \sigma  &\tox{}{}{\alpha}& \rhs(\alpha)\,\sigma
  \end{eqnarray}

 Let $\rhs(\alpha) = \{
f(\vect{t_1}), \ldots, f(\vect{t_d}) \}$. Then due to
\eqref{eq:NonLinearMeteringFunctions-2b}, the longest evaluation $\lhs(\alpha)\,\sigma \tox{}{m_\sigma}{\alpha}
\SSS$ has the form
\[\lhs(\alpha)\,\sigma \;\;
\tox{}{}{\alpha} \;\;
\rhs(\alpha)\,\sigma \; = \;
 \{
f(\vect{t_1})\sigma, \ldots, f(\vect{t_d})\sigma \} \;\;
\tox{}{m_{\{ \vect{x}/\vect{t_1}\} \circ \sigma} + \ldots +
m_{\{ \vect{x}/\vect{t_d}\} \circ \sigma}}{\alpha} \;\;
  \SSS,\]
  i.e., $m_\sigma = m_{\{ \vect{x}/\vect{t_1}\} \circ \sigma} + \ldots +
  m_{\{ \vect{x}/\vect{t_d}\} \circ \sigma} +1$. Since $\VV(\alpha) \subseteq
\dom( \{ \vect{x}/\vect{t_i}\} \circ \sigma )$
and $\{\vect{x} / \vect{t_i}\} \circ \sigma$ is an integer
  substitution (since $\alpha$ is well formed), the induction hypothesis implies
  $m_{\{ \vect{x}/\vect{t_i}\} \circ \sigma} \geq \frac{d^{\bound\,\{ \vect{x}/\vect{t_i}\} \, \sigma} - 1}{d-1}
\geq \frac{d^{\bound\sigma - 1} - 1}{d-1}$ by \eqref{eq:NonLinearMeteringFunctions-2a} for all
$1 \leq i \leq d$. Hence, we have $m_\sigma =
m_{\{ \vect{x}/\vect{t_1}\} \circ \sigma} + \ldots +
m_{\{ \vect{x}/\vect{t_d}\} \circ \sigma} +1
\geq d \cdot \frac{d^{\bound\sigma - 1} - 1}{d-1} + 1
= \frac{d^{\bound\sigma} - d}{d-1} + 1
= \frac{d^{\bound\sigma} - 1}{d-1}$.
\end{proof}

\begin{example}[Under-Estimating Fibonacci]
  Since $\frac{1}{2}  x - 1$ is a metering function for the recursive
  Fibonacci rule \eqref{eq:fib2},
the term $\fs{fib}(x)\, \sigma$ starts an evaluation of at least length $2^{\frac{1}{2}
    \cdot x\sigma - 1} - 1$ for each integer substitution $\sigma$ with
  $x \in \dom(\sigma)$.
\end{example}

\subsection{Accelerating Simple Recursions}
\label{subsec:rec-acceleration}

Using \Cref{def:its-non-linear-meter} and \Cref{thm:NonLinearMeteringFunctions},
we can now accelerate simple recursions. In contrast to the acceleration of simple loops
in \Cref{thm:its-acceleration}, here we disregard the result of the accelerated rule
and replace its result with $\emptyset$.
The reason is that otherwise the degree of the accelerated rule (i.e., the number of elements in its right-hand
side) would depend on the instantiation of the variable $\tv$ that
represents the height of the evaluation tree. This
cannot be expressed in our program model.
Moreover, we cannot compute
iterated updates, as a simple recursion $\alpha$
has \emph{several} updates which may be applied
in arbitrary order
when $\alpha$ is
applied repeatedly.
Since computing the iterated cost would require the iterated updates,
we do not compute the iterated cost anymore, but simply under-estimate each evaluation step with
cost 1. This suffices to infer a lower bound for the cost of the
accelerated rule which is exponential in the metering function.

\begin{theorem}[Recursion Acceleration]
  \label{thm:its-non-linear-acceleration}
  Let $\PP$ be a well-formed integer program, let $\alpha \in \PP$ be a simple
  recursion of degree $d$ such that
  \begin{equation}
    \label{eq:its-non-linear-acceleration-positive-costs}
    \guard(\alpha) \implies \cost(\alpha) \geq 1
  \end{equation}
  is valid,\footnote{If \eqref{eq:its-non-linear-acceleration-positive-costs} is not
    valid, then one can simply add the constraint $\cost(\alpha) \geq 1$ to the guard of the
    rule, since adding constraints to the guard is always sound as it can only decrease the derivation height (by disallowing some evaluations).}
    and let $\charfun{\psi} \cdot \bound$ be a metering function for $\alpha$.  Moreover,
    let $\alpha'$ be the rule
  \[
     \lhs(\alpha) \tox{\cc}{}{} \emptyset \constr{\guard(\alpha) \land \psi} \text{ where } \cc = \frac{d^{\bound} - 1}{d-1},
  \]
  and let $\PP' = \PP \cup \{\alpha'\}$.  Then $\PP'$ is well formed and the
  processor that maps \(\PP\) to \(\PP'\) is sound.
\end{theorem}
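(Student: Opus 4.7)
The plan is to mirror the proof of \Cref{thm:its-acceleration}, with \Cref{thm:NonLinearMeteringFunctions} replacing \Cref{thm:MeteringFunctions} and with no iterated update or iterated cost to track, since $\rhs(\alpha') = \emptyset$. Well-formedness of $\PP'$ is immediate: the well-formedness condition constrains only the argument expressions of right-hand side terms, and $\alpha'$ has no such terms, so nothing is to check.

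For soundness, I would show that a single application of $\alpha'$ in $\PP'$ can always be replaced by a $\PP$-evaluation of at least the same cost that leaves a configuration dominating (as a multiset) the one produced by the $\alpha'$-step. Fix an integer substitution $\sigma$ with $\sigma \models \guard(\alpha')$, i.e., $\sigma \models \guard(\alpha) \land \psi$. Then $\charfun{\psi}\sigma = 1$, so $(\charfun{\psi} \cdot \bound)\sigma = \bound\sigma$ and the cost of the $\alpha'$-step equals $\cc\sigma = \tfrac{d^{\bound\sigma}-1}{d-1}$. Since $\charfun{\psi}\cdot\bound$ is a metering function for the simple recursion $\alpha$, \Cref{thm:NonLinearMeteringFunctions} yields an evaluation $\lhs(\alpha)\sigma \tox{}{m}{\alpha} \SSS$ with $m \geq \tfrac{d^{\bound\sigma}-1}{d-1}$. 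Every step of this evaluation uses $\alpha$, so its substitution satisfies $\guard(\alpha)$, and by the assumption $\guard(\alpha) \implies \cost(\alpha) \geq 1$ each such step costs at least $1$; the total cost is therefore at least $m \geq \cc\sigma$.

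To lift the single-step simulation to arbitrary $\PP'$-evaluations, I would induct on the length of a $\PP'$-evaluation starting from a configuration $\TTT$. A step using a rule from $\PP$ is already a $\PP$-step. A step using $\alpha'$ picks some $s = \lhs(\alpha)\sigma \in \TTT$ and produces $\TTT\setminus\{s\}$; the construction above turns this into a $\PP$-evaluation $\TTT \tox{\,\geq\cc\sigma\,}{*}{\PP} (\TTT\setminus\{s\})\cup\SSS$. Since $(\TTT\setminus\{s\})\cup\SSS \supseteq \TTT\setminus\{s\}$ as multisets, any subsequent rule in the $\PP'$-evaluation that acts on a term in $\TTT\setminus\{s\}$ remains applicable after the simulation, so the remainder of the evaluation can be simulated by the induction hypothesis. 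Composing these simulations gives $\dht_\PP(\TTT) \geq \dht_{\PP'}(\TTT)$ for every $\TTT$, whence $\rc_\PP(n) \geq \rc_{\PP'}(n)$.

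The main obstacle, as in \Cref{thm:its-acceleration}, is the bookkeeping for the extra residue $\SSS$ that the simulating $\PP$-evaluation leaves behind compared with the single $\alpha'$-step; this is dispatched by the multiset monotonicity of rule applicability in \Cref{def:its-relation}. A minor case split is also convenient when $\bound\sigma \leq 0$: then $\cc\sigma \leq 0$ since $d > 1$, and the empty $\PP$-evaluation already realizes the cost bound, so the positive-cost hypothesis \eqref{eq:its-non-linear-acceleration-positive-costs} is only essential in the case $\bound\sigma > 0$.
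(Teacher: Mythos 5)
Your proposal is correct and follows essentially the same route as the paper's proof: well-formedness is immediate from the empty right-hand side, and soundness follows by invoking \Cref{thm:NonLinearMeteringFunctions} to obtain an $\alpha$-evaluation of length $m \geq \frac{d^{\bound\sigma}-1}{d-1}$ whose cost is at least $m \geq \cc\sigma$ thanks to \eqref{eq:its-non-linear-acceleration-positive-costs}. You are merely more explicit than the paper about lifting the single-step simulation to whole evaluations via multiset monotonicity (which the paper leaves implicit in ``for every evaluation with $\alpha'$, there is an evaluation with $\alpha$ which has at least the same cost''), and your remark that the case $\bound\sigma \leq 0$ is handled by the empty evaluation is a harmless extra observation.
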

\begin{proof}
  Well-formedness is trivial due to the empty right-hand side $\emptyset$ of the rule
  $\alpha'$.
  To prove soundness, note that by
  \Cref{thm:NonLinearMeteringFunctions}, $\sigma \models \alpha'$ implies
  $\lhs(\alpha)\,\sigma \tox{\concretecosts}{m}{\alpha} \SSS$ with $m \geq
 \frac{d^{\bound\sigma} - 1}{d - 1}$ for some cost $\concretecosts$ and some
 configuration $\SSS$ (since  $\sigma \models \guard(\alpha')$
implies
 $\sigma \models  \psi$ which in turn implies
 $(\charfun{\psi} \cdot \bound) \, \sigma = \bound\sigma$).  Since $\guard(\alpha) \implies \cost(\alpha) \geq 1$ is valid,
  we get $\concretecosts \geq m \geq \frac{d^{b\sigma} - 1}{d - 1} = \cc\sigma$.
  Thus, for every evaluation with
  $\alpha'$, there is an evaluation with $\alpha$ which has at least the same
  cost.
\end{proof}

\begin{example}[Accelerating Fibonacci]
  \label{ex:its-accelerating-fib}
  Since the rule \eqref{eq:fib2} from \Cref{ex:its-fib} has cost $1$,
  \eqref{eq:its-non-linear-acceleration-positive-costs} is trivially valid.
  Thus, accelerating the rule \eqref{eq:fib2} yields
  \[
    \fs{fib}(x) \tox{2^{\frac{1}{2}  x - 1} - 1}{}{} \emptyset \constr{x > 1}.
  \]
  Afterwards, chaining this rule with \eqref{eq:fib1}
  and deleting all other rules yields the simplified program with the rule
\setcounter{eq:FibonacciAcceleratedCtr}{\value{equation}}  
  \begin{equation}
    \label{eq:FibonacciAccelerated}
    \Ff_0(x) \tox{2^{\frac{1}{2}  x - 1} - 1}{}{} \emptyset \constr{x > 1}.
  \end{equation}
\end{example}

\subsection{Simplifying Recursive Rules}
\label{subsec:rec-simplification}

\Cref{subsec:rec-acceleration} showed how to accelerate simple recursions.
If our implementation fails in applying \Cref{thm:its-non-linear-acceleration} to a simple recursion,
  then we again try to eliminate
temporary variables via \emph{Instantiation}, as in the case of simple loops.
  If
  all
 temporary variables were eliminated and we still fail to accelerate a simple
recursion, then further simplifications are possible. In particular, we can remove parts of the right-hand side via \emph{Partial
  Deletion}.

\begin{theorem}[Partial Deletion]
  \label{thm:its-partial-deletion}
  Let $\PP$ be a well-formed integer program and let $\alpha \in \PP$.
  Moreover, let $\alpha'$ be like $\alpha$, but $\rhs(\alpha') \subset
  \rhs(\alpha)$, and let $\PP' = \PP \cup \{\alpha'\}$.  Then $\PP'$ is well
  formed and the processor that maps \(\PP\) to \(\PP'\) is sound.
\end{theorem}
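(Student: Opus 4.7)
The plan is to prove the two claims (well-formedness of $\PP'$ and soundness of the processor) separately, with well-formedness being essentially immediate and soundness requiring a simulation argument.

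For well-formedness of $\PP'$, note that $\PP$ is well formed by assumption, so every rule in $\PP \subseteq \PP'$ is well formed. The only new rule is $\alpha'$, which has the same left-hand side, guard, and cost as $\alpha$, and a right-hand side $\rhs(\alpha') \subset \rhs(\alpha)$. Since well-formedness only constrains the arithmetic expressions occurring in right-hand sides under models of the guard, and every term in $\rhs(\alpha')$ is also a term in $\rhs(\alpha)$, the condition carries over from $\alpha$ to $\alpha'$.

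For soundness, the idea is to show that every evaluation in $\PP'$ can be simulated by an evaluation in $\PP$ that has at least the same cost and, at every intermediate configuration, contains a superset (as a multiset) of the terms present in the $\PP'$-evaluation. Concretely, I plan to prove by induction on $m$ the following invariant: whenever $\Ff_0(\vect{n}) \tox{\concretecosts}{m}{\PP'} \TTT'$, there exists a configuration $\TTT \supseteq \TTT'$ (multiset-wise) with $\Ff_0(\vect{n}) \tox{\concretecosts}{m}{\PP} \TTT$. The base case $m = 0$ is trivial. For the step, consider the last rule application in $\PP'$: if it uses a rule $\beta \in \PP$, I reuse the same rule, the same redex (which lies in $\TTT'$ and hence in the larger $\TTT$ from the induction hypothesis), and the same substitution, which yields a successor configuration that still dominates the $\PP'$-successor. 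If the last step uses $\beta = \alpha'$ with substitution $\sigma$, I instead apply $\alpha$ with the same $\sigma$ at the same redex: since $\lhs(\alpha') = \lhs(\alpha)$, $\guard(\alpha') = \guard(\alpha)$, and $\cost(\alpha') = \cost(\alpha)$, the guard is still satisfied and the cost matches, while $\rhs(\alpha)\sigma \supseteq \rhs(\alpha')\sigma$ ensures the inclusion is preserved.

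From this invariant, $\dht_\PP(\Ff_0(\vect{n})) \geq \dht_{\PP'}(\Ff_0(\vect{n}))$ follows by taking the supremum over all $\PP'$-evaluations, and hence $\rc_\PP(n) \geq \rc_{\PP'}(n)$ for all $n$. The only mildly subtle point, and the part worth stating carefully, is the multiset bookkeeping: removing a redex $s$ from a larger multiset $\TTT_0 \supseteq \TTT_0'$ leaves $\TTT_0 \setminus \{s\} \supseteq \TTT_0' \setminus \{s\}$ (since $\setminus$ decreases the multiplicity of $s$ by one in both), so adding $\rhs(\alpha)\sigma$ on the $\PP$ side and $\rhs(\alpha')\sigma$ on the $\PP'$ side preserves the inclusion. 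I do not expect any genuine obstacles here; the argument is a standard simulation, and the key insight is just that enlarging the multiset of active terms can only increase, never decrease, the available future evaluation steps.
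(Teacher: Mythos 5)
Your proposal is correct and follows essentially the same route as the paper's proof: an induction on the length of the evaluation maintaining a multiset-superset invariant between the $\PP$-run and the $\PP'$-run (the paper merely packages this via an auxiliary relation $\tox{\concretecosts}{}{\PP} \circ \supseteq$, but the underlying simulation argument and the multiset bookkeeping are the same). Well-formedness is likewise handled identically, so there is nothing to add.
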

\begin{proof}
  Since $\PP$ is well formed, $\PP'$ is trivially well formed, too.  To prove
  soundness, we define
  \[\mbox{$\SSS \tox{\concretecosts}{}{\PP \circ \supseteq} \TTT\;\;$ 
  if $\;\;\SSS \tox{\concretecosts}{}{\PP} \circ \supseteq \TTT$.}\]
So $\SSS \tox{\concretecosts}{}{\PP \circ \supseteq} \TTT$ holds if $\SSS$ evaluates to
some configuration $\SSS'$ with cost $k$ (i.e., $\SSS \tox{\concretecosts}{}{\PP} \SSS'$)
and the configuration $\TTT$ results from $\SSS'$ by deleting some terms from $\TTT$
(i.e., $\SSS' \supseteq \TTT$).
  Then we clearly
  have $\dht_{\PP'}(\SSS) \leq \sup\{\concretecosts \in \RR \mid
 \SSS \tox{\concretecosts}{*}{\PP \circ \supseteq} \TTT \text{ for some }
  \TTT \in \CC
 \}$, as each
  $\tox{}{}{\PP'}$-sequence is also a  $\tox{}{}{\PP \circ
    \supseteq}$-sequence. To finish the proof, we show
  \[
   \sup\{\concretecosts \in \RR \mid
 \SSS \tox{\concretecosts}{*}{\PP \circ \supseteq} \TTT \text{ for some }
  \TTT \in \CC
 \} \leq \dht_{\PP}(\SSS).
 \]
This clearly implies $\rc_{\PP'}(n) \leq \rc_\PP(n)$, i.e., it implies that the processor
is sound.

  Consider an evaluation $\SSS_0 \tox{\concretecosts_1}{*}{\PP \circ \supseteq}
  \ldots \tox{\concretecosts_m}{*}{\PP \circ \supseteq} \SSS_m$.  We prove
  $\SSS_0 \tox{\concretecosts_1}{*}{\PP} \ldots \tox{\concretecosts_m}{*}{\PP}
  \SSS'_m \supseteq \SSS_m$ for some $\SSS'_m \in \CC$ by induction on $m$. The
  case $m = 0$ is trivial. If $m > 0$, the induction hypothesis implies
  \[
    \SSS_0 \tox{\concretecosts_1}{*}{\PP} \ldots \tox{\concretecosts_{m-1}}{*}{\PP} \SSS'_{m-1} \supseteq \SSS_{m-1} \text{ for some } \SSS'_{m-1} \in \CC.
  \]
  Moreover, $\SSS_{m-1} \tox{\concretecosts_m}{}{\PP \circ \supseteq} \SSS_m$
  implies $\SSS_{m-1} \tox{\concretecosts_m}{}{\PP} \widetilde{\SSS_m} \supseteq
  \SSS_m$ for some $\widetilde{\SSS_m} \in \CC$, i.e., we have $s
  \tox{\concretecosts_m}{}{\PP} \QQQ$ and $\widetilde{\SSS_m} = (\SSS_{m-1}
  \setminus \{s\}) \cup \QQQ$ for some $s \in \SSS_{m-1}$ and some $\QQQ \in
  \CC$. Since $\SSS_{m-1} \subseteq \SSS'_{m-1}$,
we get
  \[
    \SSS'_{m-1} \tox{\concretecosts_m}{}{\PP} (\SSS'_{m-1} \setminus \{s\}) \cup \QQQ = \SSS'_m
  \]
  and thus $\SSS_0
  \tox{\concretecosts_1}{*}{\PP} \ldots \tox{\concretecosts_{m}}{*}{\PP}
  \SSS'_{m}$ with $\SSS'_{m} \supseteq \widetilde{\SSS_m} \supseteq \SSS_m$, as
  desired.
\end{proof}

\begin{example}[Partial Deletion to Enable Recursion Acceleration]
  \label{ex:partial-deletion}
  Consider the simple recursion $\Ff(x,y) \tox{}{}{} \{ \Ff(x-1,y), \Ff(x-y,y)
  \} \constr{x > 0 \land y > x}$. As
  $\Ff(x-y,y)$ cannot be reduced any further if $y > x$, we cannot find
    a useful metering
  function for this rule and hence, \emph{Recursion
    Acceleration} fails. By applying \emph{Partial Deletion}, we obtain the
  simple loop $\Ff(x,y) \tox{}{}{} \Ff(x-1,y) \constr{x > 0 \land y > x}$, which can easily
  be accelerated via \Cref{thm:its-acceleration}. In this way, we can infer that the
  original non-tail-recursive rule can be applied at least linearly often.
\end{example}

As shown above, \emph{Recursion Acceleration} is useful to handle programs with non-linear recursion
like the Fibonacci program, where the result is composed of two recursive calls.  However,
non-tail-recursion also occurs when composing a recursive call with the call of an
auxiliary function.

\begin{example}[Non-Tail-Recursive $\fs{facSum}$ Program \cite{koat}]
  \label{ex:its-fac}
  Consider the following imperative program.
\begin{lstlisting}
int fac(int x) {
  if (x > 1) return x * fac(x-1);
  else return 1;
}

int facSum(int x) {
  if (x > 0) return fac(x) + facSum(x-1);
  else return 1;
}
\end{lstlisting}
  Here, \code{fac(x)} computes $\code{x}!$ and \code{facSum(x)} computes $\code{0}! + \ldots +
  \code{x}!$. The program is not tail-recursive, because the last action of $\code{facSum}$ is
  not the recursive call, but an addition. The integer program below represents its recursive
  structure, i.e., it can be obtained from the above program by a suitable
  abstraction.
  \begin{alignat}{2}
      \Ff_0(x) & \tox{0}{}{}  \fs{facSum}(x) && \label{facsum-1} \\
      \fs{facSum}(x) & \tox{1}{}{}  \{\fs{fac}(x),  \fs{facSum}(x - 1) \} && \constr{x >
        0} \label{facsum-2} \\
      \fs{facSum}(x) & \tox{1}{}{}  \emptyset && \constr{x \leq 0}  \nonumber \\
      \fs{fac}(x) & \tox{1}{}{}  \fs{fac}(x - 1) && \constr{x > 1} \label{fac-rec}\\
      \fs{fac}(x) & \tox{1}{}{}  \emptyset && \constr{x \leq 1} \label{fac-non-rec}
  \end{alignat}
  \end{example}

To analyze this integer program, we first accelerate and chain the recursive
rule \eqref{fac-rec} as in \Cref{thm:its-acceleration} and \Cref{thm:its-chaining}.

\begin{example}[Accelerating $\fs{fac}$]
  \label{ex:its-accelerate-fac}
  Clearly, $x - 1$ is a metering function for the rule \eqref{fac-rec}. Accelerating it
using this metering function yields
  \[
    \fs{fac}(x) \tox{\tv}{}{}  \fs{fac}(x - \tv)  \constr{x > 1 \land 0 < \tv < x}.
  \]
  Instantiating $\tv$ with $x-1$ via \Cref{thm:its-instantiation}
  results in
   \begin{equation}\label{fac-rec-acc}
    \fs{fac}(x) \tox{x-1}{}{}  \fs{fac}(1)  \constr{x > 1}.
  \end{equation}
\end{example}

At this point, we would like to chain the recursive $\fs{facSum}$-rule \eqref{facsum-2} with the
$\fs{fac}$-rule \eqref{fac-rec-acc}.  However,
\Cref{thm:its-chaining} is only applicable to tail-recursive rules.  Hence, we
now generalize \Cref{thm:its-chaining} to arbitrary rules.

\begin{theorem}[Chaining for Arbitrary Integer Programs]
  \label{thm:its-non-linear-chaining}
  Let $\PP$ be a well-formed integer program and let $\alpha_1,\alpha_2 \in \PP$
  where
  \[
    \begin{array}{rccll}
      \alpha_1: & f_1(\vect{x}) &\tox{\cc_1}{}{}& \SSS & \constr{\phi_1} \text{ with } f_2(\vect{x})\,\mu \in \SSS \text{ and}\\
      \alpha_2: & f_2(\vect{x}) &\tox{\cc_2}{}{}& \TTT & \constr{\phi_2}.
    \end{array}
  \]
  W.l.o.g., let $\TV(\alpha_1) \cap \TV(\alpha_2) = \emptyset$ (otherwise, the
  temporary variables in $\alpha_2$ can be renamed accordingly).  Moreover, let
  $\alpha_{1.2}$ be the rule
  \[
     f_1(\vect{x}) \tox{\cc_1 + \cc_2\mu}{}{} (\SSS \setminus
    \{f_2(\vect{x})\,\mu\}) \cup \TTT\,\mu \constr{\phi_1 \land \phi_2\mu}
  \]
  and let $\PP' = \PP \cup \{\alpha_{1.2}\}$.  Then $\PP'$ is well formed and
  the processor that maps \(\PP\) to \(\PP'\) is sound.
\end{theorem}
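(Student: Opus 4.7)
The plan is to prove well-formedness and soundness separately, and for soundness the key idea is that every single step with the chained rule $\alpha_{1.2}$ can be simulated by exactly two consecutive steps in $\PP$ (one with $\alpha_1$, one with $\alpha_2$) that produce the same resulting configuration and have the same combined cost. I would then lift this simulation from a single step to multi-step evaluations by a straightforward induction and conclude $\dht_{\PP'}(\SSS) \leq \dht_\PP(\SSS)$, which yields $\rc_{\PP'}(n) \leq \rc_\PP(n)$.

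For well-formedness, I would fix an integer substitution $\sigma$ with $\sigma \models \guard(\alpha_{1.2})$, that is $\sigma \models \phi_1 \land \phi_2\mu$. The right-hand side of $\alpha_{1.2}$ splits into two parts. For the terms inherited from $\SSS \setminus \{f_2(\vect{x})\mu\}$, well-formedness of $\alpha_1$ together with $\sigma \models \phi_1$ immediately gives that their arguments evaluate to integers under $\sigma$. For the new terms $\TTT\mu$, the crucial observation is that $\mu \circ \sigma$ is an integer substitution on the program variables $\vect{x}$: since $f_2(\vect{x})\mu \in \SSS = \rhs(\alpha_1)$ and $\sigma \models \alpha_1$, well-formedness of $\alpha_1$ forces $x\mu\sigma \in \ZZ$ for each $x \in \vect{x}$. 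Because $\TV(\alpha_1) \cap \TV(\alpha_2) = \emptyset$, we can extend $\mu \circ \sigma$ to an integer substitution on $\VV(\alpha_2)$ by letting it agree with $\sigma$ on $\TV(\alpha_2)$; then $\sigma \models \phi_2\mu$ translates to $\mu \circ \sigma \models \phi_2$, and well-formedness of $\alpha_2$ yields that the arguments of every term in $\TTT$ map to integers under $\mu \circ \sigma$, i.e., the arguments in $\TTT\mu$ map to integers under $\sigma$.

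For soundness of one step, suppose $\SSS_0 \tox{\concretecosts}{}{\alpha_{1.2}} \TTT_0$ via some $s \in \SSS_0$ and integer substitution $\sigma$ with $\sigma \models \alpha_{1.2}$. Then $s = f_1(\vect{x})\sigma$, the cost is $\concretecosts = (\cc_1 + \cc_2\mu)\sigma = \cc_1\sigma + \cc_2\mu\sigma$, and $\TTT_0 = (\SSS_0 \setminus \{s\}) \cup ((\SSS \setminus \{f_2(\vect{x})\mu\})\sigma \cup \TTT\mu\sigma)$. I would now execute the two original rules in $\PP$. Applying $\alpha_1$ to $s$ with substitution $\sigma$ gives $\SSS_0 \tox{\cc_1\sigma}{}{\alpha_1} \SSS_1$ where $\SSS_1 = (\SSS_0 \setminus \{s\}) \cup \SSS\sigma$. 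Since $f_2(\vect{x})\mu \in \SSS$, we have $f_2(\vect{x})\mu\sigma \in \SSS_1$, so with the extended substitution $\mu \circ \sigma$ (integer by the argument above, satisfying $\phi_2$) we can apply $\alpha_2$ to this occurrence, obtaining $\SSS_1 \tox{\cc_2\mu\sigma}{}{\alpha_2} \SSS_2$ with $\SSS_2 = (\SSS_1 \setminus \{f_2(\vect{x})\mu\sigma\}) \cup \TTT\mu\sigma$. A direct multiset calculation shows $\SSS_2 = \TTT_0$, and the total cost is $\cc_1\sigma + \cc_2\mu\sigma = \concretecosts$.

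The main obstacle I expect is the careful handling of the multiset bookkeeping: when $\SSS$ contains several copies of $f_2(\vect{x})\mu$, the chained rule removes exactly one of them while adding $\TTT\mu$, and the two-step simulation must remove precisely the same copy; I will have to argue this by always selecting the occurrence introduced by the $\alpha_1$-step for the subsequent $\alpha_2$-step. A secondary subtlety is the disjointness condition $\TV(\alpha_1) \cap \TV(\alpha_2) = \emptyset$, which is essential to glue the two substitutions $\sigma$ and $\mu \circ \sigma$ into consistent integer substitutions for both rules simultaneously. Once the one-step simulation is in place, a routine induction on the length of a $\tox{}{*}{\PP'}$-derivation transfers it to full evaluations, giving $\dht_{\PP'} \leq \dht_\PP$ and hence $\rc_{\PP'}(n) \leq \rc_\PP(n)$, as required for soundness.
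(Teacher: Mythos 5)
Your proposal is correct and takes essentially the same route as the paper's proof: each step with $\alpha_{1.2}$ under a substitution $\sigma$ is simulated by an $\alpha_1$-step followed by an $\alpha_2$-step under $\mu \circ \sigma$, with the same resulting configuration and total cost $\cc_1\sigma + \cc_2\mu\sigma$. The paper leaves the well-formedness argument and the lifting to multi-step evaluations implicit, whereas you spell them out, but the core argument is identical.
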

\begin{proof}
  To
   prove the soundness of the processor, we show that every evaluation step with $\alpha_{1.2}$
  can be simulated by two evaluation steps with the rules $\alpha_1, \alpha_2$
  of the same cost.  Let $\sigma$ be an integer substitution with $\sigma
  \models \alpha_{1.2}$. Then we have
  \[
    f_1(\vect{x})\,\sigma \tox{\cc_1\sigma + \cc_2\mu\sigma}{}{\alpha_{1.2}} (\SSS\sigma
    \setminus \{f_2(\vect{x})\,\mu\sigma\}) \cup \TTT\mu\sigma.
  \]
  Since $\sigma \models \phi_1$, we have
  \[
    f_1(\vect{x})\,\sigma \tox{\cc_1\sigma}{}{\alpha_1} \SSS\sigma.
  \]
  Since $\sigma \models \phi_2\mu$ implies $\mu \circ \sigma \models \phi_2$
  we have
  \[
    f_2(\vect{x})\,\mu\sigma \tox{\cc_2\mu\sigma}{}{\alpha_2} \TTT\mu\sigma.
  \]
  As $f_2(\vect{x})\mu \in \SSS$, this implies
  \[
    \SSS\sigma \tox{\cc_2\mu\sigma}{}{\alpha_2} (\SSS\sigma \setminus
    \{f_2(\vect{x})\,\mu\sigma\}) \cup  \TTT\mu\sigma.
  \]
  Thus, we have $f_1(\vect{x})\sigma \tox{\cc_1\sigma +
    \cc_2\mu\sigma}{2}{\PP} (\SSS\sigma \setminus \{f_2(\vect{x})\,\mu\sigma\})
  \cup \TTT\mu\sigma$, as desired.
As $\alpha_1$ and $\alpha_2$ are well formed, this also proves that $\PP'$ is well formed.
\end{proof}

Note that \Cref{thm:its-non-linear-chaining} coincides with \Cref{thm:its-chaining} if the degree of $\alpha_1$ and
$\alpha_2$ is $1$.
\Cref{thm:its-non-linear-chaining} allows us to continue the transformation of the program
in \Cref{ex:its-accelerate-fac}.

\begin{example}[Chaining $\fs{facSum}$ and $\fs{fac}$]
  \label{ex:its-non-linear-chaining-facsum}
  Chaining the recursive $\fs{facSum}$-rule \eqref{facsum-2} of \Cref{ex:its-accelerate-fac}
  with the accelerated $\fs{fac}$-rule \eqref{fac-rec-acc} yields
  \[
    \fs{facSum}(x) \tox{x}{}{} \{ \fs{facSum}(x - 1), \fs{fac}(1) \} \constr{x > 1}.
    \]
 Chaining this rule with the non-recursive $\fs{fac}$-rule
\eqref{fac-non-rec}
  results in
  \[
    \fs{facSum}(x) \tox{x+1}{}{} \fs{facSum}(x - 1) \constr{x > 1}.
  \]
  The iterated update and cost of this rule are $x\mu^\tv = x - \tv$ and
  \[
  \begin{array}{lllll}
  \sum_{i=0}^{\tv-1} (1 + x)\,\mu^i  &= &\sum_{i=0}^{\tv-1} (1 + x - i) &= &x \cdot \tv -
  \frac{1}{2}  \tv^2 + \frac{3}{2}  \tv.
  \end{array}
  \]
  Thus, accelerating it via \Cref{thm:its-acceleration} with the metering
  function $x - 1$ results in
  \begin{equation}
    \label{eq:its-facsum-finish-1}
    \fs{facSum}(x) \tox[0pt]{x \cdot \tv - \frac{1}{2} \tv^2 + \frac{3}{2} \tv}{}{} \fs{facSum}(x - \tv) \constr{0 < \tv < x}
  \end{equation}
  since $0 < \tv < x$ implies $x > 1$.
  By instantiating $\tv$ with $x-1$ and chaining \eqref{facsum-1} with
  the resulting rule, we obtain
  \begin{equation}
    \label{eq:FacSumSimplified}
    \Ff_0(x) \tox[0pt]{\frac{1}{2}x^2 + \frac{3}{2} x - 2}{}{} \fs{facSum}(1) \constr{1 < x}.
  \end{equation}
  Finally, by deleting all other rules, we obtain a simplified program.
 \end{example}

\Cref{alg:non-linear-program-simplification} shows how
\Cref{alg:its-program-simplification} can be adapted in order to handle
non-tail-recursive programs as well.
The first additional step is
Step \ref{alg-step:non-linear-delete-sinks}, which deletes sinks (i.e., function
symbols without any rules) from right-hand sides of non-tail-recursive rules $\alpha$.
This simplifies the rules and possibly even transforms them into
tail-recursive rules.\footnote{If the rule
is already tail-recursive, i.e., the
right-hand side only contains a single term, and this term is a sink, then deleting this term
does not help much to simplify the program. As mentioned in \Cref{sink footnote}, we transform rules
with empty right-hand side into rules with the right-hand side  ``$\fs{sink}$'' to simplify the
formalization.} Note that \emph{Partial Deletion} adds a new rule $\alpha'$ with fewer terms
in its right-hand side (thus, we then \emph{delete} the original rule $\alpha$ afterwards).

The second change is that we
apply \emph{Partial Deletion} in Step \ref{alg-step:linearize} if we failed to accelerate
a simple recursion that does not contain any temporary variables anymore.
In this way, the degree of the simple recursion is
reduced, which may simplify its acceleration as in \Cref{ex:partial-deletion}. In our
implementation, we first try all partial deletions which result in rules of degree 2 (if any).
If we fail to accelerate any of the resulting rules, then we also try all partial
deletions which result in rules of degree 1.

\Cref{alg:non-linear-program-simplification} terminates and thus it transforms any
integer program into a simplified program: The loop in Step
\ref{alg-step:non-linear-delete-sinks} reduces the degree of some rule in each
iteration.
The loop in Step \ref{alg-step:non-linear-acceleration-loop} either reduces the number
of non-accelerated loops or recursions, or it reduces the number of temporary variables or
the degree of some rule in each iteration. In Step
\ref{alg-step:non-linear-loop-chaining-loop},  the number of accelerated rules
is decreasing.
The loop in Step \ref{alg-step:non-linear-straight-chaining-loop} terminates as it
reduces the number of function symbols with outgoing rules in each
iteration. Finally, the overall loop of
\Cref{alg:non-linear-program-simplification}
terminates as well, because after having finished Step
\ref{alg-step:non-linear-straight-chaining-loop} the first time,
the number of function symbols decreases in each further iteration of the algorithm.
To see this, note that
there is no simple loop or simple recursion anymore when the loop of Step
\ref{alg-step:non-linear-loop-chaining-loop} terminates. Thus, after finishing the loop of Step
\ref{alg-step:non-linear-straight-chaining-loop},  the only function
symbol is either $\Ff_0$  (and hence the overall loop terminates) or there is at least one function
symbol $f \neq \Ff_0$ without incoming or without outgoing rules.
Thus, in the
next iteration, this function symbol is removed. The reason is that
if $f$ has no incoming
rules, then all rules $\alpha$ with $\head(\alpha) = f$ are deleted in Step
\ref{alg-step:non-linear-delete-unsat}. If
$f$ has no outgoing rules, then all occurrences of $f$ in right-hand sides
are deleted in Step \ref{alg-step:non-linear-delete-sinks}.

\begin{algorithm}[t]
  \begingroup
  \setlist{itemsep=0pt,topsep=0pt}
  \flushleft
  While there is a rule $\alpha$ with $\head(\alpha) \neq \Ff_0$:
  \begin{enumerate}[label*=\arabic*., ref=\arabic*]
  \item Apply \emph{Deletion} to rules whose guard is proved unsatisfiable or whose root symbol is unreachable from $\Ff_0$.
    \label{alg-step:non-linear-delete-unsat}
  \item While there is a non-tail-recursive rule $\alpha$
    whose right-hand side contains a symbol $f$ without outgoing rules:
    \label{alg-step:non-linear-delete-sinks}
    \begin{enumerate}[label*=\arabic*., ref=\arabic{enumi}.\arabic*, leftmargin=2em]
    \item Apply \emph{Partial Deletion} to an occurrence of $f$ in $\rhs(\alpha)$ and
      apply \emph{Deletion} to $\alpha$ afterwards.
    \end{enumerate}
  \item While there is a simple recursion or a non-accelerated simple loop \(\alpha\):
    \label{alg-step:non-linear-acceleration-loop}
    \begin{enumerate}[label*=\arabic*., ref=\arabic{enumi}.\arabic*, leftmargin=2em]
    \item Try to \emph{accelerate} \(\alpha\).
      \label{alg-step:non-linear-meter}
    \item If \ref{alg-step:non-linear-meter} succeeded, resulting in $\overline{\alpha}$, and $\alpha$ is a simple loop:
      \begin{enumerate}[label*=\arabic*., ref=\arabic{enumi}.\arabic{enumii}.\arabic*, leftmargin=2em]
      \item Try to \emph{instantiate} $\overline{\alpha}$ to eliminate the temporary variable introduced in Step \ref{alg-step:non-linear-meter}.
        \label{alg-step:its-eliminate-tv}
      \item If \ref{alg-step:its-eliminate-tv} succeeded, apply \emph{Deletion} to $\overline{\alpha}$.
      \end{enumerate}
    \item If \ref{alg-step:non-linear-meter} failed:\\
      If $\TV(\alpha) \neq \emptyset$, then try to \emph{instantiate} \(\alpha\) to eliminate a temporary variable.\\
      Otherwise, if the degree of $\alpha$ is greater than $1$, then apply \emph{Partial Deletion} to $\alpha$.
      \label{alg-step:linearize}
    \item Apply \emph{Deletion} to \(\alpha\).
      \label{alg-step:non-linear-delete}
    \end{enumerate}
  \item Let \(S = \emptyset\).
    \label{alg-step:non-linear-initialize-S}
  \item While there is an accelerated rule \(\alpha\):
    \label{alg-step:non-linear-loop-chaining-loop}
    \begin{enumerate}[label*=\arabic*., ref=\arabic{enumi}.\arabic*, leftmargin=2em]
    \item For each $\alpha'$
      where $\rhs(\alpha')$ contains $\head(\alpha)$
and where $\head(\alpha') \neq \head(\alpha)$:\\
      Apply \emph{Chaining} to \(\alpha'\) and \(\alpha\) and add \(\alpha'\) to \(S\).
    \item Apply \emph{Deletion} to \(\alpha\).
    \end{enumerate}
  \item Apply \emph{Deletion} to each rule in \(S\).
    \label{alg-step:non-linear-delete-S}
  \item While there is a function symbol \(f\) without simple recursions or simple loops but with incoming and outgoing rules (starting with symbols $f$ with just one incoming rule):\label{alg-step:non-linear-straight-chaining-loop}
    \begin{enumerate}[label*=\arabic*., ref=\arabic{enumi}.\arabic*, leftmargin=2em]
    \item Apply \emph{Chaining} to each pair \(\alpha',\alpha\) where \(\head(\alpha) = f\) occurs in $\rhs(\alpha')$.
    \item Apply \emph{Deletion} to each \(\alpha\) where \(\head(\alpha) = f\) or where
      $\rhs(\alpha)$ contains no function symbol from $\Sigma$ except $f$.
    \end{enumerate}
  \end{enumerate}
  \caption{Program Simplification for Arbitrary Integer Programs}
  \label{alg:non-linear-program-simplification}
  \endgroup
\end{algorithm}


\section{Asymptotic Lower Bounds}
\label{sec:asymptotic}

After applying \Cref{alg:non-linear-program-simplification}, all programs are simplified and thus,
we assume that
\(\PP\) is a simplified
program throughout this section.
So all rules $\alpha \in \PP$ have the same left-hand side $\Ff_0(\vect{x})$.
Now for any integer substitution $\sigma$,
the derivation height of $\Ff_{0}(\vect{x})\,\sigma$ in the simplified program $\PP$ is
\begin{equation}
  \label{eq:its-concrete-lower-bound}
 \dht_{\PP}(\Ff_{0}(\vect{x})\,\sigma) \; = \; \max \,\{\cost(\alpha)\, \sigma \mid \alpha \in \PP, \sigma \models \alpha\},
\end{equation}
 i.e.,
\eqref{eq:its-concrete-lower-bound} is the maximal cost of those rules whose guard is
satisfied by $\sigma$.
Thus, if $\PP$ results from the transformation of an integer program $\widetilde{\PP}$, then
\eqref{eq:its-concrete-lower-bound} is a lower bound on  $\dht_{\widetilde{\PP}}(\Ff_{0}(\vect{x})\,\sigma)$.
So for the  program in
\Cref{fig:its-leading-ex} which was transformed into the
simplified program with the only rule
\setcounter{auxctr}{\value{equation}}
\setcounter{equation}{\value{eq:finalFirstLeadingExCtr}}
\begin{equation}
   \hspace*{-.7cm} \begin{array}{l@{\;\;}l@{\;}c@{\;}l@{\;\;}l}
  \alpha_{0.\overline{1}.2.\overline{3.\overline{4}.5}}\!: &
\Ff_0(x,y,z,u) & \toxx{\frac{1}{8}x^4 + \frac{1}{4}x^3 + \frac{7}{8}x^2 + \frac{7}{4}x}{}{}{} &
\Ff_2(0, \frac{1}{2} x^2\!+\!\frac{1}{2} x, 1, 0) &   \constr{\frac{1}{2} x^2 +
  \frac{1}{2}  x > 1},  \hspace*{-.7cm}
\end{array}
\end{equation}
we obtain the lower bound
\setcounter{equation}{\value{auxctr}}
\begin{equation}
  \label{eq:its-concrete-lower-bound-ex}
  \tfrac{1}{8}x^4 + \tfrac{1}{4}x^3 + \tfrac{7}{8}x^2 + \tfrac{7}{4}x
\end{equation}
for all integer substitutions with $\sigma \models \frac{1}{2}  x^2 + \frac{1}{2} x >
1$.
However, in general such bounds do not provide an intuitive understanding of the program's complexity
and they are also not suitable to detect possible attacks.
The reason is that both $\cost(\alpha)$ and $\guard(\alpha)$ may be complicated and, even more
importantly, they may be interdependent. Then,  it is not sufficient to only regard the
cost of  a rule in order to draw
conclusions on the resulting complexity.
To see this, consider a simplified rule with cost $\tv$ and guard $\phi$.
Its complexity can, e.g., be unbounded if $\phi$ does not impose any bound on
$\tv$, exponential if
 $\phi$ only implies
$\tv \leq b$ for
arithmetic expressions $b$ that are
exponential in the program variables, or constant if $\phi$ implies $\tv \leq e$ for some
$e \in \NN$.
But even without temporary variables, there can be subtle interdependencies between the cost and the guard of a transition, as the following example illustrates.

\begin{example}[Sub-Linear Bounds]
  \label{ex:its-sqrt}
  Let
  \[
  \PP = \left\{\Ff_0(x,y) \tox[-1pt]{y}{}{} \Ff(x,y) \constr{x > y^2}\right\}.
  \]
 The runtime complexity of this program is sub-linear, even though the cost function $y$
 is linear.
 The reason is that to achieve a linear increase of the cost $y$, a quadratic increase of
 $x$ and thus of the input size is required. So in general, it is not correct to simply take the cost of a rule as a
 lower bound on its runtime (e.g., in this example
 we have $\rc_{\PP}(n) \in\Omega(\sqrt{n})$, whereas 
 ``$\rc_{\PP}(n) \in \Omega(n)$'' would be incorrect). 
\end{example}

Hence, we now show how to derive \emph{asymptotic} lower bounds for simplified programs.
These asymptotic bounds can easily be understood (i.e., a high lower bound can help
programmers to improve their program to make it more efficient) and they identify
potential attacks.

To derive asymptotic bounds, we use so-called \emph{limit problems},
which are introduced in  \Cref{subsec:its-asymptotic-bounds-and-limit-problems}.
A limit problem is an abstraction of the guard $\phi$ of a rule which allows us to analyze
how to satisfy $\phi$, presuming that all variables are instantiated with ``large enough''
values.
More precisely, a solution of a limit problem is a \emph{family} of substitutions
$\sigma_n$ which is parameterized by a variable $n$.
This family of substitutions satisfies $\phi$ for large enough $n$ and can be found using
the calculus presented in \Cref{subsec:its-transforming-limit-problems}.
Thus, applying $\sigma_n$ to the cost of a rule yields an expression that only contains the
single variable $n$, even if the rule has a multivariate cost function. Hence, this allows us
to deduce an asymptotic bound. In \Cref{sec:asymptotic-smt} we present an alternative
approach to find solutions of limit problems via SMT solving which can be combined with
the calculus of  \Cref{subsec:its-transforming-limit-problems} in order to improve efficiency.

\subsection{Limit Problems}
\label{subsec:its-asymptotic-bounds-and-limit-problems}

While the derivation height $\dht_{\PP}$ is defined on configurations like
$\Ff_0(\vect{x})\,\sigma$, asymptotic bounds are usually defined for functions on $\NN$
like the runtime complexity $\rc_\PP$. Recall that
according to  \Cref{def:runtime complexity},
$\rc_\PP(n)$ is the maximal cost of
any evaluation starting with a configuration $\Ff_0(\vect{n})$ where the size
$\size{\vect{n}}$ of the input is at most $n$.
Thus, our goal is to derive an asymptotic lower bound for \(\rc_{\PP}\) from a concrete
bound on $\dht_{\PP}$ like \eqref{eq:its-concrete-lower-bound-ex}.
So for the  program $\PP$ in \eqref{eq:finalFirstLeadingEx},
we would like to derive $\rc_{\PP}(n) \in \Omega(n^4)$.
However, as discussed above, in general the cost \eqref{eq:its-concrete-lower-bound-ex}
of a rule does not directly give rise to the desired asymptotic lower bound.

To infer an asymptotic lower bound from a rule \(\alpha \in \PP\), we try to find an
infinite family of integer substitutions $\sigma_n$ with $\VV(\alpha) \subseteq
\dom(\sigma_n)$ (parameterized by $n \in \NN$) such
that there is an $n_0 \in \NN$ with $\sigma_n \models \guard(\alpha)$ for all $n \geq
n_0$. Note that both $\size{\vect{x}\sigma_n}$ and $\cost(\alpha)\,\sigma_n$ are
arithmetic expressions
that only contain the single variable $n$, and we have
$\rc_{\PP}(\size{\vect{x}\sigma_n}) \in \Omega(\cost(\alpha)\,\sigma_n)$, since for all $n
\geq n_0$ we obtain
\[
    \rc_{\PP}(\size{\vect{x}\sigma_n}) \geq  \dht_{\PP}(\Ff_0(\vect{x})\,\sigma_n) \geq
    \cost(\alpha)\,\sigma_n.
\]
For the program $\PP$ containing only the rule
\setcounter{auxctr}{\value{equation}}
\setcounter{equation}{\value{eq:finalFirstLeadingExCtr}}
\begin{equation}
   \hspace*{-.7cm} \begin{array}{l@{\;\;}l@{\;}c@{\;}l@{\;\;}l}
  \alpha_{0.\overline{1}.2.\overline{3.\overline{4}.5}}\!: &
\Ff_0(x,y,z,u) & \toxx{\frac{1}{8}x^4 + \frac{1}{4}x^3 + \frac{7}{8}x^2 + \frac{7}{4}x}{}{}{} &
\Ff_2(0, \frac{1}{2} x^2\!+\!\frac{1}{2} x, 1, 0) &   \constr{\frac{1}{2} x^2 +
  \frac{1}{2}  x > 1},  \hspace*{-.7cm}
\end{array}
\end{equation}
our approach will infer the family
$\sigma_n$ with
\setcounter{equation}{\value{auxctr}}
\begin{equation}
  \label{eq:sigmanFirstLeadingEx}
    x\sigma_n = n \quad \text{ and } \quad
    y\sigma_n = z\sigma_n = u\sigma_n = 0.
\end{equation}
So for $\vect{x}$ consisting of $x, y, z, u$, this implies \[
\begin{array}{rcl}
\rc_\PP(\size{\vect{x} \sigma_n}) &=&
\rc_\PP(\size{x\sigma_n} + \size{y\sigma_n} + \size{z\sigma_n} + \size{u\sigma_n})\\
&=&
\rc_\PP(|n|)\\
&\geq& \cost(\alpha_{0.\overline{1}.2.\overline{3.\overline{4}.5}})\,\sigma_n\\
&=&
\eqref{eq:its-concrete-lower-bound-ex} \,\sigma_n\\
&=& \frac{1}{8}  n^4 + \frac{1}{4}  n^3 + \frac{7}{8}n^2 + \frac{7}{4} n.
\end{array}
\]

To find such a family of substitutions $\sigma_n$, we first normalize all constraints in
  \(\guard(\alpha)\) such that they have the form \(a > 0\) or $a \geq 0$.\footnote{Note
    that while the variables range over $\ZZ$, there may be non-integer expressions in
    $\guard(\alpha)$ which result from non-integer metering functions. Thus, we allow
    both constraints of the form $a > 0$ and $a \geq 0$ in normalized guards, since
    transforming $a > 0$ to $a -1 \geq 0$ would be incorrect in general.}
  Now we search for substitutions $\sigma_n$ such that for large enough $n \in \NN$,
  $\sigma_n$ is a model for a formula of the form
``\(\bigwedge_{i = 1}^k (a_i \, \circ_i \, 0)\)'' where $\circ_i \in \{{>},{\geq}\}$.
Obviously, such a formula is satisfied for large enough $n$ if all expressions
\(a_i\sigma_n\) are positive constants or increase infinitely towards $\infty$.
Thus, we introduce a technique which tries to find out whether fixing the valuations of
some variables and increasing or decreasing the valuations of others results in positive
resp.\ increasing valuations of $a_1,\ldots,a_k$.
Our technique operates on \emph{limit problems} of the form
$\{a_1^{\bullet_1},\ldots,a_k^{\bullet_k}\}$ where $a_i$ is an arithmetic expression and
$\bullet_i \in \{+, -, +_!, -_!\}$ for all $1 \leq i \leq k$.
Here, \(a^+\) (resp.\ \(a^-\)) means that $a$ has to grow towards $\infty$
(resp.\ $-\infty$) and \(a^{+_!}\) (resp.\ \(a^{-_!}\)) means that $a$ has to be a
positive (resp.\ negative) constant.
So we represent $\guard(\alpha)$ by an \emph{initial limit problem} $\{
a_1^{\bullet_1},\ldots,a_k^{\bullet_k} \}$ where $\bullet_i \in \{ +, +_! \}$ for all $1
\leq i \leq k$.
By choosing $\bullet_i$ to be $+$ or $+_!$, $a_i^{\bullet_i}$ means that $a_i$ grows towards
$\omega$ or it is a positive constant, i.e., this ensures that $a_i > 0$ holds for large
enough $n$.
Our implementation leaves the choice of
the $\bullet_i \in \{+, +_! \}$ in the initial limit problem open as long as possible and
only fixes it when this is needed in order to \emph{solve} the limit problem (see the end
of  \Cref{subsec:its-transforming-limit-problems} for further details).
To solve a limit problem $L$, we search for a \emph{solution} $\sigma_n$ of $L$,
where this concept is defined in terms of \emph{limits} of functions.

\begin{definition}[Limit]
  \label{def:intro-limit}
  For
  each $f: \NN \to \RR$ we have $\lim_{n\mapsto \infty} f(n) = \infty$
  (resp.\  $\lim_{n\mapsto \infty} f(n) = -\infty$) if for every $m \in \RR$ there is an $n_0 \in \NN$
  such that $f(n) \geq m$ (resp.\ $f(n) \leq m$) holds for all $n \geq n_0$.
  Similarly, we have $\lim_{n\mapsto \infty} f(n) = m$ if for every $\varepsilon \in \RR$
  with $\varepsilon > 0$ there is an $n_0 \in \NN$ such that
  $|f(n) - m| < \varepsilon$ holds for all $n \geq n_0$.
\end{definition}
Now a family of substitutions $\sigma_n$ is a solution for a limit problem $\{
a_1^{\bullet_1},\ldots,a_k^{\bullet_k} \}$ if $\lim_{n \mapsto \infty} a_i\sigma_n$
complies with $\bullet_i$ for each $1 \leq i \leq k$.

\begin{definition}[Solutions of Limit Problems]
  \label{def:its-solutions-limit-problem}
  $\!$For any function $f\!: \NN \to \RR$ and any $\bullet\!\in\!\{+, -, +_!, -_!\}$, we say that
  $f$ \emph{satisfies} $\bullet$ if:
  \[
    \begin{array}{l@{\;\;\;}l@{\hspace{2cm}}l@{\;\;\;}l}
      \lim_{n \mapsto \infty} f(n) = \phantom{-}\infty, & \text{if } \bullet = {+} & \exists m \in \RR .\ \lim_{n \mapsto \infty} f(n) = m > 0,& \text{if } \bullet = {+_!}\\
      \lim_{n \mapsto \infty} f(n) = -\infty, & \text{if } \bullet = {-} & \exists m \in \RR .\ \lim_{n \mapsto \infty} f(n) = m < 0, & \text{if } \bullet =  {-_!}
    \end{array}
  \]
  A family \(\sigma_n\) of integer substitutions with $\VV(L) \subseteq \dom(\sigma_n)$ is a \emph{solution} of a limit problem
  \(L\) if for every $a^\bullet \in L$, the function $\lambda n .\ a\sigma_n$ satisfies
  $\bullet$. For any arithmetic expression $a$ with $\VV(a) \subseteq \{ n \}$, ``$\lambda n.\ a$'' is the function from $\NN \to \RR$ that maps
  any $n \in \NN$ to the value of $a$.
\end{definition}

\begin{example}[Solution of the Limit Problem of the Program \eqref{eq:finalFirstLeadingEx}]
  \label{ex:its-leading-ex-solution}
  The program \eqref{eq:finalFirstLeadingEx} has the guard $\frac{1}{2}
  x^2+\frac{1}{2} x > 1$ which normalizes to $\frac{1}{2}  x^2+\frac{1}{2} x - 1
  > 0$. Hence, the resulting initial limit problem could be $\{ (\frac{1}{2}  x^2+\frac{1}{2} x - 1)^{+} \}$. It
  is solved by the family of substitutions $\sigma_n$ from
  \eqref{eq:sigmanFirstLeadingEx}.
  The reason is that $\lim_{n \mapsto \infty} (\lambda n.\  (\frac{1}{2}  x^2+\frac{1}{2} x - 1)\, \sigma_n) = \infty$,
  i.e., the function $\lambda n.\  (\frac{1}{2}  x^2+\frac{1}{2} x - 1) \,\sigma_n$ satisfies $+$.
  Thus, there is an $n_0$ such that $\sigma_n \models
  \guard(\alpha_{0.\overline{1}.2.\overline{3.\overline{4}.5}})$
  holds for all $n \geq n_0$.
\end{example}

In \Cref{subsec:its-transforming-limit-problems,sec:asymptotic-smt} we will show how to infer such solutions of
limit problems automatically.
The following theorem clarifies how to deduce an asymptotic lower bound from a solution of
a limit problem.

\begin{theorem}[Asymptotic Bounds for Simplified Programs]
  \label{thm:its-asymptotic-bounds}
  Given a rule \(\alpha\) of a simplified program \(\PP\) with the program variables
  $\vect{x}$ and $\guard(\alpha) = (a_1
  \circ_1 0) \land \dots \land (a_k \circ_k 0)$ where $\circ_1,\ldots,\circ_k \in
  \{{>},{\geq}\}$, let the family $\sigma_n$ be a solution of an initial limit problem
  \(\{a_1^{\bullet_1},\dots,a_k^{\bullet_k}\}\) with $\bullet_1,\ldots,\bullet_k \in \{+,
  +_!\}$.
  Then $\rc_{\PP}(\size{\vect{x}\sigma_n}) \in \Omega(\cost(\alpha)\,\sigma_n)$.
\end{theorem}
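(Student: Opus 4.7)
\medskip

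\noindent\textbf{Proof plan.} The overall strategy is to turn the limit-theoretic information provided by the solution $\sigma_n$ into a concrete threshold $n_0$ such that the guard $\guard(\alpha)$ is satisfied by $\sigma_n$ for all $n \geq n_0$, and then to witness the claimed lower bound on $\rc_\PP$ by a single application of the rule $\alpha$.

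First, I would exploit the assumption that $\sigma_n$ solves the initial limit problem $\{a_1^{\bullet_1},\dots,a_k^{\bullet_k}\}$ with each $\bullet_i \in \{+, +_!\}$. Unfolding \Cref{def:its-solutions-limit-problem}, for every $i$ the function $\lambda n.\,a_i\sigma_n$ either tends to $\infty$ (if $\bullet_i = +$) or converges to some positive real $m_i > 0$ (if $\bullet_i = +_!$). In both cases \Cref{def:intro-limit} directly yields an index $n_i$ such that $a_i\sigma_n > 0$ for all $n \geq n_i$ (for $\bullet_i = +_!$ I would use the value $m_i/2 > 0$ as the threshold in the $\varepsilon$-definition of the limit to obtain $a_i\sigma_n > 0$ eventually). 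Setting $n_0 = \max\{n_1,\dots,n_k\}$, each conjunct $a_i \circ_i 0$ of $\guard(\alpha)$ holds under $\sigma_n$ for all $n \geq n_0$, since $a_i > 0$ implies both $a_i > 0$ and $a_i \geq 0$. Hence $\sigma_n \models \guard(\alpha)$ for all $n \geq n_0$.

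Next, because $\PP$ is simplified, $\lhs(\alpha) = \Ff_0(\vect{x})$ and a single application of $\alpha$ is a valid evaluation step, giving
\[
\Ff_0(\vect{x})\sigma_n \; \tox{\cost(\alpha)\sigma_n}{}{\PP} \; \rhs(\alpha)\sigma_n
\]
for every $n \geq n_0$. By \Cref{def:dh} this yields $\dht_\PP(\Ff_0(\vect{x})\sigma_n) \geq \cost(\alpha)\sigma_n$, and by \Cref{def:runtime complexity} applied with the input $\vect{x}\sigma_n \in \ZZ^k$ of size $\size{\vect{x}\sigma_n}$, we obtain
\[
\rc_\PP(\size{\vect{x}\sigma_n}) \; \geq \; \dht_\PP(\Ff_0(\vect{x})\sigma_n) \; \geq \; \cost(\alpha)\sigma_n \qquad \text{for all } n \geq n_0.
\]
This inequality is exactly the defining condition of $\Omega$ from the introduction (with multiplicative constant $m = 1$), so $\rc_\PP(\size{\vect{x}\sigma_n}) \in \Omega(\cost(\alpha)\sigma_n)$ as claimed.

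The proof is almost entirely definitional; I expect no real obstacle beyond the routine $\varepsilon$-chasing needed in the $\bullet_i = +_!$ case to convert ``$\lim a_i\sigma_n = m_i > 0$'' into ``$a_i\sigma_n > 0$ eventually.'' A minor point worth mentioning explicitly is that $\sigma_n$ is an \emph{integer} substitution with $\VV(\alpha) \subseteq \dom(\sigma_n)$, so that $\vect{x}\sigma_n \in \ZZ^k$ and the application of $\rc_\PP$ and the single-step evaluation above are both well-defined; everything else is a direct combination of \Cref{def:intro-limit,def:its-solutions-limit-problem,def:dh,def:runtime complexity}.
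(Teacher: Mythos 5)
Your proof is correct and follows essentially the same route as the paper's: extract a threshold $n_0$ from the limit conditions so that $\sigma_n \models \guard(\alpha)$ for $n \geq n_0$, then chain $\rc_{\PP}(\size{\vect{x}\sigma_n}) \geq \dht_{\PP}(\lhs(\alpha)\,\sigma_n) \geq \cost(\alpha)\,\sigma_n$. You merely spell out the $\varepsilon$-argument for the $+_!$ case and the single evaluation step more explicitly than the paper does.
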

\begin{proof}
  Since $\sigma_n$ is a solution of $\{a_1^{\bullet_1},\dots,a_k^{\bullet_k}\}$, there is
  an $n_0 \in \NN$ such that for all $n \geq n_0$, we have $\sigma_n \models  a_1 > 0
  \land \dots \land a_k > 0$, which implies $\sigma_n \models  \guard(\alpha)$.
  Hence, for all $n \geq n_0$, we obtain:
  \[
    \begin{array}{rcl@{\qquad}l}
      \rc_{\PP}(\size{\vect{x}\sigma_n}) &\geq & \dht_{\PP}(\lhs(\alpha)\,\sigma_n) \\
                                                                &\geq &
      \cost(\alpha)\,\sigma_n & \text{as $\sigma_n \models \guard(\alpha)$}
    \end{array}
  \]
  This implies $\rc_{\PP}(\size{\vect{x}\sigma_n}) \in \Omega(\cost(\alpha)\,\sigma_n)$.
\end{proof}
Of course, if $\PP$ has several rules, then we try to take the one which results in the
highest lower bound.

\begin{example}[Asymptotic Bound for  the Program \eqref{eq:finalFirstLeadingEx}]
  \label{ex:its-leading-ex-asymptotic}
  We continue \Cref{ex:its-leading-ex-solution} with the program $\PP = \{
  \eqref{eq:finalFirstLeadingEx} \}$. For $\vect{x} =(x,y,z,u)$,
according to \Cref{thm:its-asymptotic-bounds}, we get the asymptotic lower bound
\begin{equation}
  \label{ExBoundForProgram}   \rc_{\PP}(\size{\vect{x}\sigma_n}) \;\in\;
  \Omega(\cost(\eqref{eq:finalFirstLeadingEx})\,\sigma_n).
  \end{equation}
Note that $\cost(\eqref{eq:finalFirstLeadingEx})\,\sigma_n = \frac{1}{8}  n^4 +
\frac{1}{4}  n^3 + \frac{7}{8}n^2 + \frac{7}{4} n$. Hence,
\eqref{ExBoundForProgram} is equivalent to
\[ \rc_{\PP}(\size{\vect{x}\sigma_n}) \;\in\; \Omega(n^4).\]
\end{example}

Up to now, we only took the guard $\bigwedge_{i = 1}^k (a_i \, \circ_i \, 0)$ of a rule
$\alpha$ into account in the initial limit problem
$\{a_1^{\bullet_1},\ldots,a_k^{\bullet_k}\}$. This has the disadvantage that
solutions of this limit problem do not necessarily try to maximize the cost of the rule. For
example, for the rule
\[  \Ff_0(x, y) \tox[-1pt]{x \cdot y}{}{} \Ff(0, y) \constr{x > 0},
\]
we would obtain the initial limit problem $\{x^+\}$ which is solved by the family of
substitutions $\sigma_n = \{ x/n, y/0 \}$. According to \Cref{thm:its-asymptotic-bounds},
this only allows us to 
infer $\rc_\PP(n) \in \Omega(\cost(\alpha)\sigma_n)$, where $\cost(\alpha)\sigma_n
= 0$, i.e., it only allows us to infer a constant lower bound. To 
obtain non-trivial lower bounds instead,
one should extend the initial limit problem
\(\{a_1^{\bullet_1},\dots, a_k^{\bullet_k}\}\) of a rule $\alpha$ by $\cost(\alpha)^+$.
In this way, one searches for families of substitutions $\sigma_n$ where
$\cost(\alpha)\,\sigma_n$ grows towards $\omega$, i.e., where
$\cost(\alpha)\,\sigma_n$ depends on $n$ and is not constant.
So in our example, we should start with the initial limit problem
$\{x^+, (x\cdot y)^+\}$  which has the solution $\sigma_n = \{ x/n, y/n \}$. By
\Cref{thm:its-asymptotic-bounds},
one now obtains the quadratic lower bound $\rc_\PP(n) \in \Omega(n^2)$, since $\cost(\alpha)\sigma_n
= n^2$.

The costs are \emph{unbounded} (i.e., they 
depend on temporary variables) if the initial limit problem
\(\{a_1^{\bullet_1},\dots,a_k^{\bullet_k},\cost(\alpha)^+\}\) has a solution $\sigma_n$
where $x\sigma_n$ is constant for all program variables $x$.
Then we can even infer $\rc_{\PP}(n) \in \Omega(\omega)$.

\begin{example}[Unbounded Loops Continued]
  \label{ex:its-unbounded-asymptotic}
  By chaining
the initial rule
$\Ff_0(x,y) \tox{0}{}{} \Ff(x,y)$ of the program from \Cref{ex:its-unbounded} with
the accelerated rule
\setcounter{auxctr}{\value{equation}}
\setcounter{equation}{\value{eq:unboundedAcceleratedCtr}}
 \begin{equation}
     \Ff(x, y) \tox[-1pt]{\tv_1 \cdot y}{}{} \Ff(x + \tv_1, y) \constr{0 < x \land 0 < \tv_1}
      \end{equation}
from \Cref{ex:its-unbounded2}, we
  obtain
  \[
    \Ff_0(x, y) \tox[-1pt]{\tv_1 \cdot y}{}{} \Ff(x + \tv_1, y) \constr{0 < x \land 0 < \tv_1}.
  \]
  The resulting initial limit problem \(\{x^{+_!}, \tv_1^+,
  (\tv_1 \cdot y)^+\}\)  has the solution \(\sigma_n = \{ x/1, \, y/1, \, \tv_1/n \}\),
  which implies \(\rc_{\PP}(n) \in
  \Omega(\omega)\).
\end{example}
\setcounter{equation}{\value{auxctr}}

\Cref{thm:its-asymptotic-bounds} results in bounds ``$\rc_{\PP}(\size{\vect{x}\sigma_n})
\in \Omega(\cost(\alpha)\,\sigma_n)$'' which depend on the sizes
\(\size{\vect{x}\sigma_n}\).
Let \(f(n) = \rc_{\PP}(n)\), \(g(n) = \size{\vect{x}\sigma_n}\), and let
$\Omega(\cost(\alpha)\,\sigma_n)$ have the form $\Omega(n^k)$ or $\Omega(k^n)$ for some $k
\in \NN$.
Moreover for all \(x \in \vect{x}\), let \(x\sigma_n\) be a polynomial of at most degree
$d$, i.e., let \(g(n) \in \OO(n^d)\).
Then, based on an observation from \cite{jar17},\footnote{In the second case of \Cref{lem:its-to-irc}, we
fix a small inaccuracy from \cite{ijcar16} where we inadvertently wrote $f(n)
\in \Omega(k^{\sqrt[d]{n}})$. Since \Cref{lem:its-to-irc} is very similar to Lemma
  24 from our paper \cite{jar17}, we
  omit its proof here. The proof can be found in Appendix A.}
we can infer a lower bound for
$f(n) = \rc_{\PP}(n)$ instead of $f(g(n)) = \rc_{\PP}(\size{\vect{x}\sigma_n})$.
Moreover, if $g(n)= \size{\vect{x}\sigma_n}$ is constant whereas
$\Omega(\cost(\alpha)\,\sigma_n)$ is not constant, then the lemma allows us to infer that
\(f(n) = \rc_{\PP}(n) \in \Omega(\omega),\) as in \Cref{ex:its-unbounded-asymptotic}.

\setcounter{sectionctr}{\value{section}}
\setcounter{lemmactr}{\value{theorem}}

\begin{lemma}[Bounds for Function Composition]
  \label{lem:its-to-irc}
  Let
    $f: \NN \to \RR_{\geq 0} \cup \{ \omega \}$
    and $g: \NN \to \NN$
      where $g(n) \in \OO(n^d)$ for some $d \in \NN$ with $d > 0$.
  Moreover, let $f(n)$ be weakly monotonically increasing for large enough $n$.
  \begin{itemize}
  \item[(a)] If  $g(n)$ is strictly monotonically increasing for large enough $n$ and
    $f(g(n)) \in \Omega(n^k)$ with $k \in \NN$, then  $f(n) \in \Omega(n^{\frac{k}{d}})$.
  \item[(b)] If  $g(n)$ is strictly monotonically increasing for large enough $n$ and $f(g(n)) \in \Omega(k^n)$ with $k > 1$, then $f(n) \in \Omega(e^{\sqrt[d]{n}})$
    for some number $e \in \RR$ with $e > 1$.
    \item[(c)] If $g(n) \in \OO(1)$ and $f(g(n)) \notin \OO(1)$, then
    $f(n) \in \Omega(\omega)$.
  \end{itemize}
\end{lemma}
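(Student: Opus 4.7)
The plan is to prove all three parts via a common mechanism: turn the asymptotic lower bound on $f \circ g$ into one on $f$ alone by bounding the ``pre-image'' of a large argument $M$ under $g$. Since $g(n) \in \OO(n^d)$, fix $C > 0$ and $n_1 \in \NN$ with $g(n) \leq C \cdot n^d$ for all $n \geq n_1$. Let $n_2$ be large enough that $f$ is weakly monotonically increasing on $[n_2, \infty)$; in parts (a) and (b), assume also that $g$ is strictly monotonically increasing on $[n_2, \infty)$.

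For parts (a) and (b), given $M \in \NN$ large, let $n(M)$ denote the largest $n \geq n_2$ with $g(n) \leq M$; by strict monotonicity of $g$ this is well defined for sufficiently large $M$, and $g(n(M)+1) > M$. Combined with $g(n(M)+1) \leq C \cdot (n(M)+1)^d$ this gives $n(M) + 1 > (M/C)^{1/d}$, i.e.\ $n(M) \geq (M/C)^{1/d} - 1$. By weak monotonicity of $f$ on values $\geq n_2$, we have $f(M) \geq f(g(n(M)))$ whenever $g(n(M)) \geq n_2$, which holds for large enough $M$.

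For part (a), the hypothesis $f(g(n)) \in \Omega(n^k)$ yields a constant $m_1 > 0$ with $f(g(n(M))) \geq m_1 \cdot n(M)^k \geq m_1 \cdot \bigl((M/C)^{1/d} - 1\bigr)^k$ for large $M$. Since $(M/C)^{1/d} - 1 \geq \tfrac{1}{2}(M/C)^{1/d}$ eventually, this is $\Omega(M^{k/d})$. For part (b), the hypothesis $f(g(n)) \in \Omega(k^n)$ gives $f(M) \geq m_1 \cdot k^{n(M)} \geq m_1 \cdot k^{(M/C)^{1/d} - 1} = (m_1/k) \cdot \bigl(k^{1/C^{1/d}}\bigr)^{M^{1/d}}$. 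Setting $e := k^{1/C^{1/d}}$ yields $e > 1$ (as $k > 1$), so $f(M) \in \Omega(e^{\sqrt[d]{M}})$.

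For part (c), use that $g(n) \in \OO(1)$ together with $g: \NN \to \NN$ forces the image of $g$ to be finite, say contained in $\{v_1, \dots, v_r\}$. Thus $f \circ g$ takes values in $\{f(v_1), \dots, f(v_r)\}$; if all of these were in $\RR_{\geq 0}$, then $f(g(n))$ would be bounded, contradicting $f(g(n)) \notin \OO(1)$. Hence $f(v_i) = \omega$ for some $i$, and by weak monotonicity of $f$ past $n_2$ we get $f(n) = \omega$ for all $n \geq \max(v_i, n_2)$, i.e.\ $f \in \Omega(\omega)$. The main obstacle I anticipate is merely bookkeeping of the various thresholds ($n_1, n_2$, the ``large enough $M$'' for $g(n(M)) \geq n_2$, and for $(M/C)^{1/d} - 1 \geq \tfrac12 (M/C)^{1/d}$) and verifying that the standard ``$\Omega$ absorbs additive/multiplicative constants'' reductions go through; everything else reduces to elementary estimates on polynomials and exponentials.
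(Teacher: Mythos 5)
Your proof is correct and follows essentially the same route as the paper's: your $n(M)$ is precisely the paper's $\lfloor g^{-1}\rfloor(M)$, and your observation that $g(n(M)+1) > M$ (combined with $g(n)\leq C\cdot n^d$) replaces the paper's auxiliary claim that $\lceil g^{-1}\rceil$ and $\lfloor g^{-1}\rfloor$ differ by at most one, yielding the same key estimate $n(M)\geq (M/C)^{1/d}-1$ and the same base $e=k^{1/\sqrt[d]{C}}$ in part (b). The only caveat is in part (c), where your step ``$f(n)=\omega$ for all $n\geq\max(v_i,n_2)$'' silently uses monotonicity of $f$ below the threshold $n_2$ when $v_i<n_2$; the paper's own proof makes exactly the same implicit assumption (harmless in context, since the lemma is only applied to the globally weakly monotone function $\rc_{\PP}$), so this is not a gap relative to the paper.
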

\begin{example}[Asymptotic Bound for  Program \eqref{eq:finalFirstLeadingEx} Continued]
  \label{ex:its-leading-asymptotic-bound}
  In \Cref{ex:its-leading-ex-asymptotic}, for $\vect{x} = (x, y, z, u)$, we
  inferred
  \(\rc_{\PP}(\size{\vect{x}\sigma_n}) \in \Omega(n^4)\) where \(x\sigma_n = n\) and \(y\sigma_n = z\sigma_n = u\sigma_n = 0\).
  Hence, we have $\size{\vect{x}\sigma_n} = |n| = n \in \OO(n^1)$.
  By \Cref{lem:its-to-irc}(a), we obtain \(\rc_{\PP}(n) \in \Omega(n^{\frac{4}{1}}) =
  \Omega(n^4)\) for the program $\PP = \{ \eqref{eq:finalFirstLeadingEx} \}$.
Due to the soundness of the processors for program simplification in \Cref{sec:simplification}, we also
  have  \(\rc_{\widetilde{\PP}}(n) \in
  \Omega(n^4)\) for the program $\widetilde{\PP}$ in \Cref{fig:its-leading-ex}.
\end{example}

In cases like \Cref{ex:its-sqrt}, \Cref{lem:its-to-irc} even allows us to infer sub-linear bounds.

\begin{example}[\Cref{ex:its-sqrt} Continued]
  \label{ex:its-sqrt-cont}
  Reconsider the program from \Cref{ex:its-sqrt}.
  By \Cref{def:its-solutions-limit-problem}, the family \(\sigma_n\) with \(x\sigma_n=n^2
  + 1\) and \(y\sigma_n = n\) is a solution of the initial limit problem
  \(\{(x-y^2)^{+_!}, y^+\}\). (Our implementation chooses $(x-y^2)^{+_!}$ instead of
  $(x-y^2)^{+}$ in the initial limit problem, because in this way, the limit problem can be
  solved by our technique, see \Cref{subsec:its-transforming-limit-problems}.)
  Due to \Cref{thm:its-asymptotic-bounds}, this proves
  \(\rc_{\PP}(\size{\vect{x}\sigma_n}) \in \Omega(n)\) for the program variables
  $\vect{x} = (x,y)$.
  As $\size{\vect{x}\sigma_n} = n^2 + 1 + n \in \OO(n^2)$, \Cref{lem:its-to-irc}(a) results
  in \(\rc_{\PP}(n) \in \Omega(n^{\frac{1}{2}}) = \Omega(\sqrt{n})\).

If the cost of the rule from \Cref{ex:its-sqrt} was $2^y$, then $\sigma_n$ would still be a
solution of the initial limit problem \(\{(x-y^2)^{+_!}, (2^y)^+\}\). So we
would obtain \(\rc_{\PP}(\size{\vect{x}\sigma_n}) \in \Omega(2^n)\) due to
\Cref{thm:its-asymptotic-bounds} and thus \(\rc_{\PP}(n) \in
\Omega(e^{\sqrt{n}})\) for an $e > 1$ due to
\Cref{lem:its-to-irc}(b). Intuitively, the exponent $\sqrt{n}$ expresses that the
cost grows exponentially w.r.t.\ $y$, where the guard $x > y^2$ implies $|y| \in
\OO(\sqrt{\size{\vect{x}}})$, i.e., $y$ is bounded by the square root of the
input size.

The reason why we cannot specify $e$ in \Cref{lem:its-to-irc}(b) is that it depends on the coefficients of
$g(n) = \size{\vect{x}\sigma_n}$, but \Cref{lem:its-to-irc} only requires $g(n) \in
\OO(n^d)$. Thus, a variant of \Cref{lem:its-to-irc} where the polynomial $g$ is
known would allow us to compute $e$. Our implementation simply reports that the
runtime is at least exponential if \Cref{lem:its-to-irc}(b) applies and
$d=1$.
\end{example}

\subsection{Transforming Limit Problems}
\label{subsec:its-transforming-limit-problems}

In order to use \Cref{thm:its-asymptotic-bounds} (and \Cref{lem:its-to-irc}) for the
automatic inference of lower bounds, we still have to show how to find a family of
substitutions $\sigma_n$ automatically that solves the initial limit problem of a
program's rule.

A limit problem $L$ is \emph{trivial} if all expressions in \(L\) are variables and there
is no variable \(x\) with \(x^{\bullet_1},x^{\bullet_2} \in L\) and \(\bullet_1 \neq
\bullet_2\).
For trivial limit problems $L$ we can immediately obtain a particular solution
$\sigma^{L}_n$ which instantiates variables ``according to $L$''.

\begin{lemma}[Solving Trivial Limit Problems]
  \label{Solving Trivial Limit Problems}
  Let  \(L\) be a trivial limit problem.
  Then  \(\sigma_n^{L}\) is a solution of $L$ where for all \(n \in \NN\),
  \(\sigma_n^{L}\) is defined as follows:
  \[
    x\sigma_n^{L} =
    \begin{cases}[r]
      n & \text{if } x^{+_{\phantom{!}}} \in L\\
      -n & \text{if } x^{-_{\phantom{!}}} \in L\\
      1 & \text{if } x^{+_!} \in L\\
      -1 &\text{if } x^{-_!} \in L\\
      0 &\text{otherwise}\\
    \end{cases}
  \]
\end{lemma}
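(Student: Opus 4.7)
The plan is to verify, directly from \Cref{def:its-solutions-limit-problem}, that for every element $a^{\bullet} \in L$ the function $\lambda n.\, a\sigma_n^L$ satisfies $\bullet$. Since $L$ is trivial, every such $a$ is in fact a variable $x$, so it suffices to check the four cases for $\bullet$.

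First I would observe that $\sigma_n^L$ is well-defined: the triviality of $L$ requires that no variable $x$ occurs with two different markers, so the piecewise definition of $x\sigma_n^L$ assigns a unique integer for each $x$ and each $n \in \NN$. In particular $\sigma_n^L$ is an integer substitution whose domain contains $\VV(L)$, so it matches the shape of substitution required by \Cref{def:its-solutions-limit-problem}.

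Next I would go through the four cases. If $x^{+} \in L$, then $\lambda n.\, x\sigma_n^L = \lambda n.\, n$, and for any $m \in \RR$ we can pick $n_0 = \lceil m \rceil$ to see that $n \geq m$ for all $n \geq n_0$; hence $\lim_{n \mapsto \infty} n = \infty$, so the function satisfies $+$. Symmetrically, for $x^{-} \in L$ we get $\lambda n.\, -n$ with $\lim_{n \mapsto \infty}(-n) = -\infty$. If $x^{+_!} \in L$ then $\lambda n.\, x\sigma_n^L$ is the constant function $1$, whose limit is $1 > 0$, matching $+_!$. Analogously $x^{-_!} \in L$ yields the constant $-1 < 0$, matching $-_!$. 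The ``otherwise'' branch of the definition (used for variables in $\dom(\sigma_n^L)$ that do not appear marked in $L$) imposes no condition and therefore cannot violate the solution property.

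I do not anticipate any real obstacle: the only subtle point is that each element of a trivial limit problem is a variable, so the argument reduces to reading off four elementary limits; there are no arithmetic expressions to analyze, and the well-definedness follows immediately from the definition of triviality.
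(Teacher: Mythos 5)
Your proof is correct and follows essentially the same route as the paper's: a direct case analysis on the four markers, checking that each elementary limit satisfies the corresponding $\bullet$ from \Cref{def:its-solutions-limit-problem}. The extra remarks on well-definedness (via triviality of $L$) and on the ``otherwise'' branch are harmless additions that the paper leaves implicit.
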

\begin{proof}
  If $x^+ \in L$ (resp.\ $x^- \in L$), then $x\sigma_n^{L} = n$ (resp.\ $x\sigma_n^{L} =
  -n$) and thus, $\lim_{n \mapsto \infty} x\sigma_{n} = \lim_{n \mapsto \infty} n =
  \infty$ (resp.\ $\lim_{n \mapsto \infty} x\sigma_{n} = \lim_{n \mapsto \infty} -n =
  -\infty$), i.e., $\lambda n .\  x\sigma_n$ satisfies $+$ (resp.\ $-$).
  If $x^{+_!} \in L$ (resp.\ $x^{-_!} \in L$), then $x\sigma_n^{L} = 1$ (resp.\ $x\sigma_n^{L} = -1$).
  Thus, $\lim_{n \mapsto \infty} x\sigma_{n} = 1$ (resp.\ $\lim_{n \mapsto \infty}
  x\sigma_{n} = -1$), i.e., $\lambda n .\  x\sigma_n$ satisfies $+_!$ (resp.\ $-_!$).
  Hence, $\sigma_{n}^{L}$ is a solution of $L$.
\end{proof}

For instance, if \(\VV(\alpha) = \{x,y,\tv\}\) and \(L=\{x^+,y^{-_!}\}\), then $L$ is a
trivial limit problem and \(\sigma_n^L\) with \(x\sigma_n^L = n, y \sigma_n^L = -1\), and
\(\tv\,\sigma_n^L = 0\) is a solution for \(L\).

However, in general the initial limit problem $L =
\{a_1^{\bullet_1},\dots,a_k^{\bullet_k}, \cost(\alpha)^+\}$ is not trivial.
Therefore, we now define a transformation $\leadsto$ to simplify limit problems until one reaches a trivial problem.
With our transformation, $L \leadsto L'$ ensures that each solution of $L'$ also gives
rise to a solution of $L$.

If $L$ contains $f(a_1,a_2)^\bullet$ for some standard arithmetic operation $f$ like
addition, subtraction, multiplication, division, or exponentiation, we use a so-called
\emph{limit vector} \((\bullet_1,\bullet_2)\) with $\bullet_i\!\in\!\{+,-,+_!,-_!\}$ to
characterize for which kinds of arguments the operation $f$ is increasing (if $\bullet =
+$), decreasing (if $\bullet = -$), or a positive or negative constant (if
$\bullet = +_!$ or $\bullet = -_!$).\footnote{
  To ease the presentation, we restrict ourselves to binary operations $f$.
  For operations of arity \(k\), one would need limit vectors of the form $(\bullet_1, \ldots, \bullet_k)$.
}
Then $L$ can be transformed into the new limit problem \((L \setminus
\{f(a_1,a_2)^{\bullet}\}) \cup \{a_1^{\bullet_1},a_2^{\bullet_2}\}\).

For example, \((+,+_!)\) and $(+, -_!)$ are increasing limit vectors for subtraction.
The reason is that $a_1 - a_2$ is increasing if $a_1$ is increasing and $a_2$ is a constant.
Hence, our transformation $\leadsto$ allows us to replace $(a_1 - a_2)^+$ by $a_1^+$ and
$a_2^{+_!}$
or by  $a_1^+$ and
$a_2^{-_!}$.

\begin{definition}[Limit Vectors]
Let $f: \RR \to \RR$ be a function and let $\bullet_1,\bullet_2 \in \{+,-,+_!,-_!\}$.
We say that $(\bullet_1,\bullet_2)$ is an
\emph{increasing} (resp.\ \emph{decreasing}) \emph{limit vector} for $f$ if the function
$\lambda n .\  f(g(n),h(n))$ satisfies $+$ (resp.\ $-$) for any functions $g$ and $h$ that
satisfy $\bullet_1$ and $\bullet_2$, respectively.
Similarly, $(\bullet_1,\bullet_2)$ is a \emph{positive} (resp.\ \emph{negative})
\emph{limit vector} for $f$ if $\lambda n .\  f(g(n),h(n))$ satisfies $+_!$ (resp.\ $-_!$)
for any functions $g$ and $h$ that satisfy $\bullet_1$ and $\bullet_2$, respectively.
\end{definition}

With this definition, $(+,+_!)$ and $(+,-_!)$ are indeed an increasing limit vectors for subtraction,
since $\lim_{n \mapsto \infty} g(n) = \infty$ and $\lim_{n \mapsto \infty} h(n)
= m$ with $m > 0$ or $m < 0$ implies $\lim_{n \mapsto \infty} (g(n) - h(n)) = \infty$.
In other words, if $g(n)$ satisfies $+$ and $h(n)$ satisfies $+_!$ or  $-_!$, then $g(n) - h(n)$
satisfies $+$ as well.
In contrast, $(+,+)$ is not an increasing limit vector for subtraction.
To see this, consider the functions $g(n) = h(n) = n$. Both $g(n)$ and $h(n)$ satisfy $+$,
whereas $g(n) - h(n) = 0$ does not satisfy $+$.
Similarly, $(+_!,+_!)$ is not a positive limit vector for subtraction, since for $g(n) =
1$ and $h(n) = 2$, both $g(n)$ and $h(n)$ satisfy $+_!$, but $g(n) - h(n) = -1$ does not
satisfy $+_!$.

Limit vectors can be used to simplify limit problems,
as in  \ref{it:its-transform-limit-problem-A} in the following definition.
Moreover, for numbers $m \in \RR$, one can easily simplify constraints of the form
$m^{+_!}$ and $m^{-_!}$ (e.g., $2^{+_!}$ is obviously satisfied since $2 > 0$),
as in \ref{it:its-transform-limit-problem-B}.

\begin{definition}[\(\leadsto\)]
  \label{def:its-leadstoarrow}
  Let \(L\) be a limit problem. We have:
  \begin{enumerate}[series=leadsto,label=(\Alph*)]
  \item $L \cup \{ f(a_1,a_2)^\bullet \} \leadsto L \cup \{a_1^{\bullet_1},
    a_2^{\bullet_2} \}$ if $\bullet$ is $+$ (resp.\  $-,+_!,-_!$) and
    $(\bullet_1,\bullet_2)$ is an increasing (resp.\  decreasing, positive, negative)
    limit vector for \(f\)
    \label{it:its-transform-limit-problem-A}
  \item $L \cup \{m^{+_!}\} \leadsto L$ if $m \in \RR$ and $m > 0$, $L \cup \{m^{-_!}\} \leadsto L$ if $m \in \RR$ and $m < 0$
    \label{it:its-transform-limit-problem-B}
  \end{enumerate}
\end{definition}

However, transforming a limit problem with $\leadsto$ may also result in
\emph{contradictory} limit problems that contain $x^{\bullet_1}$ and $x^{\bullet_2}$ where
$\bullet_1 \neq \bullet_2$, as the following example illustrates.

\begin{example}[Contradictory Limit Problems -- \Cref{ex:its-sqrt-cont} Continued]
  \label{ex:its-leading-leadsto-derivation}
 The initial limit problem  $\{ (x - y^2)^{+_!}, y^+ \}$  from \Cref{ex:its-sqrt-cont}
 cannot be solved with the current transformation rules.
While $(+, +_!)$ and $(+, -_!)$ are \emph{increasing} limit vector  for subtraction,
the only \emph{positive} limit vector  for subtraction
 is \((+_!,-_!)\). Thus, by \ref{it:its-transform-limit-problem-A} one obtains $\{ (x -
 y^2)^{+_!}, y^+ \} \leadsto
 \{ x^{+_!}, (y^2)^{-_!}, y^+ \}$ which contains the unsolvable requirement $(y^2)^{-_!}$.

 As an alternative, in \Cref{ex:its-sqrt-cont} one could regard the
 initial limit problem  $\{ (x - y^2)^{+}, y^+ \}$ instead. However, here the
 transformation rules fail as well. We
  have
   \[
    \def\arraystretch{1.2}
    \begin{array}{l@{\;\;}l@{\;\;}l}
     &  \{ (x - y^2)^{+}, y^+ \}&\\
      \leadsto& \{ x^+, (y^2)^{+_!}, y^+\}&
      \text{by \ref{it:its-transform-limit-problem-A} with the increasing limit vector \((+,+_!)\) for subtraction}\\

                                                                           \leadsto& \{
      x^{+_!}, y^{+_!}, y^+\}&
\text{by \ref{it:its-transform-limit-problem-A} with the positive limit vector \((+_!,+_!)\) for multiplication}\\
    \end{array}
    \]
    However, the resulting problem  is contradictory, as it
contains both \(y^{+_!}\) and \(y^+\).
\end{example}

Recall that the guard of the rule from \Cref{ex:its-sqrt} implies $x \geq
y^2 + 1$. If
we substitute \(x\) with its lower bound \(y^2 + 1\) in
the beginning,
then we can reduce the initial limit problem
$\{ (x - y^2)^{+_!}, y^+ \}$
to a trivial one. Hence, we now extend
$\leadsto$ by allowing to apply substitutions.

\begin{definition}[\(\leadsto\) Continued]
  \label{def:its-leadstoarrow-substitutions}
  Let \(L\) be a limit problem and let \(\theta\) be a substitution such that \(x \notin
  \VV(x\theta)\) for all $x \in \dom(\theta)$ and $\theta \circ \sigma$ is an integer
  substitution for each integer substitution $\sigma$ whose domain includes all variables
  occurring in the range of $\theta$.
  Then we have:\footnotemark
  \begin{enumerate}[resume*=leadsto]
  \item $L \leadstox{\theta} L\theta$
    \label{it:its-transform-limit-problem-C}
  \end{enumerate}
\end{definition}
\footnotetext{
  The other rules for $\leadsto$ are implicitly labeled with the identical substitution $\emptyset$.
}

\begin{example}[Applying Substitutions to Limit Problems -- \Cref{ex:its-leading-leadsto-derivation} Continued]
  \label{ex:leading-leadsto-derivation2}
  For the initial limit problem $\{ (x - y^2)^{+_!}, y^+ \}$ from \Cref{ex:its-sqrt-cont},
  we now have
  \[
    \begin{array}{rll}
      \{ (x - y^2)^{+_!}, y^+ \} &\leadstox{\{x / y^2 + 1\}}& \{ 1^{+_!}, y^+ \}\\
                                 &\leadsto& \{ y^+\}\\
    \end{array}
  \]
  i.e., we obtain the trivial limit problem $\{y^+\}$.
  Note that, given an integer substitution $\sigma$ with $y \in \dom(\sigma)$, $\{x / y^2 + 1
  \} \circ \sigma$ is an integer substitution as well.
  By \Cref{Solving Trivial Limit Problems}, the family $\sigma_n^{\{y^+\}}$ with $y \sigma^{\{y^+\}}_n = n$ solves
the resulting trivial limit problem $\{y^+\}$. To obtain a solution for the initial
limit problem    $\{ (x - y^2)^{+_!}, y^+ \}$ one has to
take the substitution $\{x / y^2 + 1\}$ into account that was used in its
transformation. In this way, we get the solution $\sigma_n = \{x / y^2 + 1\} \circ
\sigma_n^{\{y^+\}}$ for the initial limit problem
where $x \, \sigma_n = n^2 + 1$ and $y \, \sigma_n = n$. Thus, we obtain \(\rc_{\PP}(n)
\in \Omega(n^{\frac{1}{2}}) = \Omega(\sqrt{n})\), as in \Cref{ex:its-sqrt-cont}.
\end{example}

Although \Cref{def:its-leadstoarrow-substitutions} requires that $\theta \circ \sigma$ is
an
integer substitution whenever $\sigma$ is an integer substitution,
it is also useful to handle limit problems which
contain expressions that do not evaluate to integer numbers.

\begin{example}[Non-Integer Metering Functions Continued]
  \label{ex:CombiningSubstitutions}
  By chaining\footnote{Note that we cannot instantiate $\tv$ with the metering function that was used to accelerate the loop from \Cref{ex:its-rational-metering}, as it does not map to the integers, i.e., the prerequisites of \Cref{thm:its-instantiation} are not satisfied.} the only initial rule of the program
in
  \Cref{ex:its-rational-metering} with
  the accelerated rule \eqref{eq:its-rational-metering}, we obtain
  \begin{equation}
\label{eq:its-rational-metering-chained}
    \begin{array}{llll}
    \Ff_0(x) &\tox{\tv}{}{}{} &\Ff(x - 2 \tv) &\constr{0 < \tv < \frac{1}{2} x + 1}.
\end{array}
    \end{equation}
  For the initial limit problem $\{\tv^+, (\frac{1}{2}  x +1 -\tv)^{+_!} \}$ we
  get
  \[
    \begin{array}{rll}
      \{\tv^+, (\frac{1}{2}  x + 1 -\tv)^{+_!} \}
                                                             &\leadstox{\{x/2 \, \tv-1\}}& \{\tv^+, \frac{1}{2}^{+_!}\}\\
                                                             &\leadsto& \{\tv^+\}
    \end{array}
  \]
  by
\ref{it:its-transform-limit-problem-C} and
  \ref{it:its-transform-limit-problem-B}. (Our implementation first leaves it open
  whether to choose $(\frac{1}{2}  x +1 -\tv)^{+}$ or $(\frac{1}{2}  x +1 -\tv)^{+_!}$,
  but when transforming the arithmetic expression to $\frac{1}{2}$ by
  \ref{it:its-transform-limit-problem-C}, it finds out that one should use $(\frac{1}{2}
  x +1 -\tv)^{+_!}$ in order to solve the limit problem. We describe the strategy used by
  our implementation and its heuristic to find suitable substitutions $\theta$ for the
  application of rule \ref{it:its-transform-limit-problem-C} at the end of this subsection.)
  By \Cref{Solving Trivial Limit Problems}, the family $\sigma_n^{\{\tv^+\}}$ with $\tv\,\sigma^{\{\tv^+\}}_n = n$
solves
the resulting trivial limit problem $\{\tv^+\}$. Again, to obtain a solution for the original initial
limit problem    $\{\tv^+, (\frac{1}{2}  x + 1 -\tv)^{+_!} \}$ one has to
take the substitution $\{x/2 \, \tv-1\}$ into account, resulting in $\sigma_n$ with $x\sigma_n = 2 \, n-1$ and $\tv\,\sigma_n = n$.
Thus, by \Cref{thm:its-asymptotic-bounds}
we have $\rc_{\{\eqref{eq:its-rational-metering-chained}\}}(\size{x \sigma_n}) =
\rc_{\{\eqref{eq:its-rational-metering-chained}\}}(2 \, n-1) \in
\Omega(\cost(\eqref{eq:its-rational-metering-chained})\,\sigma_n) = \Omega(\tv\,\sigma_n) =
\Omega(n)$. As $2 \, n - 1 \in \OO(n)$, \Cref{lem:its-to-irc}(a) implies
$\rc_{\{\eqref{eq:its-rational-metering-chained}\}}(n) \in \Omega(n)$. By the soundness of the
processors for program
simplification in \Cref{sec:simplification}, we also have $\rc_\PP(n) \in \Omega(n)$ for the original
program in  \Cref{ex:its-rational-metering}.
\end{example}

\Cref{def:its-leadstoarrow-substitutions} requires
that $\theta \circ \sigma$ is an integer substitution for every integer
substitution $\sigma$ whose domain includes all variables
occurring in the range of $\theta$. To check this side-condition automatically, one can again use \Cref{lem:zz-polys}:
If the range of $\theta$ consists of polynomials, then for every $x \in \dom(\theta)$
we only have to check if instantiating the polynomial $x\, \theta$ by finitely many suitable integers
again results in an integer. More precisely,
if $x \, \theta$ contains the variables $x_1,\ldots,x_k$ of degrees $d_1,\ldots,d_k$,
respectively, we check if
$x \, \theta$ maps all arguments from $\{0,\ldots,d_1+1\} \times \ldots \times \{0,\ldots,d_k+1\}$
to integers.

However, up to now we cannot prove that, e.g., a rule \(\alpha\) with \(\guard(\alpha) =
x^2 - x > 0\) and \(\cost(\alpha) = x\) has a linear lower bound, since \((+,+)\) is not
an increasing limit vector for subtraction.
To handle such cases, the following transformation rules allow us to neglect polynomial
sub-expressions if they are ``dominated'' by other polynomials of higher degree or by
exponential sub-expressions.

\begin{definition}[\(\leadsto\) Continued]\label{def:its-leadstoarrow-continued}
  Let \(L\) be a limit problem, let \(\pm \in \{+,-\}\), and let $a,b,c$ be univariate
  polynomials over the same variable.
  Then we have:
  \begin{enumerate}[resume*=leadsto]
  \item $L \cup \{(a \pm b)^{+} \} \leadsto L  \cup \{ a^+\}$
and
      $L \cup \{(a \pm b)^{-} \} \leadsto L  \cup \{ a^-\}$
    if the degree of $a$ is greater than the degree of $b$
    \label{it:its-transform-limit-problem-D}
  \item $L \cup \{(a^c \pm b)^+ \} \leadsto L \cup \{ (a-1)^+, c^+ \}$
and 
$L \cup \{(a^c \pm b)^+ \} \leadsto L \cup \{ (a-1)^+, c^{+_!} \}$
    \label{it:its-transform-limit-problem-E}
  \end{enumerate}
\end{definition}

Thus, $\{ (x^2 - x)^+ \} \leadsto \{ (x^2)^+ \} = \{ (x \cdot x)^+ \} \leadsto \{ x^+ \}$
by \ref{it:its-transform-limit-problem-D} and \ref{it:its-transform-limit-problem-A} with
the increasing limit vector $(+, +)$ for multiplication.
The intuition for \ref{it:its-transform-limit-problem-E} is that any exponential expression
$a^c$ dominates any polynomial expression $b$ provided that the base $a$ is greater than 1
and the exponent $c$ grows towards $\omega$.

\begin{example}[\Cref{ex:its-accelerating-fib} Continued]
  We continue \Cref{ex:its-accelerating-fib}, where  the Fibonacci program was
  simplified to the program consisting just of the
  rule
\setcounter{auxctr}{\value{equation}}
\setcounter{equation}{\value{eq:FibonacciAcceleratedCtr}}
 \begin{equation}
       \Ff_0(x) \tox{2^{\frac{1}{2}  x - 1} - 1}{}{} \emptyset \constr{x > 1}.
  \end{equation}
Here, we
  obtain the initial limit problem $\{(x-1)^+, (2^{\frac{1}{2}  x - 1} - 1)^+ \}$.
  We get:
  \[
    \begin{array}{l@{\;\;}l@{\;\;}l}
      &  \{(x-1)^+, (2^{\frac{1}{2}  x - 1} - 1)^+ \}\\[.1cm]
      \leadsto& \{x^+, (2^{\frac{1}{2}
    x - 1} - 1)^+\}
&\text{by \ref{it:its-transform-limit-problem-D}} \\[.1cm]
                                                       \leadsto& \{x^+,
      1^{+_!},(\frac{1}{2} x - 1)^+ \}&
      \text{by \ref{it:its-transform-limit-problem-E}}\\[.1cm]
   \leadsto& \{x^+,
      1^{+_!},(\frac{1}{2} x)^+ \}&
 \text{by \ref{it:its-transform-limit-problem-A} with the increasing limit vector $(+,
   +_!)$ for subtraction}\\[.1cm]
  \leadsto& \{x^+,
      1^{+_!},(\frac{1}{2})^{+_!} \}&
 \text{by \ref{it:its-transform-limit-problem-A} with the increasing limit vector $(+_!,
   +)$ for multiplication}\\[.1cm]   \leadsto^2& \{x^+\}&
 \text{by \ref{it:its-transform-limit-problem-B} (twice)}
    \end{array}
    \]
    By \Cref{Solving Trivial Limit Problems}, the family $\sigma_n^{\{x^+\}}$ with
$x\sigma_n^{\{x^+\}} = n$ solves
the resulting trivial limit problem $\{x^+\}$ and hence, it also solves the initial limit
problem of Rule \eqref{eq:FibonacciAccelerated}. Thus, \Cref{thm:its-asymptotic-bounds}
implies $\rc_{\{\eqref{eq:FibonacciAccelerated}\}}(\size{x \sigma_n}) =
\rc_{\{\eqref{eq:FibonacciAccelerated}\}}(n) \in
\Omega(\cost(\eqref{eq:FibonacciAccelerated})\,\sigma_n) = \Omega(2^{\frac{1}{2} n - 1} - 1)
= \Omega(2^{\frac{1}{2} n}) = \Omega(\sqrt{2}^n) \subset \Omega(1.4^n)$.  Due to the soundness of the
program
simplification in \Cref{sec:simplification} and \Cref{sec:non-linear}, this also implies that
the runtime complexity of the original Fibonacci program $\PP$ from \Cref{ex:its-fib}
is exponential, i.e.,  $\rc_\PP(n) \in \Omega(\sqrt{2}^n)$.
\end{example}
\setcounter{equation}{\value{auxctr}}

\begin{figure}[t]
  \fbox{
    \vspace*{-.4cm}
    
\hspace*{-.4cm}\begin{minipage}{13.5cm}
 \begin{enumerate}[series=leadsto,label=(\Alph*)]
  \item $L \cup \{ f(a_1,a_2)^\bullet \} \leadsto L \cup \{a_1^{\bullet_1},
    a_2^{\bullet_2} \}$ if $\bullet$ is $+$ (resp.\  $-,+_!,-_!$) and
    $(\bullet_1,\bullet_2)$ is an increasing (resp.\  decreasing, positive, negative)
    limit vector for \(f\)
  \item $L \cup \{m^{+_!}\} \leadsto L$ if $m \in \RR$ and $m > 0$, $\;L \cup \{m^{-_!}\}
    \leadsto L$ if $m \in \RR$ and $m < 0$
  \item $L \leadstox{\theta} L\theta$ if
 \(x \notin
  \VV(x\theta)\) for all $x \in \dom(\theta)$, and if $\sigma$ is an  integer substitution
  with $\VV(\mathrm{range}(\theta)) \subseteq \dom(\sigma)$, then  $\theta \circ \sigma$
  is also an integer
  substitution
 \item $L \cup \{(a \pm b)^{\bullet} \} \leadsto L  \cup \{ a^\bullet\}$ if ${\bullet}
    \in \{+,-\}$
    and $\mathrm{degree}(a) >\mathrm{degree}(b)$ for the univariate polynomials $a,b$
\item $L \cup \{(a^c \pm b)^+ \} \leadsto L \cup \{ (a-1)^\bullet, c^+ \}$ if ${\bullet}
  \in \{+,+_!\}$
  for the univariate polynomials $a,b,c$
 \end{enumerate}
 \end{minipage}}
  \caption{Definition of our Transformation for a Limit Problem $L$}
  \label{fig:leadsTo}
\end{figure}

Note that \ref{it:its-transform-limit-problem-E} can also be used to handle limit problems
like $(a^c)^+$ (by choosing $b = 0$). We summarize the full definition of our
transformation $\leadsto$ of limit problems in
\Cref{fig:leadsTo}.
\Cref{thm:its-leadsto-correct} states that our transformation \(\leadsto\) is indeed correct.
As already illustrated in \Cref{ex:leading-leadsto-derivation2,ex:CombiningSubstitutions}, when constructing the solution from the resulting
trivial limit problem, one has to take the substitutions into account which were used in
the derivation.

\begin{theorem}[Correctness of \(\leadsto\)]
  \label{thm:its-leadsto-correct}
  If $L \leadstox{\theta} L'$ and the family $\sigma_n$ is a solution of $L'$, then
  $\theta \circ \sigma_n$ is a solution of $L$.
\end{theorem}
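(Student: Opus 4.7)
The plan is to proceed by case analysis on which of the five transformation rules~\ref{it:its-transform-limit-problem-A}--\ref{it:its-transform-limit-problem-E} was used to justify $L \leadstox{\theta} L'$. For rules~\ref{it:its-transform-limit-problem-A}, \ref{it:its-transform-limit-problem-B}, \ref{it:its-transform-limit-problem-D}, \ref{it:its-transform-limit-problem-E} the substitution $\theta$ is implicitly the identity, so it suffices to show that $\sigma_n$ itself solves $L$. In each of these cases only one element of $L$ is affected, and all other elements occur verbatim in $L'$, so $\sigma_n$ trivially satisfies their limit annotations. Thus the real work in each case is to verify the single modified constraint using \Cref{def:its-solutions-limit-problem}.

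For rule~\ref{it:its-transform-limit-problem-A}, if $\sigma_n$ solves $L \cup \{a_1^{\bullet_1}, a_2^{\bullet_2}\}$, then $\lambda n.\,a_i \sigma_n$ satisfies $\bullet_i$, and the very definition of ``$(\bullet_1,\bullet_2)$ is an (increasing/decreasing/positive/negative) limit vector for $f$'' yields that $\lambda n.\,f(a_1,a_2)\sigma_n = \lambda n.\, f(a_1 \sigma_n, a_2 \sigma_n)$ satisfies $\bullet$. Rule~\ref{it:its-transform-limit-problem-B} is immediate since any $m > 0$ (resp.\ $m < 0$) trivially satisfies $+_!$ (resp.\ $-_!$). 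For rule~\ref{it:its-transform-limit-problem-C}, observe that $L\theta = \{(a\theta)^\bullet \mid a^\bullet \in L\}$; for each such $a^\bullet \in L$ we have $(a\theta)\sigma_n = a\,(\theta \circ \sigma_n)$, so $\lambda n.\,a\,(\theta \circ \sigma_n)$ satisfies $\bullet$. The side condition on $\theta$ guarantees that $\theta \circ \sigma_n$ is again an integer substitution with $\VV(L) \subseteq \dom(\theta \circ \sigma_n)$ (because $\sigma_n$ is an integer substitution whose domain contains $\VV(L\theta)$, which in turn contains the variables in the range of $\theta$ restricted to $\dom(\theta) \cap \VV(L)$).

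The main obstacle, and where the argument becomes less routine, is rules~\ref{it:its-transform-limit-problem-D} and~\ref{it:its-transform-limit-problem-E}, since here we rely on polynomial and exponential asymptotic dominance. For~\ref{it:its-transform-limit-problem-D}, let $x$ be the shared variable of the univariate polynomials $a,b$ with $\deg(a) > \deg(b)$, and write $v_n = x \sigma_n \in \ZZ$. If $\lambda n.\, a\sigma_n$ satisfies $+$ (the $-$ case is symmetric), then $a(v_n) \to \infty$, which forces $|v_n| \to \infty$. Since $|b(v_n)|$ grows like $|v_n|^{\deg(b)}$ while $|a(v_n)|$ grows like $|v_n|^{\deg(a)}$, the ratio $b(v_n)/a(v_n) \to 0$, and thus $(a \pm b)(v_n) = a(v_n)\,(1 \pm b(v_n)/a(v_n)) \to \infty$, establishing $(a \pm b)^+$. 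For~\ref{it:its-transform-limit-problem-E}, suppose $(a-1)\sigma_n$ satisfies $\bullet \in \{+,+_!\}$ and $c\sigma_n$ satisfies $+$. Then eventually $a(v_n) \geq 1 + \varepsilon$ for some fixed $\varepsilon > 0$, and $c(v_n) \to \infty$. Since $c$ is a polynomial whose evaluation diverges, $\deg(c) \geq 1$, hence $|v_n| \to \infty$. Consequently $a(v_n)^{c(v_n)} \geq (1+\varepsilon)^{c(v_n)}$ grows at least exponentially in $c(v_n)$, which dominates the polynomial growth of $|b(v_n)|$, yielding $(a^c \pm b)(v_n) \to \infty$. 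Putting all five cases together completes the proof.
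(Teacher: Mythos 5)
Your proposal is correct and follows essentially the same route as the paper's proof: a case analysis over the rules \ref{it:its-transform-limit-problem-A}--\ref{it:its-transform-limit-problem-E}, using the definition of limit vectors for \ref{it:its-transform-limit-problem-A}, triviality for \ref{it:its-transform-limit-problem-B}, composition of substitutions for \ref{it:its-transform-limit-problem-C}, and asymptotic dominance of the higher-degree (resp.\ exponential) summand for \ref{it:its-transform-limit-problem-D} and \ref{it:its-transform-limit-problem-E}. Your treatment of \ref{it:its-transform-limit-problem-D} via the ratio $b(v_n)/a(v_n)\to 0$ is in fact marginally more careful than the paper's, which assumes w.l.o.g.\ that $x\sigma_n$ tends to $+\infty$ or $-\infty$ rather than arguing directly from $|x\sigma_n|\to\infty$.
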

\begin{proof}
  First assume that the step from $L$ to $L'$ was done by \Cref{def:its-leadstoarrow}
  \ref{it:its-transform-limit-problem-A}.
  Since $\sigma_n$ is a solution for $L'$, it is a solution for $a_1^{\bullet_1}$ and
  $a_2^{\bullet_2}$, where $(\bullet_1,\bullet_2)$ is an increasing (resp.\ decreasing,
  positive, or negative) limit vector for \(f\).
  As $\sigma_n$ is a solution for both $a_i^{\bullet_i}$, the function $\lambda n .\  a_i\sigma_n$ satisfies $\bullet_i$.
  By the definition of limit vectors, this implies that $\lambda n .\  f(a_1\sigma_n,
  a_2\sigma_n) = \lambda n .\  f(a_1,a_2)\, \sigma_n$ satisfies $\bullet$.
  Thus, $\sigma_n$ is a solution for $f(a_1,a_2)^\bullet$.

  If the step from  $L$ to $L'$ was done by \Cref{def:its-leadstoarrow}
  \ref{it:its-transform-limit-problem-B}, then every solution $\sigma_n$ for $L'$ is also
  a solution for $L$, since $m\,\sigma_n = m$ holds for any $m \in \RR$.

  If the step from $L$ to $L'$ was done by \Cref{def:its-leadstoarrow-substitutions}
  \ref{it:its-transform-limit-problem-C}, then let $\sigma_n$ be a solution for $L' = L \theta$.
  Then for every $(a\,\theta)^\bullet \in L \theta$, $\lambda n .\  a\,\theta\sigma_n$
  satisfies $\bullet$ and hence $\theta \circ \sigma_n$ is a solution for $a^\bullet$.
  Thus, $\theta \circ \sigma_n$ is a solution for $L$.

  If the step from  $L$ to $L'$ was done by \Cref{def:its-leadstoarrow-continued}
  \ref{it:its-transform-limit-problem-D}, then let  $\sigma_n$ be a solution for
  $a^\bullet$.
  Since the polynomial $a$ only contains a single variable (say, $x$), we must have
  $\lim_{n \mapsto \infty} x\sigma_n = \infty$ or $\lim_{n \mapsto \infty} x\sigma_n =
  -\infty$.
  W.l.o.g, let  $\lim_{n \mapsto \infty} x\sigma_n = \infty$ and $\bullet = +$ (the other
  cases work analogously).
  Then $\lim_{n \mapsto \infty} a\,\sigma_n = \infty$ implies $\lim_{x \mapsto \infty} a = \infty$.
  Since the degree of $a$ is greater than the degree of $b$, this means $\lim_{x \mapsto
    \infty} (a \pm b) = \infty$ and hence $\lim_{n \mapsto \infty} (a \pm b)\,\sigma_n =
  \infty$.

  For \Cref{def:its-leadstoarrow-continued} \ref{it:its-transform-limit-problem-E}, the
  proof is analogous.
  Here for large enough $n$, $a^c\sigma_n$ is an exponential function with a base $> 1$.
  Since $\sigma_n$ is a solution for $c^+$, we again have $\lim_{n \mapsto \infty}
  x\sigma_n = \infty$ or $\lim_{n \mapsto \infty} x\sigma_n = -\infty$.
  Thus  $a^c\sigma_n$ is an exponential function which grows faster than $b\,\sigma_n$ for $n \mapsto \infty$.
  Hence, we obtain $\lim_{n \mapsto \infty} (a^c \pm b)\,\sigma_n = \infty$. \qedhere
\end{proof}

So to find an asymptotic lower bound for the runtime of a simplified program with a rule
$\alpha$,
we start with an initial limit problem $L=\{ a_1^{\bullet_1}, \ldots, a_k^{\bullet_k},
\cost(\alpha)^+\}$ that represents $\guard(\alpha)$ and requires non-constant cost, and
transform $L$ with $\leadsto$ into a trivial limit problem $L'$, i.e., \(L
\leadstox{{}_{\theta_1}} \ldots \leadstox{{}_{\theta_m}} L'\).
As mentioned before,
for automation one should leave the $\bullet_i$  in the initial problem $L$ open, and
only instantiate them by a value from $\{+, +_!\}$ when this is needed to apply a
particular rule of the transformation $\leadsto$.
Then by \Cref{Solving Trivial Limit Problems} and \Cref{thm:its-leadsto-correct},
the resulting family  $\sigma^{L'}_n$  of substitutions gives rise to a solution $\sigma_n
= \theta_1 \circ  \ldots \circ \theta_m \circ \sigma^{L'}_n$ of $L$.
Thus by \Cref{thm:its-asymptotic-bounds}, we have $\rc_{\PP}(\size{\vect{x}\sigma_n}) \in
\Omega(\cost(\alpha)\, \sigma_n)$, which leads to a lower bound for \(\rc_{\PP}(n)\) with \Cref{lem:its-to-irc}.

Our implementation uses the following strategy to apply the rules from
\Cref{def:its-leadstoarrow}, \ref{def:its-leadstoarrow-substitutions}, and \ref{def:its-leadstoarrow-continued}
for the transformation $\leadsto$.
Initially, we reduce the number of variables by propagating bounds implied by the guard of
the rule $\alpha$. For example,
if
an arithmetic expression  \(a\) with $x \notin \VV(a)$ is a minimal upper or a maximal
lower bound on $x$ (i.e., $\guard(\alpha)$ implies $x \leq a$ but not $x \leq a-1$, or
$\guard(\alpha)$ implies $x \geq a$ but not $x \geq a+1$),
then we may apply the substitution \(\{x / a\}\) to the initial limit problem by the
rule \ref{it:its-transform-limit-problem-C}.
Thus, we can, e.g., simplify the limit problem from \Cref{ex:its-sqrt-cont} by instantiating \(x\)
with \(y^2 + 1\), see \Cref{ex:leading-leadsto-derivation2}.
In the same way, the substitution which is applied in
\Cref{ex:CombiningSubstitutions} can
be found automatically.
Afterwards, we use \ref{it:its-transform-limit-problem-B} and
\ref{it:its-transform-limit-problem-D} with highest and
\ref{it:its-transform-limit-problem-E} with second highest priority.
The third priority is trying to apply \ref{it:its-transform-limit-problem-A} to univariate
expressions (since processing univariate expressions helps to guide the search).
As fourth priority, we  apply \ref{it:its-transform-limit-problem-C} with a suitable substitution
$\{x/m\}$ if $x^{+_!}$ or $x^{-_!}$ occurs in the current limit problem.
Otherwise, we apply \ref{it:its-transform-limit-problem-A} to multivariate expressions.
Since $\leadsto$ is well founded and, except for \ref{it:its-transform-limit-problem-C},
finitely branching, one may also backtrack and explore alternative applications of
$\leadsto$.
In particular, we backtrack if we obtain a contradictory limit problem.
Moreover, if we obtain a trivial limit problem $L'$ where \(\cost(\alpha)\,\sigma_n^{L'}\) is a polynomial, but $\cost(\alpha)$ is a polynomial of higher degree or an exponential function, then we backtrack to search for other solutions which might lead to a higher lower bound.
However, our implementation can of course fail, since solvability of limit problems is undecidable (due to Hilbert's Tenth Problem).

\begin{example}[Solving the Limit Problem of the Program \eqref{eq:finalFirstLeadingEx}]
  For the program $\PP = \{ \eqref{eq:finalFirstLeadingEx} \}$ that results from the simplification of
  the program $\widetilde{\PP}$ in \Cref{fig:its-leading-ex}, we obtain the initial limit
  problem
  $\{ (\frac{1}{2}  x^2+\frac{1}{2} x - 1)^{+}, (\frac{1}{8}  x^4 + \frac{1}{4}  x^3 +
  \frac{7}{8}x^2 + \frac{7}{4} x)^+ \}$. Here,  we have:
  \[
    \begin{array}{lll@{\hspace*{.1cm}}l}
&&\multicolumn{2}{l}{\{ (\frac{1}{2}  x^2+\frac{1}{2} x - 1)^{+}, (\frac{1}{8}  x^4 + \frac{1}{4}  x^3 +
  \frac{7}{8}x^2 + \frac{7}{4} x)^+ \}}\\[0.3cm]
    &\leadstox{\ref{it:its-transform-limit-problem-D}}^2\hspace*{-.2cm}& \{ (\frac{1}{2}  x^2)^{+}, (\frac{1}{8}  x^4)^{+}  \} &
    \text{as the degree of } \frac{1}{2}  x^2 \text{ is greater than the degree of }
    \frac{1}{2} x - 1 \text{ and}\\
    &&& \text{the degree of } \frac{1}{8}  x^4 \text{ is greater than the degree of }
     \frac{1}{4}  x^3 +
  \frac{7}{8}x^2 + \frac{7}{4} x\\[0.2cm]
    &\leadstox{\ref{it:its-transform-limit-problem-A}}^2\hspace*{-.2cm}& \{ \frac{1}{2}^{+_!},
  (x^2)^{+}, \frac{1}{8}^{+_!},  (x^4)^{+} \} & \text{using the increasing limit vector }
  ({+}_!, {+}) \text{ for multiplication} \\[0.2cm]
    &\leadstox{\ref{it:its-transform-limit-problem-B}}^2\hspace*{-.2cm}& \{ (x^2)^{+},  (x^4)^{+} \} \\[0.2cm]
    &\leadstox{\ref{it:its-transform-limit-problem-A}}^2\hspace*{-.2cm}& \{ x^{+} \} & \text{using the increasing limit vector } ({+}, {+}) \text{ for multiplication}
    \end{array}
  \]
  The resulting trivial limit problem $\{ x^{+} \}$ gives rise to the solution
  \eqref{eq:sigmanFirstLeadingEx} and hence proves $\rc_\PP(|\vect{x}\sigma_n|) \in
  \Omega(n^4)$ for $\vect{x} =(x,y,z,u)$, see \Cref{ex:its-leading-ex-asymptotic}.
  As $|\vect{x}\sigma_n| \in \OO(n)$, as in \Cref{ex:its-leading-asymptotic-bound} we
  get $\rc_{\PP}(n) \in \Omega(n^4)$ and thus also
  $\rc_{\widetilde{\PP}}(n) \in \Omega(n^4)$.
\end{example}


\subsection{Solving Limit Problems via SMT}
\label{sec:asymptotic-smt}

While the calculus presented in \Cref{subsec:its-transforming-limit-problems} permits a
precise analysis of simplified programs, it can also be quite expensive in practice.
The reason is that the next $\leadsto$-step is rarely unique and thus backtracking is
often unavoidable in order to find a good lower bound.
We now show how limit problems can be encoded as conjunctions of polynomial inequations.
In many cases, this allows us to use SMT solvers to solve limit problems more efficiently.

Essentially, the idea is to search for a solution $\sigma_n$ (i.e., a suitable family of
substitutions) that instantiates each variable $x$ in the limit problem by a linear
polynomial $x \sigma_n = m_x \cdot n + k_x$. Here, we leave the integers $m_x$ and $k_x$
open (i.e., they are \emph{abstract coefficients}) and we use SMT solving to find an
instantiation of the abstract coefficients by integer numbers such that $\sigma_n$ becomes a
solution for the limit problem.

Thus, if $a$ is a polynomial arithmetic expression,
then $a  \sigma_n = a \; \{x / (m_x \cdot n + k_x) \mid x \in \VV(a)\}$ is a univariate
polynomial over $n$ with abstract coefficients.
If $a$ is of degree $d$, then $a \sigma_n$ can be rearranged to the form $a_{d}
\cdot n^d + \ldots + a_{0} \cdot n^0$
where the $a_i$ are arithmetic expressions over the abstract coefficients $\{m_x,k_x \mid
x \in \VV(a)\}$ that do not contain $n$.

\begin{example}[Encoding the Initial Limit Problem for Program  \eqref{eq:finalFirstLeadingEx}]
  \label{ex:initial-encoding1}
  Consider the initial limit problem
  \(
    \{ \left(\frac{1}{2} x^2 + \frac{1}{2} x - 1\right)^+\}
  \)
  for the program with the rule \eqref{eq:finalFirstLeadingEx}, see
 \Cref{ex:its-leading-ex-solution}.
  We use $\sigma_n$ with $x\sigma_n = m_x \cdot n + k_x$.
  Therefore, we obtain
\[  \begin{array}{rcl}
    \left(\frac{1}{2}  x^2+\frac{1}{2} x - 1\right) \, \sigma_n &=& \frac{1}{2} \cdot (m_x \cdot n + k_x)^2+\frac{1}{2} \cdot (m_x \cdot n + k_x) - 1\\
    &=& a_2 \cdot n^2 + a_1 \cdot n + a_0
\end{array}
\]
  where
  $a_2 = \frac{1}{2}  m_x^2, \,$
  $a_1 = m_x \cdot k_x + \frac{1}{2}  m_x, \,$ and
  $a_0 = \frac{1}{2} k_x^2 + \frac{1}{2} k_x - 1$.
\end{example}

Clearly, we have $\lim_{n \mapsto \infty} a \sigma_n = \infty$ (resp.\ $-\infty$) if and
only if $a_{i} > 0$ (resp.\ $a_{i} < 0$) for some $i > 0$ and $a_{j} = 0$ for all $i+1 \leq j
\leq d$.
Similarly, $\lim_{n \mapsto \infty} a \sigma_n$ is a positive (resp.\ negative) constant
if and only if $a_i = 0$ for all $1 \leq i \leq d$ and $a_0 > 0$ (resp.\ $a_0 < 0$).
This allows us to translate the solvability of a limit problem into the satisfiability
of an arithmetic formula.

\begin{definition}[SMT Encoding of Limit Problems]
  \label{def:its-limit-smt}
  Let $a$ be a polynomial arithmetic expression of degree $d$ and
let $\sigma_n$ instantiate each occurring variable $x$ by $m_x \cdot n + k_x$ where $m_x,
k_x$ are abstract coefficients. Let $a \sigma_n =
 a_{d} \cdot n^d + \ldots + a_{0} \cdot n^0$ where $a_0,\ldots,a_d$ do not contain $n$.
  We define
  \[
    \smt(a^\bullet) =
    \begin{cases}
      \bigvee_{i=1}^d \left(a_i > 0 \land \bigwedge_{j=i+1}^d a_j = 0\right) & \text{if } \bullet = + \\
      \bigvee_{i=1}^d \left(a_i < 0 \land \bigwedge_{j=i+1}^d a_j = 0\right) & \text{if } \bullet = - \\
      \bigwedge_{j=1}^d a_j = 0 \land a_0 > 0 & \text{if } \bullet = +_! \\
      \bigwedge_{j=1}^d a_j = 0 \land a_0 < 0 & \text{if } \bullet = -_! \\
    \end{cases}
  \]
  We lift $\smt$ to limit problems $L$ where $a$ is a polynomial for each $a^\bullet \in
  L$ by defining $\smt(L) = \bigwedge_{a^\bullet \in L} \smt(a^\bullet)$.

  Furthermore, given a polynomial cost $\cc$ of degree $d$ with
  $\cc \sigma_n =
  c_{d} \cdot n^d + \ldots + c_{0} \cdot n^0$ where $c_0,\ldots,c_d$ do not contain $n$,
   we define $\smt_{\cc,i}(L) = \smt(L) \land c_i > 0$ for each $1 \leq i \leq d$.
  Finally, for the program variables $\vect{x}$, we define $\smt_{\infty}(L) = \smt(L)
  \land \bigwedge_{x \in \vect{x}} m_x = 0$.
\end{definition}

To solve a limit problem $L$, it suffices to find a solution for $\smt(L)$, because then the
substitution that results from instantiating the abstract coefficients of $\sigma_n$
accordingly is a solution for the limit problem $L$.
However, to maximize the cost $\cc$, one should try to find a solution for
$\smt_{\infty}(L)$ or $\smt_{\cc,i}(L)$ where $i$ is as large as possible.
The reason is that a solution for $\smt_{\infty}(L)$ allows us to deduce unbounded costs
(provided that the initial limit problem contained $\cc^+$, i.e., the cost is
non-constant for each solution of $L$), as the corresponding substitution $\sigma_n$
maps all program variables $x$ to constants
$k_x$ that do not depend on $n$.
A solution for $\smt_{\cc,i}(S)$ allows us to prove a polynomial lower bound whose degree
is at least $i$ via \Cref{thm:its-asymptotic-bounds} and \Cref{lem:its-to-irc} (since
$|\vect{x} \sigma_n | \in \OO(n)$).

\begin{example}[Encoding the Initial Limit Problem for  Program
    \eqref{eq:finalFirstLeadingEx} Continued]
  \label{ex:initial-encoding2}
  We now show how to encode the initial limit problem
  \(
    \{ \left(\frac{1}{2}  x^2+\frac{1}{2}  x-1\right)^{+} \}
  \)
  from \Cref{ex:its-leading-ex-solution}.\footnote{
  For reasons of simplicity, we do not include the cost of the rule in the initial limit
  problem.
}
  Since $x \, \sigma_n = m_{x} \cdot n + k_{x}$, we have $(\frac{1}{2}  x^2+\frac{1}{2}  x-1) \, \sigma_n =
  a_2 \cdot n^2 + a_1 \cdot n + a_0$ with $a_2,a_1,a_0$ as in \Cref{ex:initial-encoding1}.
Thus,
  \[ \begin{array}{rcl}
    \smt\left(\left\{\left(\frac{1}{2}  x^2+\frac{1}{2} x-1\right)^{+}\right\}\right) &=&
    (a_2 > 0 \lor (a_1 > 0 \land  a_2 = 0)).
    \end{array}
  \]
  Now SMT solvers can easily find a solution like, e.g., $\{m_x / 1, k_x / 0\}$.
\end{example}

\begin{example}[Encoding the Initial Limit Problem for \Cref{ex:its-sqrt}]
  \label{ex:initial-encoding-sqrt}
  Next we encode the initial limit problem
  \(
  \{(x-y^2)^{+_!}, y^+\}
  \)
  for \Cref{ex:its-sqrt}.
  Since $x \, \sigma_n = m_{x} \cdot n + k_{x}$ and $y \, \sigma_n = m_y \cdot n + k_y$, we have $(x-y^2) \, \sigma_n =
  m_{x} \cdot n + k_{x} - (m_y \cdot n + k_y)^2 = -m_y^2 \cdot n^2 + (m_x - 2 \cdot m_y \cdot k_y) \cdot n + k_x - k_y^2$.
Thus,
\[
  \smt\left(\{(x-y^2)^{+_!}, y^+\}\right) = (-m_y^2 = 0 \land  m_x - 2 \cdot m_y \cdot k_y = 0 \land k_x - k_y^2 > 0 \land m_y > 0).
  \]
  Here, the first three (in)equations
  are the encoding of $(x-y^2)^{+_!}$ and the last inequation is the encoding of $y^+$.
  As $-m_y^2 = 0$ implies $m_y = 0$, the overall formula is unsatisfiable.
  State-of-the-art SMT solvers can prove unsatisfiability of $\smt\left(\{(x-y^2)^{+_!}, y^+\}\right)$ within milliseconds.
  This is not surprising, since we instantiated $x$ with a non-linear expression in
  \Cref{ex:leading-leadsto-derivation2} in order to find a solution, but
  \Cref{def:its-limit-smt} instantiates $x$ with a \emph{linear} template polynomial.
\end{example}

Thus, \Cref{ex:initial-encoding-sqrt} shows that even if all arithmetic expressions in the
analyzed limit problem are polynomials, $\leadsto$ is still required, i.e., our SMT-based
technique does not subsume the calculus of
\Cref{subsec:its-transforming-limit-problems}.\footnote{
  We could also use non-linear template polynomials for $\sigma_n$, but in any case the
  degree of the template polynomials has
  to be fixed in advance and thus, it may be insufficient for the problem at hand.
}
Note that the new SMT-based technique can be integrated into the calculus from
\Cref{subsec:its-transforming-limit-problems} seamlessly. In other words, one can first
simplify a limit problem with the transformation $\leadsto$ for a few steps and then apply
the SMT-based technique to find a solution for the obtained limit problem. For
instance, the initial limit problem
$\{(x-y^2)^{+_!}, y^+\}$ in
\Cref{ex:initial-encoding-sqrt}
can easily be solved via the SMT encoding from \Cref{def:its-limit-smt} after applying the
substitution $\{x / y^2 + 1\}$ as in \Cref{ex:leading-leadsto-derivation2}.

The following theorem shows how a solution for the SMT problem $\smt(L)$ can be used to
obtain a solution for the limit problem $L$.

\begin{theorem}[Solving Limit Problems via SMT]
  \label{thm:its-smt-limit}
  Let $L$ be a limit problem such that each expression in $L$ is a polynomial and let
  $\sigma$ be an integer substitution such that $\sigma \models \smt(L)$.
  Then
  \[
    \sigma_n \circ \sigma = \{x / (m_x\sigma \cdot n + k_x\sigma) \mid x \in \VV(L)\}
  \]
  is a solution for $L$.
\end{theorem}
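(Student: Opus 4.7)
The plan is to verify, for each constraint $a^\bullet \in L$, that the function $\lambda n.\ a(\sigma_n \circ \sigma)$ satisfies $\bullet$ in the sense of \Cref{def:its-solutions-limit-problem}. The key algebraic observation that does most of the work is that applying $\sigma_n$ to the polynomial $a$ and then instantiating the abstract coefficients via $\sigma$ produces the same univariate polynomial in $n$ as first instantiating the template via $\sigma_n \circ \sigma$. More precisely, if $a$ has degree $d$ and $a\sigma_n = a_d \cdot n^d + \ldots + a_0 \cdot n^0$ with each $a_i$ an arithmetic expression over the abstract coefficients $\{m_x, k_x \mid x \in \VV(a)\}$, then I would show that $a(\sigma_n \circ \sigma) = (a_d\sigma) \cdot n^d + \ldots + (a_0\sigma) \cdot n^0$. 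This holds because both substitutions commute with the polynomial operations: the abstract coefficients do not contain $n$, so applying $\sigma$ to them does not interact with the monomials $n^i$.

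Given this identity, I would perform a case distinction on $\bullet \in \{+, -, +_!, -_!\}$. For $\bullet = +$, the hypothesis $\sigma \models \smt(a^+)$ means there exists some $1 \leq i \leq d$ with $a_i\sigma > 0$ and $a_j\sigma = 0$ for every $j$ with $i < j \leq d$. Hence $a(\sigma_n \circ \sigma)$ is a polynomial in $n$ of degree exactly $i \geq 1$ with positive leading coefficient, so $\lim_{n \mapsto \infty} a(\sigma_n \circ \sigma) = \infty$, i.e., it satisfies $+$. The case $\bullet = -$ is symmetric with $a_i\sigma < 0$, giving limit $-\infty$. For $\bullet = +_!$, the hypothesis forces $a_j\sigma = 0$ for all $1 \leq j \leq d$ and $a_0\sigma > 0$, so $a(\sigma_n \circ \sigma)$ is a positive constant, and analogously for $-_!$.

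Finally, I would verify that $\sigma_n \circ \sigma$ is actually an integer substitution with the correct domain: each $x \in \VV(L)$ is mapped to the integer $m_x\sigma \cdot n + k_x\sigma$, and since $\sigma$ is an integer substitution providing integer values for all $m_x$ and $k_x$, the result is a well-formed integer substitution on $\VV(L)$ for every $n \in \NN$. Combining this with the per-constraint argument above then yields that $\sigma_n \circ \sigma$ is a solution of $L$ in the sense of \Cref{def:its-solutions-limit-problem}.

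I do not expect a serious obstacle here. The only slightly delicate point is the commutation identity $a(\sigma_n \circ \sigma) = \sum_i (a_i\sigma) \cdot n^i$, which is purely syntactic and follows from the definition of $\smt$ in \Cref{def:its-limit-smt}; once this is stated cleanly, the case analysis reduces to elementary facts about limits of univariate polynomials with concrete integer coefficients.
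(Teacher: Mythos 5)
Your proposal is correct and follows essentially the same route as the paper's proof: expand $a\sigma_n$ as a univariate polynomial in $n$ with abstract coefficients, instantiate those coefficients via $\sigma$, and then do the four-way case analysis on $\bullet$ using the disjunction/conjunction structure of $\smt(a^\bullet)$ to read off the limit. The commutation identity you single out is used implicitly in the paper (when it writes $a\sigma_n\sigma = a_i\sigma\cdot n^i + \ldots + a_0\sigma\cdot n^0$), so making it explicit is a fine but not substantively different presentation.
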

\begin{proof}
  First note that $\sigma_n\circ \sigma$ is clearly an integer substitution for each $n \in \NN$.
  We have to show that $\lambda n .\  a\sigma_n\sigma$ satisfies $\bullet$ for any  $a^\bullet \in L$.
  Let $d$ be the degree of $a$.
  Then we have $a \sigma_n = a_{d} \cdot n^d + \ldots + a_{0} \cdot n^0$ for suitable
  expressions $a_0, \ldots, a_d$ over $\{ m_x, k_x \mid x \in \VV(a) \}$ that do not
  contain $n$.

  We first consider the case $\bullet = {+}$ (the case $\bullet = {-}$ works analogously).
  Then  $\sigma \models \smt(L)$ implies $\sigma \models \smt(a^+)$, i.e.,
  $\sigma \models \bigvee_{i=1}^d (a_i > 0 \land \bigwedge_{j=i+1}^d a_j = 0)$.
  Hence, there exists an $1 \leq i \leq d$ with $a_i\sigma > 0$ and
  we have
  \[
    a \sigma_n\sigma = a_{i}\sigma \cdot n^i + \ldots + a_{0}\sigma \cdot n^0.
  \]
  Thus, we obtain $\lim_{n \mapsto \infty} a \sigma_n\sigma = \infty$,
  i.e., $\lambda n .\  a\sigma_n\sigma$ satisfies $+$.

  Now we consider the case $\bullet = {+_!}$ (the case $\bullet = {-_!}$ works analogously).
  Then $\sigma \models \smt(L)$ implies $\sigma \models \smt(a^{+_!})$, i.e.,
  $\sigma \models \bigwedge_{j=1}^d a_j = 0 \land a_0 > 0$.
  Hence, we have
  \[
    \lim_{n \mapsto \infty} a_n\sigma_n\sigma = \lim_{n \mapsto \infty}a_0\sigma = a_{0}\sigma > 0,
  \]
  i.e., $\lambda n .\  a\sigma_n\sigma$ satisfies $+_!$.
\end{proof}

\begin{example}[Solving  the Initial Limit Problem for  Program
    \eqref{eq:finalFirstLeadingEx}]\label{ex:smt-finalFirstLeadingEx}
  In \Cref{ex:initial-encoding2}, we saw that
  $\sigma = \{m_x / 1, k_x / 0\} \models \smt(\{(\frac{1}{2}  x^2+\frac{1}{2}  x - 1)^+\})$.
  Hence, according to \Cref{thm:its-smt-limit}, $\sigma_n \circ \sigma = \{ x / (m_x \cdot
  n + k_x) \} \circ \{ m_x / 1, k_x / 0\}$ with $x \sigma_n \sigma = n$ solves the limit problem
  $\{(\frac{1}{2}  x^2+\frac{1}{2}  x - 1)^+\}$.
  This corresponds to the solution in \eqref{eq:sigmanFirstLeadingEx}.
  As explained in \Cref{ex:its-leading-asymptotic-bound}, this proves \(\rc_{\PP}(n) \in
  \Omega(n^4)\) for the program $\PP = \{ \eqref{eq:finalFirstLeadingEx} \}$ and hence,
  also for the program in \Cref{fig:its-leading-ex}.
\end{example}

To integrate \Cref{thm:its-smt-limit} into the calculus of
\Cref{subsec:its-transforming-limit-problems}, we proceed as follows.
Whenever the current limit problem $L$ for a rule $\alpha$ only contains polynomial
arithmetic expressions, one tries to find a solution for $\smt_\infty(L)$ or
$\smt_{\cc,i}(L)$ where $i$ is initially set to the degree of the cost $\cc$ of the rule
$\alpha$ and decremented until the SMT solver finds a solution.
As soon as a solution is found, one can either return the resulting family of
substitutions that solves $L$ or keep searching for a better solution.
To this end, one can either backtrack or continue simplifying $L$ via $\leadsto$.
Similarly, if the SMT solver does not find a solution one can either backtrack or keep
simplifying $L$ via $\leadsto$.

Note that the intention of \Cref{thm:its-smt-limit} and its integration into $\leadsto$ is
not to add more power to $\leadsto$.
Instead,
as often as possible one should
delegate the search for a solution to SMT solvers (which are very efficient in solving
search problems) instead of relying on heuristics.
This may, of course, also lead to better results in cases where the heuristics
used for $\leadsto$ are not ideal. For example, consider a simplified
rule with cost $x \cdot (1 - \tv^2)$ and guard $-2 < \tv < 2$. The heuristic discussed at
the end of
\Cref{subsec:its-transforming-limit-problems} would instantiate $\tv$ with the bounds
implied by the guard, i.e., it would apply the substitution $\theta = \{\tv /
1\}$ or $\theta = \{\tv / -1\}$, resulting in the unsolvable limit problem $\{(x
\cdot (1 - \tv^2))^+\}\,\theta = \{0^+\}$. In contrast, our SMT encoding allows us
to find the solution $\{\tv / 0, x / n\}$ which results in a linear lower bound.

\begin{example}[\Cref{ex:its-non-linear-chaining-facsum} Continued]
  For the simplified $\fs{facSum}$ program
with the rule
  \eqref{eq:FacSumSimplified}, we obtain the initial limit problem
  \[ \begin{array}{l}
   \{ (x-1)^{+}, \left(\frac{1}{2}x^2 + \frac{3}{2}x - 2\right)^+\}.
    \end{array}
  \]
  Using $\sigma_n = \{ x/(m_x \cdot n + k_x) \}$,
its SMT encoding is
  \[
  m_x > 0 \land
\left( \left(m_x \cdot k_x + \tfrac{3}{2} m_x > 0 \land \tfrac{1}{2} m_x^2 = 0 \right)
\lor \tfrac{1}{2} m_x^2 > 0 \right).
  \]
 An SMT solver can find a model like $\sigma = \{m_x / 1, \, k_x / 0\}$, for example.
  This results in the solution
  \[
    \sigma_n \circ \sigma = \{x / (m_x\sigma \cdot n + k_x\sigma)\} = \{ x / n\}.
  \]
  Applying it to the cost $\frac{1}{2}x^2 + \frac{3}{2}x - 2$ yields
  $\frac{1}{2}n^2 + \frac{3}{2}n - 2 \in \Omega(n^2)$.
By \Cref{thm:its-asymptotic-bounds}, this proves $\rc_{\{\eqref{eq:FacSumSimplified}\}} (\size{x
\sigma_n \sigma}) = \rc_{\{\eqref{eq:FacSumSimplified}\}} (n) \in \Omega(n^2)$. By the
soundness of the program simplification in Sections \ref{sec:simplification} and \ref{sec:non-linear}, we obtain
   $\rc_{\PP}(n) \in \Omega(n^2)$ for the original integer program
$\PP$ from \Cref{ex:its-fac}.
\end{example}


\section{Experiments}
\label{sec:experiments}

To evaluate the performance of our approach, we  implemented it in the tool \loat
(``\underline{Lo}op \underline{A}c\-cel\-er\-a\-tion \underline{T}ool'')\footnote{Initially,
  \loat stood for ``Lower Bounds Analysis Tool'', but we renamed it to reflect that
  \loat's loop acceleration techniques can be used for several purposes, see \cite{nonterm-acceleration}.}
using the recurrence solver \tool{PURRS}~\cite{purrs} and the SMT solver
\tool{Z3}~\cite{z3}.
We evaluated \loat\ on the 689 benchmark integer programs from the evaluation
\cite{koat-benchmarks} of the tool
\koat~\cite{koat} which infers \emph{upper} runtime bounds for integer programs.
On average, the ITSs in the collection \cite{koat-benchmarks}  have a size of 
23.4 rules.
In addition, the results of running \tool{LoAT} on the examples from this paper can be found at \cite{website}.

\paragraph{Comparison with Upper Bound Provers}
As we are not aware of any other tool to compute worst-case lower bounds for integer
programs, we compared our results with the asymptotically smallest results of the
tools \koat, \tool{CoFloCo}~\cite{cofloco1,cofloco2},
\tool{Loopus}~\cite{loopus-jar-17}, and \tool{RanK}~\cite{rank},
that compute upper runtime bounds for integer programs.

\newcommand{\Om}{\Omega}
\begin{table}
  \begin{adjustbox}{width=\textwidth,center}
  \begin{tabular}{|l||l|c|c|c|c|c|c|c|c|}
    \cline{2-10}
    \multicolumn{1}{c}{} & \multicolumn{9}{|c|}{\loat}\\
    \hhline{-=========}
    \multirow{9}{*}{\rotatebox[origin=c]{90}{\pbox[c]{100em}{\centering Best Upper Bound}}}
 & $\rc_{\PP}(n)$  & $\Om(1)$ & $\Om(n)$ & $\Om(n^2)$ & $\Om(n^3)$ & $\Om(n^4)$ & $\Om(n^5)$ & $EXP$ & $\Om(\omega)$ \\\cline{2-10}
 & $\OO(1)$        &    $135$ &      $-$ &        $-$ &        $-$ &        $-$ &        $-$ &   $-$ &           $-$ \\\cline{2-10}
 & $\OO(n)$        &     $41$ &    $146$ &        $-$ &        $-$ &        $-$ &        $-$ &   $-$ &           $-$ \\\cline{2-10}
 & $\OO(n^2)$      &      $7$ &     $14$ &       $54$ &        $-$ &        $-$ &        $-$ &   $-$ &           $-$ \\\cline{2-10}
 & $\OO(n^3)$      &      $1$ &      $1$ &        $-$ &        $9$ &        $-$ &        $-$ &   $-$ &           $-$ \\\cline{2-10}
 & $\OO(n^4)$      &      $-$ &      $-$ &        $-$ &        $-$ &        $2$ &        $-$ &   $-$ &           $-$ \\\cline{2-10}
 & $\OO(n^5)$      &      $-$ &      $-$ &        $-$ &        $1$ &        $-$ &        $-$ &   $-$ &           $-$ \\\cline{2-10}
 & $EXP$           &      $-$ &      $-$ &        $-$ &        $-$ &        $-$ &        $-$ &  $13$ &           $-$ \\\cline{2-10}
 & $INF$           &     $32$ &     $18$ &        $2$ &        $-$ &        $-$ &        $-$ &   $-$ &         $213$ \\\hline
  \end{tabular}
  \end{adjustbox}
 \caption{\label{tab:eval-results}Best Upper Bound vs.~\loat}
\end{table}

The results of running all tools on an \tool{Azure F4s v2}
instance with a timeout of 60 seconds per example
are shown in \Cref{tab:eval-results}.
\loat inferred
non-constant lower bounds for 473
examples.
For
  135 additional examples, the upper bound
$\rc_{\PP}(n) \in \OO(1)$ was proved and thus, the lower bound $\Om(1)$ inferred by \tool{LoAT} is
optimal. Thus,
\tool{LoAT} finds non-trivial or optimal bounds on 473 + 135 = 608 (88.2~\%)
  of all examples.
For 572 examples (83.0~\%), the inferred bounds
are asymptotically tight (e.g., lower and upper bounds
coincide). Whenever
an exponential upper bound was proved,
\tool{LoAT} also proved an exponential lower
bound (i.e., $\rc_{\PP}(n) \in \Om(k^n)$ for some $k > 1$).
It proved the existence of executions with unbounded length in 213 cases (this includes both
non-terminating
examples
and
examples whose runtime depends on temporary
variables).
On average, \loat required 2 seconds per example.
For 10 of the 689 examples, \loat could not
finish its analysis due to the timeout. The reason for most
timeouts is that the number of rules gets too large during the
program simplification and hence, \loat fails to compute a simplified program in time.

One reason why the bounds inferred by \loat do not always coincide with the upper bounds obtained by other tools may of
course be that these upper bounds are not necessarily tight. If the lower bound is too small, then in our experience the
most common reasons for imprecision are the following: For some loops, \loat fails to find (non-trivial)
metering functions. But if \loat finds a metering function, then the precision of the metering function is usually
very good. Another reason for imprecision are unsuitable instantiations of temporary variables. Finally, \loat
heuristically deletes rules from the analyzed program if the number of rules becomes too large, which is another common
source of imprecision. Nevertheless, our experiments show that optimal lower bounds are inferred for the large majority
of examples.

\paragraph{Evaluating Individual Contributions and Comparison with Best-Case Lower Bounds}
In a second experiment, we performed an individual evaluation of the main contributions of
this paper that are new compared to our earlier paper \cite{ijcar16}.
As baseline, we took \tool{LoAT-Basic}, the version of \loat implementing the techniques
in \cite{ijcar16}.\footnote{As we refactored large parts of the code and improved some
heuristics
 since publishing that paper, \tool{LoAT-Basic} is already more powerful than the version
 of \loat from 2016 that we used in
 \cite{ijcar16}.}
Using \tool{LoAT-Basic} as a starting point, we considered the following four variants:
\begin{itemize}
 \item \tool{LoAT-Cond} adds support for conditional metering functions, as introduced in \Cref{lem:irrelevant-constraints}.
 \item \tool{LoAT-Rec} is \tool{LoAT-Basic} extended by the handling of non-tail-recursive
   programs (described in \Cref{sec:non-linear}).
 \item \tool{LoAT-SMT} is like \tool{LoAT-Basic}, but it
   applies the SMT encoding of limit problems from \Cref{sec:asymptotic-smt} in addition
   to
     the calculus for the transformation of limit problems that is used in
     \tool{LoAT-Basic}. For this combination, we use a strategy which first simplifies
     limit problems with the calculus if the guard or the cost of the analyzed simplified rule contains
     non-polynomial arithmetic. However, the SMT encoding is
     used whenever it is applicable.
Whenever there was a choice during the application of the calculus, we backtrack
afterwards and apply the calculus again in order to examine the remaining possibilities.
    The analysis of the rule terminates as soon as \tool{LoAT} proves a bound which is asymptotically
    equal to its cost function, when the timeout specified by the user expires, or when there are no further
    possibilities to backtrack (where we use suitable heuristics to ensure that case
    \ref{it:its-transform-limit-problem-C} of our calculus is only applied with finitely many substitutions).
    Then the largest bound found so far is returned as the result.
   \item \tool{LoAT-JUST-SMT} is like \tool{LoAT-SMT}, but
     in contrast to \tool{LoAT-SMT}, \tool{LoAT-JUST-SMT} never
   applies the calculus again once the SMT encoding is applicable.
 \end{itemize}
Note that each of these variants only
adds one single new contribution to \tool{LoAT-Basic}, whereas the other new
contributions are disabled.
The intention of the last variant is to compare the power and performance of the
calculus from \Cref{subsec:its-transforming-limit-problems} with the SMT encoding
of \Cref{sec:asymptotic-smt} (whereas \tool{LoAT-SMT} represents the
\emph{combination} of both techniques). However, as the novel SMT encoding only
applies to polynomial limit problems, \tool{LoAT-JUST-SMT} still uses the
calculus from \Cref{subsec:its-transforming-limit-problems} for non-polynomial
limit problems.

 \begin{table}
  \begin{adjustbox}{width=\textwidth,center}
  \begin{tabular}{|l||r|r|r|r|r|r|r|r|r|r|}
    \hline
 Tool                 & $\Om(1)$ &  $\Om(n)$ & $\Om(n^2)$ & $\Om(n^3)$ & $\Om(n^4)$ & $EXP$ & $\Om(\omega)$ & time (s) \\\hhline{=========}
 \tool{LoAT-Basic}    &    $269$ &     $159$ &       $42$ &        $2$ &        $2$ &   $5$ &         $210$ & $2.29$   \\\hline
 \tool{LoAT-Cond}     &    $268$ &     $159$ &       $43$ &        $3$ &        $2$ &   $5$ &         $209$ & $2.32$   \\\hline
 \tool{LoAT-Rec}      &    $227$ &     $175$ &       $55$ &        $6$ &        $2$ &  $13$ &         $211$ & $2.44$   \\\hline
 \tool{LoAT-SMT}      &    $263$ &     $161$ &       $43$ &        $2$ &        $2$ &   $5$ &         $213$ & $1.68$   \\\hline
 \tool{LoAT-Just-SMT} &    $263$ &     $161$ &       $43$ &        $2$ &        $2$ &   $5$ &         $213$ & $1.64$   \\\hline
 \tool{LoAT}          &    $216$ &     $179$ &       $56$ &       $10$ &        $2$ &
 $13$ &         $213$ & $1.91$   \\\hhline{=========}
 \tool{CoFloCo}          &    $454$ &     $189$ &       $40$ &       $6$ &        $0$ &
 $0$ &         $0$ & $2.00$   \\\hline
  \end{tabular}
  \end{adjustbox}
 \caption{\label{tab:eval-ablation-results}Evaluating Individual Contributions in
   \tool{LoAT} and Comparison with Best-Case Lower Bounds}
\end{table}

The results of our experiments are summarized in
\Cref{tab:eval-ablation-results}: \tool{LoAT-Basic} already uses an
 optimization from \cite{ijcar16} which
  is similar to (but weaker than) the conditional metering functions
 of \Cref{sec:metering}. Conditional metering functions do not only improve the formalization and presentation of our approach
  (by integrating the optimization of \cite{ijcar16}  into our concept of metering functions), but they
 also
lead to a minimal change in power: The detailed experimental results on our website
\cite{website} show that \tool{LoAT-Cond} deduces
 better
  (i.e., larger) asymptotic bounds in
 eight cases, whereas \tool{LoAT-Basic} deduces better asymptotic bounds in four cases.
Adding support for arbitrary recursion allows \loat to infer
non-constant bounds for 41 of the
 50 non-tail-recursive examples in the collection (where a constant upper bound was proved
 for three of these examples). The
 full version of \loat even
obtains non-constant bounds for 44 of these
examples.
The SMT encoding of limit problems improves the performance compared to \tool{LoAT-Basic}
by 26\%. Moreover, \tool{LoAT-SMT} deduces a better asymptotic bound than \tool{LoAT-Basic} in eight
cases, whereas \tool{LoAT-Basic} infers a better asymptotic bound in one case.
  The configurations
\tool{LoAT-SMT} and \tool{LoAT-Just-SMT} yield identical results. So when disregarding limit problems with non-polynomial arithmetic,
the calculus
from \Cref{subsec:its-transforming-limit-problems} is outperformed by the novel SMT
encoding from
 \Cref{sec:asymptotic-smt}
in our experiments on the
examples from \cite{koat-benchmarks}. 
However, this is not true in general, as shown by \Cref{ex:initial-encoding-sqrt}.
  Moreover, the calculus from \Cref{subsec:its-transforming-limit-problems} is still required for the analysis of limit problems
with non-polynomial arithmetic, i.e., for all examples where \tool{LoAT} proves exponential
lower bounds.
In \tool{LoAT}(i.e., in the last line of \Cref{tab:eval-ablation-results}), we extended
\tool{LoAT-Basic} by all
new contributions. This results in a significant improvement in both
power and runtime.

Finally, in the last row of \Cref{tab:eval-ablation-results}
we used the tool \tool{CoFloCo} \cite{cofloco1,cofloco2} to compute \emph{best-case} lower
bounds.
While the asymptotic bounds obtained from \tool{LoAT} and \tool{CoFloCo} coincide for 335 
of the 689 examples,
 the results
of the two tools are of course not directly comparable,
since \tool{LoAT} infers \emph{worst-case} lower bounds,
but in general, a worst-case lower bound is not a valid best-case lower bound.
Moreover, \tool{CoFloCo} analyzes different
program paths (so-called \emph{chains})
separately and infers individual lower bounds for them.
However, similar to the experimental evaluation of \tool{CoFloCo} in \cite{cofloco2}, 
the results in the last row of  \Cref{tab:eval-ablation-results} do not take the preconditions of the 
different chains into account, but 
they are simply the maximum
lower bound of all chains.
Thus, they are not always asymptotic best-case lower bounds
for the whole program. So the purpose of the last row is only to indicate that
\tool{LoAT}'s performance
is also convincing when comparing it to the performance of other tools for the inference
of (other forms of) lower bounds.

For a
detailed experimental evaluation of our implementation as well as a pre-compiled binary of
\tool{LoAT} we refer to \cite{website}. The source code of
\tool{LoAT} is freely available \mbox{at \cite{loat}}.

\section{Related Work}
\label{sec:related}

While there are many techniques to infer \emph{upper bounds} on the worst-case
complexity of integer programs (e.g.,
\cite{%
  costa-asymptotic,costa-complexity,
  MaxCore,
  abc,%
  speed-popl-09,
  JLAMP18,
  campy,%
  rank,%
  purrs,
  pubs,
  koat,%
  cofloco1,cofloco2,%
  loopus-jar-17,%
  c4b,%
  pastis,%
  alonso12,%
  pubs-upper-lower,%
  raml,ramlnat,ramlpopl17}),
there is little work on \emph{lower bounds}.  In \cite{alonso12}, it is briefly
mentioned that their technique could also be adapted to infer lower instead of
upper bounds for \emph{abstract cost rules}, i.e., integer procedures with
(possibly multiple) outputs.  However, this only considers \emph{best-case}
lower bounds instead of worst-case lower bounds as in our technique.  Upper and
lower bounds for \emph{cost relations} are inferred in
\cite{pubs-upper-lower,cofloco2}. Cost relations extend recurrence equations such that,
e.g., non-determinism can be modeled. However, this technique also considers
best-case lower bounds only.  A method for best-case lower bounds for logic
programs is presented in \cite{prolog-lower}.

Note that techniques to infer lower bounds on the best-case complexity differ
fundamentally from techniques for the inference of worst-case lower bounds.
To deduce best-case lower bounds, one has to prove that a certain bound holds for
\emph{every} program run.
Thus, as in the case of worst-case upper bounds, \emph{over-approximating} techniques are
used to ensure that the proven bound covers all program runs, i.e., even though such
techniques under-approximate the runtime of the program, they over-approximate the set of
all program runs.

In contrast, techniques to infer lower bounds on the worst-case complexity have to
identify families of inputs (i.e., witnesses) that result in expensive program runs.
Thus, for the inference of worst-case lower bounds, over-approximations are usually
unsound, since one has to ensure that the witness of the proven lower bound corresponds to
``real'' program runs.
Thus, \emph{under-approximating} techniques have to be used in order to infer lower bounds
on the worst-case complexity.

Nevertheless, our approach has certain aspects in common with the
technique in \cite{pubs-upper-lower}, since \cite{pubs-upper-lower} also uses recurrence
solving to compute a closed 
form for the costs of several consecutive applications of a cost equation with direct
recursion, which corresponds to a simple loop or simple recursion in our setting.
However, as mentioned above, the analyses for best-case lower bounds
from \cite{pubs-upper-lower,cofloco2} have to reason about
all program runs.
Thus, there the handling of non-determinism is challenging as all possible
non-deterministic choices have to be taken into account.
In contrast, we can treat temporary variables (which we use to model non-determinism)
as constants when computing the iterated update and cost.
Thus, our iterated update and cost only represent evaluations where temporary variables
are instantiated with the same values in each iteration.
This restriction is sound in our setting, as we only need to prove the existence of a certain family of program runs, i.e., we do not have to reason about all program runs.
To reason about evaluations where the valuation of the temporary variables changes, we can instantiate them with expressions containing program variables via \emph{Instantiation} (\Cref{thm:its-instantiation}).

Since our computation of the iterated update relies on the existence of a single
deterministic update, it is not applicable to simple recursions, which also prevents us
from computing iterated costs when accelerating non-tail-recursive rules in
\Cref{thm:its-non-linear-acceleration}. Thus, our handling of simple recursions may be
improved by incorporating ideas from \cite{pubs-upper-lower,cofloco2} for the inference of bounds
of cost  equations with multiple recursive calls.

In \cite{jar17}, we introduced two techniques to infer worst-case lower
bounds for term rewrite systems (TRSs).  However, TRSs differ substantially from
the programs considered here, since they do not allow integers and have no
notion of a ``program start''.  Thus, the techniques from \cite{jar17} are
very different to the present paper.

In contrast to the techniques for the computation of symbolic runtime bounds, \cite{wise}
presents a technique to generate test-cases that trigger the worst-case execution time of
programs.
The idea is to execute the program for small inputs, observe the required runtime, and
then generalize those inputs that lead to expensive runs.
In this way, one obtains \emph{generators} which can be used to construct larger inputs
that presumably result in expensive runs as well.
In contrast to the technique presented in the current paper, \cite{wise} operates on
\pl{Java}, i.e., it also supports data structures.
However, \cite{wise} does not try to infer symbolic bounds, which is the main purpose of
our technique.
Nevertheless, ideas from \cite{wise} could be integrated into our framework.
For example, a similar approach could be used in order to apply \emph{Instantiation}
(\Cref{thm:its-instantiation}) in a way that leads to expensive runs.

The approach from \cite{raml-worst-case} can synthesize worst-case inputs, but the size of
the input needs to be fixed a priori. This approach is 
fundamentally different from
our technique to deduce \emph{symbolic} worst-case bounds. However, the inputs that are
synthesized by the technique from \cite{raml-worst-case} are provably optimal, whereas our
worst-case lower bounds are correct, but not necessarily tight.

Inferring bounds on the runtime of programs has also been investigated for probabilistic
programs. While there exist several approaches to find upper bounds on the expected
runtime of such programs, again there are only very few works that
consider the inference of lower bounds on the expected
runtime of probabilistic programs \cite{McIverMorgan05,FuChatterjee19,AimingLow,CADE19}.

To simplify programs, we use \emph{Loop Acceleration} to summarize the effect
of applying a simple loop (or a simple recursion) repeatedly.
Acceleration is mostly used in over-approximating settings (e.g., \cite{kincaid15,
  gonnord06, jeannet14, madhukar15, strejcek12}), where handling non-determinism is challenging, as
loop summaries have to cover \emph{all} possible non-deterministic choices.
However, our technique is under-approximating, i.e., we can instantiate non-deterministic
values arbitrarily.

The under-approximating acceleration technique in \cite{underapprox15} uses
quantifier elimination, whereas our acceleration technique relies on metering
functions.
In \cite{nonterm-acceleration}, we generalized the loop acceleration technique from \cite{underapprox15}
in order to prove non-termination of integer programs.
In future work, we will examine whether ideas from \cite{underapprox15,nonterm-acceleration} can
also be
incorporated
into our framework for the inference of lower runtime bounds.

Another related approach
(see, e.g., \cite{bozga09a,bozga10,iosif17})
accelerates loops whose transitive closure can be expressed in Presburger Arithmetic.
In particular, this is the case for loops whose transition relation can be described by \emph{octagons}, i.e.,
conjunctions of inequations of the form $\pm x \pm y \leq c$ where $x,y$ are variables and $c \in \ZZ$, and for loops with
Presburger-definable guards and affine updates $\mu(\vect{x}) = A\cdot \vect{x} + \vect{c}$ with the \emph{finite monoid
  property}, i.e., where the set $\{A^i \mid i \in \NN\}$ is finite.
In contrast, our acceleration technique does not necessarily compute the transitive closure of loops exactly.
The reason is that our metering functions may be imprecise and that we approximate non-determinism by assuming that
the values of temporary variables remain unchanged across loop iterations.
On the other hand, our approach can also handle loops where the transitive closure cannot be expressed in Presburger
Arithmetic.

The paper \cite{costa-asymptotic} presents a technique to infer asymptotic bounds from concrete bounds with a so-called context constraint $\phi$, i.e., bounds of the form $\phi \implies \mathit{rt} \leq e$ or $\phi \implies \mathit{rt} \geq e$.
Here, $\mathit{rt}$ is the runtime of the program and $e$ is a \emph{cost expression}.
These expressions are orthogonal to the expressions that are supported by our
technique from \Cref{sec:asymptotic} (e.g., $e$ may contain maximum and
logarithm, but negative numbers are only allowed in sub-expressions of the form $\max(0,\ldots)$).
The approach in \cite{costa-asymptotic} infers multi-variate asymptotic bounds, whereas
our technique infers univariate bounds which are only parameterized in the size of the
input.
Moreover, \cite{costa-asymptotic} does not aim to eliminate the context constraint, i.e.,
the resulting asymptotic bounds are of the form $\phi \implies \mathit{rt} \in \OO(b)$ or
$\phi \implies \mathit{rt} \in \Omega(b)$.
In contrast, eliminating the context constraints $\phi$ is one of the main motivations for our
technique to deduce asymptotic bounds from concrete bounds.

Our SMT encoding for limit problems from \Cref{sec:asymptotic-smt} inspired parts of the work from \cite{cav19},
where we used a similar encoding to prove that termination is decidable for a certain class of integer loops.
As in \Cref{sec:asymptotic-smt}, the underlying idea of \cite{cav19} is to abstract the loop guard by focusing on its behavior
``for large enough values of $n$''. In the present work, the variable $n$ is the parameter
of the family of substitutions that solves the
limit problem. In \cite{cav19}, $n$ represents the number of loop iterations. Due to the restricted form of the
loops in \cite{cav19}, their SMT encoding only requires \emph{linear} integer arithmetic, such that we obtain a
decision procedure for termination.

Finally, in \cite{metering-invariants}, our concept of metering functions has been adapted in order to
synthesize invariants. Note that invariant inference and complexity analysis are closely related:
Invariant inference techniques can be used to compute complexity bounds by introducing an additional
counter that is incremented in each step and deducing an invariant that bounds its value (see, e.g.,
\cite{loopus-jar-17}). Conversely, complexity analysis techniques can be used to bound the value of any
arithmetic expression $b$ (i.e., to compute invariants) by choosing the costs of transitions in a way
that reflects changes of the value of $b$ (see, e.g., \cite{frocos17, loopus-jar-17}).


\section{Conclusion and Future Work}
\label{sec:conclusion}

We introduced the first technique to infer lower bounds on the worst-case runtime
complexity of integer programs, based on a modular program simplification framework.
The main simplification techniques are \emph{Loop Acceleration} and \emph{Recursion Acceleration},
which rely on \emph{recurrence solving} and \emph{metering functions}, an adaptation of
classical ranking functions.
By eliminating loops and function symbols via \emph{Chaining} and \emph{Partial Deletion},
we eventually obtain \emph{simplified programs}.
We presented a technique to infer \emph{asymptotic lower bounds} from simplified programs,
which can also be used to find program vulnerabilities.
An experimental evaluation with our tool \tool{LoAT} demonstrates the applicability of
our technique in practice, see \cite{loat,website}.

In comparison to the preliminary version of our paper from \cite{ijcar16}, we
showed how to deduce \emph{conditional} metering functions,
we improved our program simplification by eliminating variables from the program,
we extended
our approach to non-tail-recursive programs, and we
improved our technique to infer asymptotic lower bounds for simplified programs
by an SMT encoding. See \Cref{sec:intro} for a
full list of the contributions of the current paper compared to
\cite{ijcar16}.

There are several interesting directions for future work.
First of all,  one could couple \loat with
invariant inference techniques to improve its power.
Furthermore, \loat's heuristics to apply \emph{Instantiation} are relatively simple and
should be improved, e.g., by incorporating ideas from \cite{wise}.
Another interesting question is to what extent \loat can benefit from more sophisticated
techniques to infer metering functions.
Possibilities include the inference of logarithmic or super-linear polynomial metering
functions, but one could also adapt the \emph{quasi-ranking functions} from
\cite{larraz13} to our setting.
Apart from that, we plan to investigate if our approach can benefit from
alternative loop acceleration techniques
\cite{underapprox15,nonterm-acceleration,bozga10}.
Moreover, as mentioned in \Cref{sec:related}, ideas from \cite{pubs-upper-lower,cofloco2}
could be adapted to under-approximate the costs of repeatedly applying simple
recursions more precisely when accelerating them.
Finally, one could generalize our program model to improve its expressiveness.
In particular, one could consider the return values of auxiliary function calls (by
allowing terms with nested occurrences of functions). Moreover, one could
combine our technique with ideas from \cite{jar17} for the inference of lower
bounds for term rewrite systems (i.e., programs operating on tree-shaped data
structures) to analyze programs whose complexity depends on both integers and
data structures.

\begin{acks}
  We thank Samir Genaim, Jan B\"oker, and Jera Hensel for
  discussions and comments. We are also very grateful to the anonymous reviewers for many
  helpful suggestions.

  This work is funded by the 
  \grantsponsor{DFG}{Deutsche Forschungsgemeinschaft (DFG, German Research
    Foundation)}{https://www.dfg.de/} -
  \grantnum{DFG}{389792660}
   as part of
   \grantnum[https://perspicuous-computing.science]{DFG}{TRR~248} and
 by the 
  \grantsponsor{DFG}{Deutsche Forschungsgemeinschaft (DFG, German Research
    Foundation)}{https://www.dfg.de/} - \grantnum{DFG}{235950644} (Project 
   GI 274/6-2).   
\end{acks}

\section*{A \quad Proof of Lemma \ref{lem:its-to-irc}}
\label{appendix}

\Cref{lem:its-to-irc} is based on Lemma
 24 from our paper \cite{jar17}. However, since the lemmas are slightly different and
 since the proof for part (b) was omitted from \cite{jar17}, we provide the proof of
 \Cref{lem:its-to-irc} in this appendix. Moreover, part (c) of the lemma was not present
 in \cite{jar17}.

 \setcounter{section}{\value{sectionctr}}
 \setcounter{theorem}{\value{lemmactr}}

\begin{lemma}[Bounds for Function Composition]
  Let
    $f: \NN \to \RR_{\geq 0} \cup \{ \omega \}$
    and $g: \NN \to \NN$
      where $g(n) \in \OO(n^d)$ for some $d \in \NN$ with $d > 0$.
  Moreover, let $f(n)$ be weakly monotonically increasing for large enough $n$.
  \begin{itemize}
  \item[(a)] If  $g(n)$ is strictly monotonically increasing for large enough $n$ and
    $f(g(n)) \in \Omega(n^k)$ with $k \in \NN$, then  $f(n) \in \Omega(n^{\frac{k}{d}})$.
  \item[(b)] If  $g(n)$ is strictly monotonically increasing for large enough $n$ and $f(g(n)) \in \Omega(k^n)$ with $k > 1$, then $f(n) \in \Omega(e^{\sqrt[d]{n}})$
    for some number $e \in \RR$ with $e > 1$.
    \item[(c)] If $g(n) \in \OO(1)$ and $f(g(n)) \notin \OO(1)$, then
    $f(n) \in \Omega(\omega)$.
  \end{itemize}
\end{lemma}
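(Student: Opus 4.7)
The overall plan is to use the bound $g(n) \in \OO(n^d)$ to invert $g$, thereby translating a lower bound on $f \circ g$ expressed in $n$ into a lower bound on $f$ expressed in its own argument $m$. Since $g$ maps into $\NN$ and is strictly monotonically increasing for large enough $n$ in parts (a) and (b), for every sufficiently large $m \in \NN$ there is a unique $n$ with $g(n) \le m < g(n+1)$. Combined with weak monotonicity of $f$, this yields $f(m) \ge f(g(n))$, and then the hypothesis $g(n+1) \le c \cdot (n+1)^d$ (from $g \in \OO(n^d)$) forces $n+1 > (m/c)^{1/d}$, i.e.\ $n \ge (m/c)^{1/d} - 1$.

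For part (a), I plan to substitute this lower bound on $n$ into $f(g(n)) \ge m_1 \cdot n^k$, obtaining $f(m) \ge m_1 \cdot ((m/c)^{1/d} - 1)^k$ for large $m$. For $m$ large enough, $((m/c)^{1/d} - 1)^k \ge \tfrac{1}{2} (m/c)^{k/d}$, which gives $f(m) \ge \tfrac{m_1}{2} c^{-k/d} m^{k/d}$ and thus $f(m) \in \Omega(m^{k/d})$. For part (b), the same inversion applied to $f(g(n)) \ge m_1 \cdot k^n$ yields $f(m) \ge m_1 \cdot k^{(m/c)^{1/d} - 1} = \tfrac{m_1}{k} \cdot (k^{c^{-1/d}})^{m^{1/d}}$. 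Setting $e = k^{c^{-1/d}}$, we have $e > 1$ since $k > 1$ and $c^{-1/d} > 0$, which gives $f(m) \in \Omega(e^{\sqrt[d]{m}})$.

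Part (c) is handled separately and more elementarily: since $g(n) \in \OO(1)$ and $g$ maps into $\NN$, there is some bound $M$ with $g(n) \le M$ for all large $n$, so $g$ takes only finitely many values eventually. Hence $f(g(n))$ takes values in the finite set $\{f(0), f(1), \dots, f(M)\}$ for large $n$. If $f(g(n)) \notin \OO(1)$, this finite set must contain the value $\omega$, so $f(i) = \omega$ for some $i \le M$. Weak monotonicity of $f$ then forces $f(n) = \omega$ for all $n \ge i$, which is exactly $f(n) \in \Omega(\omega)$.

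The main obstacle is the careful inversion of $g$ in parts (a) and (b): one only has an upper bound $g(n) \in \OO(n^d)$, not a tight asymptotic, so one cannot directly solve for $n$. The trick described above — sandwiching $m$ between $g(n)$ and $g(n+1)$ and using the upper bound on $g(n+1)$ — is what makes the inversion work, but one must be careful that $g(n)$ actually reaches arbitrarily large values (which follows from strict monotonicity into $\NN$) so that the sandwich exists for all sufficiently large $m \in \NN$. The absorption of the ``$-1$'' term into the constant (in (a)) and into the base of the exponential (in (b)) is a routine estimate once this setup is in place.
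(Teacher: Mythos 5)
Your proposal is correct and takes essentially the same route as the paper's proof: your sandwich $g(n) \le m < g(n+1)$ together with $g(n+1) \le c\cdot(n+1)^d$ is precisely the paper's inversion of $g$ via the operators $\lfloor g^{-1}\rfloor$ and $\lceil g^{-1}\rceil$ and its preliminary claim that these differ by at most one, and your finite-range argument for part (c) matches the paper's as well. (Both your argument and the paper's apply weak monotonicity of $f$ at possibly small arguments in part (c), so this shared imprecision is not a defect of your proposal relative to the paper.)
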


  \begin{proof}
For any $n_0 \in \NN$, let  $\NN_{\geq n_0} = \{n \in
\NN \mid n \geq n_0\}$.
  For any (total) function $h: M \to \NN_{\geq n_0}$ with $M \subseteq \NN$ where $M$ is
  infinite, we define $\lfloor h \rfloor(n): \NN_{\geq \min(M)} \to \NN_{\geq n_0}$  and
  $\lceil h \rceil(n): \NN \to \NN_{\geq n_0}$ by:
  \begin{eqnarray*}
    \lfloor h \rfloor(n) &=& h(\max\{x \in M \mid x \leq n\})\\
    \lceil h \rceil(n) &=& h(\min\{x \in M \mid x \geq n\})
  \end{eqnarray*}
  Note that infinity of $h$'s domain $M$ ensures that there is always an $x \in M$ with $x
  \geq n$.
Since  $\lfloor h \rfloor$ is only defined on $\NN_{\geq \min(M)}$, there is always
an $x \in M$ with $x \leq n$ for any $n \in  \NN_{\geq \min(M)}$.

To prove the lemma, as in the proof of \cite[Lemma 24]{jar17}
we first show that if $h: M \to \NN_{\geq n_0}$ is strictly monotonically increasing and
surjective, then
  \begin{equation}
    \label{claim}
    \lfloor h \rfloor(n) \in \{\lceil h \rceil(n), \lceil h \rceil(n) - 1\} \quad
    \text{for all } n \in \NN_{\geq \min(M)}
  \end{equation}
 Afterwards, we prove (a) -- (c) separately.
  \begin{claims}
    \claim{\eqref{claim} holds, i.e.,
      $\lfloor h \rfloor(n) \in \{\lceil h \rceil(n), \lceil h \rceil(n) - 1\}$}
    To prove \eqref{claim}, let $n \in \NN_{\geq \min(M)}$.
    If $n \in M$, then clearly $\lfloor h\rfloor(n) = \lceil h \rceil(n)$.
    If $n \notin M$, then let $\check{n} = \max\{x \in M \mid x < n\}$ and $\hat{n} = \min\{x \in M \mid x > n\}$.
    Thus, $\check{n} < n < \hat{n}$.
    Strict monotonicity of $h$ implies $h(\check{n}) < h(\hat{n})$.
    Assume that $h(\hat{n}) - h (\check{n}) > 1$.
    Then by surjectivity of $h$, there is an $\overline{n} \in M$ with $h(\overline{n}) = h(\check{n}) + 1$ and thus $h(\check{n}) < h(\overline{n}) < h(\hat{n})$.
    By strict monotonicity of $h$, we obtain $\check{n} < \overline{n} < \hat{n}$.
    Since \(n \notin M\) and \(\overline{n} \in M\) implies \(n \neq \overline{n}\), we either have $\overline{n} < n$ which contradicts $\check{n} = \max\{\check{n} \in M \mid \check{n} < n\}$ or $\overline{n} > n$ which contradicts  $\hat{n} = \min\{\hat{n} \in M \mid \hat{n} > n\}$.
    Hence, $\lfloor h \rfloor(n) = h(\check{n}) = h(\hat{n}) - 1 = \lceil h \rceil(n) -
    1$, which proves \eqref{claim}. \medskip

    \claim{\Cref{lem:its-to-irc}(a) holds, i.e.,
      $f(g(n)) \in \Omega(n^k)$ implies $f(n) \in \Omega(n^{\frac{k}{d}})$}
    For the proof of this claim, we slightly adapt the corresponding proof of
 \cite[Lemma 24]{jar17} to arbitrary functions $f$ and $g$.
    Note that $g(n) \in \OO(n^d)$ and $f(g(n)) \in \Omega(n^k)$ imply
    \[
      \exists n_0, m,m' > 0 .\; \forall n \in \NN_{\geq n_0} .\;\; g(n) \leq m \cdot n^d \land m' \cdot n^k \leq f(g(n)).
    \]
    We can choose $n_0$ large enough such that $f|_{\NN_{\geq n_0}}$ is weakly and
    $g|_{\NN_{\geq n_0}}$ is strictly monotonically increasing, where for any function $h:
    \NN \to \NN$ and any $M \subseteq \NN$, $h|_M$ denotes the restriction of $h$ to $M$.
    Let $M = \{ g(n) \mid n \geq n_0 \}$ and let $g^{-1}: M \to \NN_{\geq n_0}$ be the function such that $g(g^{-1}(n)) = n$.
    Note that $g^{-1}$ exists, since strict monotonicity of $g$ implies injectivity of $g$.
    By instantiating $n$ with $g^{-1}(n)$, we obtain
    \[
      \exists n_0, m,m' > 0 .\; \forall n \in M .\;\;
      g(g^{-1}(n)) \leq m \cdot (g^{-1}(n))^d \land m' \cdot (g^{-1}(n))^k \leq f(g(g^{-1}(n)))
    \]
    which simplifies to
    \[
      \exists n_0, m,m' > 0 .\; \forall n \in M .\;\; n \leq m \cdot (g^{-1}(n))^d \land m' \cdot (g^{-1}(n))^k \leq f(n).
    \]
    When dividing by $m$ and taking the $d$-th
    root on both sides of the first inequation, we get
    \[\begin{array}{l}
      \exists n_0, m, m' > 0 .\; \forall n \in M .\;\; \sqrt[d]{\frac{n}{m}} \leq g^{-1}(n)
      \land m' \cdot (g^{-1}(n))^k \leq f(n).
      \end{array}
    \]
    By monotonicity of $\sqrt[d]{\frac{n}{m}}$ and $f(n)$ in $n$, this implies
    \[\begin{array}{l}
      \exists n_0, m, m' > 0 .\; \forall n \in \NN_{\geq g(n_0)} .\;\; \sqrt[d]{\frac{n}{m}} \leq \lceil g^{-1}\rceil(n) \land m' \cdot (\lfloor g^{-1} \rfloor(n))^k \leq f(n).
   \end{array}   \]
    Note that $g|_{\NN_{\geq n_0}}$ is total and hence, $g^{-1}: M \to \NN_{\geq n_0}$ is surjective.
    Moreover, by strict monotonicity of $g|_{\NN_{\geq n_0}}$, $M$ is infinite and $g^{-1}$ is also strictly monotonically increasing.
    Hence, by \eqref{claim} we get $ \lceil g^{-1}\rceil(n) \leq \lfloor g^{-1} \rfloor(n) + 1$ for all $n \in \NN_{\geq g(n_0)}$.
    Thus,
    \[\begin{array}{l}
      \exists n_0, m, m' > 0 .\; \forall n \in \NN_{\geq g(n_0)} .\;\; \sqrt[d]{\frac{n}{m}} - 1 \leq \lfloor g^{-1}\rfloor(n) \land m' \cdot (\lfloor g^{-1} \rfloor(n))^k \leq f(n)
     \end{array} \]
    which implies
    \[\begin{array}{l}
      \exists n_0, m, m' >0 .\; \forall n \in \NN_{\geq g(n_0)} .\;\; m' \cdot \left( \sqrt[d]{\frac{n}{m}} -1 \right)^k \leq f(n).
      \end{array} \]
    Therefore, \(\exists m>0 .\;\; f(n) \in \Omega\left(\left(\sqrt[d]{\frac{n}{m}} -
    1\right)^k\right)\) and thus, \(f(n) \in \Omega\left(n^{\frac{k}{d}}\right)\). \medskip

    \claim{\Cref{lem:its-to-irc}(b) holds, i.e.,
      $f(g(n)) \in \Omega(k^n)$ implies $f(n) \in \Omega(e^{\sqrt[d]{n}})$
for some $e > 1$}
    The proof is analogous to the proof of the case $f(g(n)) \in \Omega(n^k)$, but it was
    not given in \cite{jar17}.
    Here, $g(n) \in \OO(n^d)$ and $f(g(n)) \in \Omega(k^n)$ imply
    \[
    \exists n_0, m,m' > 0 .\; \forall n \in \NN_{\geq n_0} .\;\;
    g(n) \leq m \cdot n^d \land m' \cdot k^n \leq f(g(n)).
    \]
    Again, we can choose $n_0$ large enough such that $f|_{\NN_{\geq n_0}}$ is weakly and $g|_{\NN_{\geq n_0}}$ is strictly monotonically increasing.
    As in the proof of the previous claim, let $M = \{ g(n) \mid n \geq n_0 \}$ and let $g^{-1}: M \to \NN_{\geq n_0}$ be the function such that $g(g^{-1}(n)) = n$.
    By instantiating $n$ with $g^{-1}(n)$, we obtain
    \[
      \exists n_0, m,m' > 0 .\; \forall n \in M .\;\;
      g(g^{-1}(n)) \leq m \cdot (g^{-1}(n))^d \land m' \cdot k^{g^{-1}(n)} \leq f(g(g^{-1}(n)))
    \]
    which simplifies to
    \[
      \exists n_0, m,m' > 0 .\; \forall n \in M .\;\; n \leq m \cdot (g^{-1}(n))^d \land
      m' \cdot k^{g^{-1}(n)} \leq f(n).
    \]
    When dividing by $m$ and taking the $d$-th root on both sides of the first inequation, we get
    \[ \begin{array}{l}
      \exists n_0, m, m' > 0 .\; \forall n \in M .\;\; \sqrt[d]{\frac{n}{m}} \leq
      g^{-1}(n) \land m' \cdot k^{g^{-1}(n)} \leq f(n).
      \end{array}
    \]
    By monotonicity of $\sqrt[d]{\frac{n}{m}}$ and $f(n)$ in $n$, this implies
    \[ \begin{array}{l}
    \exists n_0, m, m' > 0 .\; \forall n \in \NN_{\geq g(n_0)} .\;\;
    \sqrt[d]{\frac{n}{m}} \leq \lceil g^{-1}\rceil(n) \land m' \cdot k^{\lfloor g^{-1}
      \rfloor(n)} \leq f(n).    \end{array}
    \]
    As in the proof of Claim 2,
  $g|_{\NN_{\geq n_0}}$ is total and hence, $g^{-1}: M \to \NN_{\geq n_0}$ is surjective.
    Moreover, by strict monotonicity of $g|_{\NN_{\geq n_0}}$, $M$ is infinite and $g^{-1}$ is also strictly monotonically increasing.
    Hence, by \eqref{claim} we get $ \lceil g^{-1}\rceil(n) \leq \lfloor g^{-1} \rfloor(n) + 1$ for all $n \in \NN_{\geq g(n_0)}$.
    Thus,
    \[\begin{array}{l}
    \exists n_0, m, m' > 0 .\; \forall n \in \NN_{\geq g(n_0)} .\;\;
    \sqrt[d]{\frac{n}{m}} - 1 \leq \lfloor g^{-1}\rfloor(n) \land m' \cdot k^{\lfloor g^{-1} \rfloor(n)} \leq f(n).  \end{array}
    \]
   Since $k > 1$, this implies
  \[      \exists n_0, m, m' >0 .\; \forall n \in \NN_{\geq g(n_0)} .\;\; m' \cdot
  k^{\sqrt[d]{\frac{n}{m}} -1 } \leq f(n)
  \]
  which is equivalent to
  \[\begin{array}{l} \exists n_0, m, m' >0 .\; \forall n \in \NN_{\geq g(n_0)} .\;\; \frac{m'}{k} \cdot
  k^{\sqrt[d]{\frac{n}{m}}} \leq f(n)  \end{array}\]
      Since $m' > 0$ and $k > 1$, we have $m'' = \frac{m'}{k} > 0$ and thus we get
    \[
       \exists n_0, m, m'' >0 .\; \forall n \in \NN_{\geq g(n_0)} .\;\; m'' \cdot
    k^{\sqrt[d]{\frac{n}{m}}} \leq f(n)\]
    which is equivalent to
    \[ \exists n_0, m, m'' >0 .\; \forall n \in \NN_{\geq g(n_0)} .\;\; m'' \cdot
    k^{\frac{\sqrt[d]{n}}{\sqrt[d]{m}}} \leq f(n)\]
      Since $m > 0$, we have $r = \sqrt[d]{m} > 0$ and hence:
    \[
       \exists n_0, r, m'' >0 .\; \forall n \in \NN_{\geq g(n_0)} .\;\; m'' \cdot
       k^{\frac{\sqrt[d]{n}}{r}} \leq f(n)
       \]
       which is equivalent to
      \[ \exists n_0, r, m'' >0 .\; \forall n \in \NN_{\geq g(n_0)} .\;\; m'' \cdot
      \sqrt[r]{k}^{\sqrt[d]{n}} \leq f(n) \]
          Finally, $k > 1$ implies $e = \sqrt[r]{k} > 1$ and we obtain
 \[ \exists n_0, m'' >0, e > 1 .\; \forall n \in \NN_{\geq g(n_0)} .\;\; m'' \cdot
 e^{\sqrt[d]{n}} \leq f(n)\]
   which implies
   \[ \exists e > 1 .\;\; f(n) \in \Omega(e^{\sqrt[d]{n}})\] \medskip

   \claim{\Cref{lem:its-to-irc}(c) holds, i.e., $g(n) \in \OO(1)$ and $f(g(n)) \notin
     \OO(1)$ implies
    $f(n) \in \Omega(\omega)$}
   Note that  $f(g(n)) \notin \OO(1)$ means that
   \[ \forall m \in \NN. \; \exists n \in \NN. \;\; f(g(n)) > m \]
   By $g(n) \in \OO(1)$ we have
   \[ \exists m' \in \NN. \; \forall n \in \NN. \; \; g(n) \leq m'\]
   As $f$ is weakly monotonic, we have $f(g(n)) \leq f(m')$ for all $n$. Hence, we get
   \[ \exists m' \in \NN. \;  \forall m \in \NN. \;\; f(m') > m \]
   This implies that $f(m') = \omega$. By weak monotonicity of $f$, we obtain
   \[  \exists m' \in \NN. \; \forall n \geq m'. \; \; f(n) = \omega\]
   which means $f(n) \in \Omega(\omega)$.
       \qedhere
  \end{claims}
\end{proof}

\bibliographystyle{ACM-Reference-Format}
\bibliography{refs,strings,crossrefs}

\end{document}